\documentclass[journal,a4paper, onecolumn]{IEEEtran}
\usepackage[T1]{fontenc}
\usepackage{amsmath,amsthm,amsfonts,amssymb,amscd}
\usepackage[colorlinks,
            linkcolor=blue,       
            anchorcolor=blue,  
            citecolor=blue,        
            ]{hyperref}
\usepackage{graphicx}
\usepackage{zref-clever}
\zcsetup{cap}
\usepackage[caption=false,font=footnotesize]{subfig}
\usepackage{booktabs}                  
\usepackage{lineno} %  For line numbers
\usepackage{bbm}
\usepackage{enumitem}
\usepackage{braket}
\usepackage{tikz}
\usepackage{graphicx}
\usepackage{float}
\usepackage{lscape} % Rotates content (table), but KEEPS PDF page portrait
\usepackage{xltabular} % Combines longtable and tabularx features

\usepackage{array}
\usepackage{nicematrix}
\usepackage{tabularx}
\usepackage{makecell}
\usepackage{adjustbox}

\usepackage{arydshln}
\definecolor{myblue}{RGB}{46,86,182}
\definecolor{mygreen}{RGB}{93,141,55}
\definecolor{myorange}{RGB}{230,165,110}
\newtheorem{thm}{Theorem}[section]  
  
\newtheorem{cor}[thm]{Corollary}    
\newtheorem{lemma}[thm]{Lemma}        
\newtheorem{ex}[thm]{Example}       
\newtheorem{rk}[thm]{Remark} 
\newtheorem{Def}[thm]{Definition}

\AddToHook{env/thm/begin}{\zcsetup{countertype={thm=theorem}}}
\AddToHook{env/prop/begin}{\zcsetup{countertype={thm=proposition}}}
\AddToHook{env/cor/begin}{\zcsetup{countertype={thm=corollary}}}
\AddToHook{env/lemma/begin}{\zcsetup{countertype={thm=lemma}}}
\AddToHook{env/ex/begin}{\zcsetup{countertype={thm=example}}}
\AddToHook{env/rk/begin}{\zcsetup{countertype={thm=remark}}}
\AddToHook{env/Def/begin}{\zcsetup{countertype={thm=definition}}}
\newcommand{\Fq}{\mathbb{F}_{q}}
\newcommand{\FQQ}{\mathbb{F}_{q^{2}}}
\newcommand{\wt}{\mathbf{wt}}

\newcommand{\C}{\mathcal{C}}

\newcommand{\GRS}{\mathrm{GRS}}

\newcommand{\GRL}{\mathrm{GRL}}
\newcommand{\RL}{\mathrm{RL}}
\newcommand{\ba}{\boldsymbol{\alpha}}
\newcommand{\bv}{\boldsymbol{v}}

\newcommand{\perph}{\perp_{h}}
\newcommand{\T}{\!\top}

\begin{document}

\title{Generalized Roth--Lempel Codes: NMDS Characterization, Hermitian Self-Orthogonality, and Quantum Constructions}
\author{Qi~Liu$^{1}$, Xuefei~Wu$^{1}$, Yingchun Cheng$^{1}$, and Haiyan~Zhou$^{1*}$%
  \thanks{$^{1}$School of Mathematical Sciences, Nanjing Normal University, Nanjing 210023, China. (E-mails: QiLiu67@aliyun.com; 240901020@njnu.edu.cn; 240901002@njnu.edu.cn; zhouhy@njnu.edu.cn.)}%
  \thanks{$^{*}$Corresponding author: Haiyan Zhou.}%
  \thanks{This paper is supported by National Natural Science Foundation of China under Grant Nos. 12471493 and 12441105.}%
}
\maketitle
\begin{abstract}
  In their seminal 1989 work (IEEE Trans. Inf. Theory 35(3):655-657), Roth and Lempel constructed a well-known family of non-Reed-Solomon maximum distance separable (MDS) codes. For decades, this family of codes has  attracted extensive research attention due to its algebraic structure,  low-complexity decoding, and broad applications in cryptography and data storage. Most recently, in 2025, the generalized Roth-Lempel (GRL) framework unifies Roth-Lempel codes and its extensions under a flexible algebraic structure. However, explicit criteria for the near-MDS (NMDS) property of GRL codes have not been established, and no systematic construction of Hermitian self-orthogonal GRL codes has been reported, limiting their deployment in classical and quantum error correction.

  In this work, we make three contributions to address these gaps. First,  we give explicit necessary and sufficient  conditions for the NMDS property of the two most widely used subclasses  of GRL codes. Second, we construct four new families of Hermitian self-orthogonal codes from GRL codes. Two of these families are NMDS, with parameters not covered by  existing Hermitian self-orthogonal NMDS codes. Third, based on the  proposed Hermitian self-orthogonal GRL codes, we construct four families of quantum GRL codes, including two infinite families of quantum  NMDS codes that attain the quantum Singleton bound minus one. Compared  to the known quantum error-correcting codes, we obtain many new or improved quantum error-correcting codes. This work bridges the gap  between classical GRL code families and quantum error-correction  applications.
\end{abstract}

%% Index Terms: Roth-Lempel codes; Generalized Roth-Lempel (GRL) codes; Near maximum distance separable (NMDS) codes; Hermitian self-orthogonal codes; Quantum error-correcting codes    
\begin{IEEEkeywords}
  Roth-Lempel codes; Generalized Roth-Lempel codes; Near maximum distance separable codes; Hermitian self-orthogonal codes; Quantum error-correcting codes
\end{IEEEkeywords}

\section{Introduction}
Let $\Fq$ be a finite field with $q$ elements, where $q$ is a prime power. A linear code \(\C\) with parameters \([n,k,d] _{q}\) is a \(k\)-dimensional subspace of \(\Fq ^{n}\), which has minimum Hamming distance \(d\).
The Singleton defect \cite{defect_SGB} of \(\C\) is a measure of how far \(\C\) is from being a maximum distance separable (MDS) code, defined as \(s(\C) = n-k+1-d\). If \(s(\C)=0\), then \(\C\) is an MDS code. If \(s(\C)=1\), then \(\C\) is called an almost MDS (AMDS) code. Let
\[
  \C ^{\perp} = \{\boldsymbol{u} \in \Fq ^{n} \mid \boldsymbol{u} ^{\T} \boldsymbol{c} = 0, \forall \boldsymbol{c} \in \C\}.
\]
A linear code \(\C\) is called a near MDS (NMDS) code if \(s(\C) = s(\C ^{\perp}) = 1\).

These three classes of codes are of great importance in coding theory because of their powerful error-correction capabilities and wide applications in cryptography and distributed storage.
Let \( \alpha_1,\cdots ,\alpha _{n}\) be distinct elements in \( \Fq \cup \{\infty\}\). The Reed-Solomon (RS) code \cite{RS} is defined by
\[
  \{(f(\alpha_1), \cdots ,f(\alpha_n))\mid f(x) \in \Fq[x], \deg f< k\},
\]
where \( f(\infty)\) is defined as the coefficient of \( x ^{k-1}\) in \(f\). To date, the most prominent MDS codes are RS codes and their generalizations, which have been extensively studied in coding theory (readers can refer to \cite{grassl1999quantum,Li_2008,Luo_2019,Jin_2017,Sui_2022,ZHANG_2013}). Against this background, it is natural to ask whether one can systematically construct new code families that retain strong distance properties while offering greater structural flexibility. The construction of MDS codes that are not equivalent to RS codes (shortly, non-RS-type MDS codes) is a long-standing and important problem in both coding theory and finite geometry \cite{Chen_2024,11311495, Roth_1989}. In 1989, Roth and Lempel \cite{Roth_1989} introduced a family of codes \(\RL _{k}(\ba, \delta)\) by extending the generator matrix of RS codes with two columns as follows:
\[
  \begin{pNiceArray}{cc|c}[columns-width=auto,margin]
    \Block{2-2}{\displaystyle G_{\text{RS}}(\ba)} & & O \\
    & & A_\delta
  \end{pNiceArray} _{k \times (n+2)},
\]
where \(G _{\text{RS}}(\ba)\) is the generator matrix of the RS code with the evaluation-point sequence \(\ba = (\alpha_1,\alpha_2,\cdots ,\alpha _{n}) \in \Fq ^{n}\), and \(A _{\delta} = \begin{bmatrix}
  0 & 1      \\
  1 & \delta
\end{bmatrix}\) with \(\delta \in \Fq\).
Roth and Lempel provided necessary and sufficient conditions for Roth-Lempel codes to be MDS and showed that MDS Roth-Lempel codes are non-RS-type MDS codes. However, only in 2023 did Han and Fan \cite{Han_2023} provide a necessary and sufficient condition for \(\RL _{k}(\ba,0)\) to be NMDS. In 2026, Xu and Zhou \cite{11222820} completely determined the NMDS condition for Roth-Lempel codes.

In 2024, Wu et al. \cite{wu2024more} studied an extended class of Roth-Lempel codes. They gave equivalent conditions for this class of codes or its dual to be MDS or AMDS, respectively. Following these results, in 2025, Liang and Liao \cite{liang2025equivalent}, and Li, Jiang, and Liu \cite{Li_2025}, independently introduced generalized Roth-Lempel (GRL) codes, which offer more flexible constructions than Roth-Lempel codes.
Li, Jiang, and Liu \cite{Li_2025} gave necessary and sufficient conditions for two classes of GRL codes to be MDS. Liang and Liao \cite{Liang_2026} constructed two families of NMDS GRL codes. In \cite{LIANG2026115084}, Liang and Liao also considered the extended code of a class of GRL codes and established equivalent conditions for these codes, or their dual codes, to be MDS or AMDS.

A notable advantage of GRL codes is that their duality properties are easier to study than those of Roth-Lempel codes, which is important for quantum error-correction applications. Li, Jiang, and Liu \cite{Li_2025} gave an equivalent condition for Euclidean self-duality for two classes of GRL codes. More recently, Liang, Huang, Liao, Fan, and Zhou \cite{liang2026non} constructed several Euclidean and Hermitian linear complementary dual (LCD) codes using GRL codes. However, the Hermitian self-orthogonal GRL codes is still unknown. This missing piece is critical because Hermitian self-orthogonality is a standard gateway to quantum code constructions. More broadly, there has been little work on the Hermitian self-orthogonality of NMDS codes. In 2014, Suprijanto, Renata, and Putri \cite{2014NMDS} constructed many Hermitian self-dual NMDS codes over \( \mathbb{F} _{9}, \mathbb{F} _{25}, \mathbb{F} _{121}\) of length up to \(14\). Guo, Li, Liu, and Song \cite{Guo_2022} constructed a family of Hermitian self-dual NMDS codes over \( \mathbb{F}_{q ^{2}}\) by using generalized twisted Reed-Solomon codes with length less than \(q+1\). Zhu and Wan \cite{Zhu_2025} continued this line of work on Galois self-dual GRS and twisted GRS (TGRS) codes and presented a family of Hermitian self-dual TGRS codes that are NMDS.

To the best of the authors' knowledge, no systematic constructions of Hermitian self-orthogonal GRL codes and quantum GRL codes have been reported. Accordingly, the classical structural theory and quantum applications of GRL codes remain largely unexplored.

\textbf{Contributions of this work.} To fill these gaps, we first completely determine the NMDS properties for two classes: GRL codes with \(s=2\) and \(s=3\) (cf. \zcref{def:GRL}). We then establish equivalent conditions for Hermitian self-orthogonality for these two classes of GRL codes. Finally, based on these conditions, we construct four families of Hermitian self-orthogonal GRL codes with flexible parameters. Specifically, we provide Hermitian self-orthogonal GRL codes with the following parameters:
\begin{enumerate}
  \item \( [m(q+1)+2, k, m(q+1)-k] _{q ^{2}}\) for \( q\geq 4\), \( 2\leq m\leq q-2\), and \( 3\leq k\leq  m+1\);
  \item \( [m(q-1)+2, k, m(q-1)+2-k] _{q ^{2}}\) for \( q\geq 5\), \( 2\leq  m\leq q+1\), and \( 3\leq  k\leq  \frac{q+1}{2}\);
  \item \( [m(q+1)+3, k, \geq m(q+1)-k+2] _{q ^{2}}\) for \( q\geq 5\), \( 2\leq  m\leq  q-3\), and \( 4\leq  k\leq m+2\);
  \item \( [m(q-1)+3,k, \geq  m(q-1)-k+2] _{q ^{2}}\) for \( q\geq 7\), \( 2\leq m\leq  q+1\), and \( 4\leq k\leq \frac{q+1}{2}+1\).
\end{enumerate}
The first two families are NMDS codes, and their parameters are not covered by existing Hermitian self-orthogonal NMDS codes in \cite{Guo_2022, 2014NMDS, Zhu_2025}.
Based on these classical constructions, we then construct quantum GRL codes, two families of which are quantum NMDS codes. Furthermore, Table II presents numerous new or improved quantum codes with a smaller Singleton defect compared to the known quantum codes in \cite{BE:tables, table2, table3, table4}.

\textbf{The organization of this paper is as follows.} In Section 2, we introduce basic notions, definitions, and tools related to GRL codes. On this basis, in Section 3, we establish NMDS criteria for two classes of GRL codes. Next, in Section 4, we construct four families of Hermitian self-orthogonal GRL codes. Finally, in Section 5, we construct quantum GRL codes by applying these Hermitian self-orthogonal constructions.

\section{Preliminaries}
For a prime power \(q\), let \(n\) be an integer with \(n \leq q\). Let \(\boldsymbol{\alpha}=\{\alpha_1,\ldots,\alpha_n\}\) be a set of \(n\) pairwise distinct elements of \(\Fq\), and let \(\boldsymbol{v}=(v_1,\ldots,v_n) \in (\Fq^*)^n\). For any integer \(k\) such that \(1 \leq k \leq n\), we define the \(k \times n\) matrix \(G_k\) as
\[
  G_{k} = \begin{pmatrix}
    v_1               & v_2               & \cdots & v_n               \\
    v_1\alpha_1       & v_2\alpha_2       & \cdots & v_n\alpha_n       \\
    \vdots            & \vdots            & \ddots & \vdots            \\
    v_1\alpha_1^{k-1} & v_2\alpha_2^{k-1} & \cdots & v_n\alpha_n^{k-1}
  \end{pmatrix}. \tag{MGRS}\label{eq:MGRS}
\]

\begin{Def}\label{def:GRL}
  Let \(A_s\) be an \(s \times s\) nonsingular matrix over \(\Fq\), and let \(\boldsymbol{\alpha}\) and \(\boldsymbol{v}\) be as previously defined. For integers \(k\) and \(s\) satisfying \(s \leq k \leq n \leq q\), we define a code \(\GRL_{k}(\boldsymbol{\alpha},\boldsymbol{v},A_s)\) as a generalized Roth-Lempel type code if its generator matrix is
  \[
    G_{\GRL}=
    \begin{pNiceArray}{cc|c}[columns-width=auto,margin]
      \Block{2-2}{\displaystyle G_{k}} & & O \\
      & & A_s
    \end{pNiceArray}
  \]
  where \(G_{k}\) is the \(k \times n\) matrix defined in \eqref{eq:MGRS}, and \(O\) is the \((k-s) \times s\) zero matrix.
\end{Def}

As a well-known result, the linear code generated by \(G_k\) is an \([n,k,n-k+1]\) MDS code, which is called a generalized Reed-Solomon (GRS) code and denoted by \(\GRS_{k}(\boldsymbol{\alpha},\boldsymbol{v})\). More details about GRS codes can be found in \cite{AFFC}.

By the definition of \(\GRL _{k}(\ba,\bv,A_s)\), we have \(\dim(\GRL _{k}(\ba,\bv,A_s))=k\) and length \(n+s\). It is easy to see that if we take \( s=1\), \( A_1 = (u) \) with \(u \in \Fq^*\), then the \( \GRL _{k}(\ba,\bv,A_1)\) is exactly the Extended GRS code defined in \cite{FDEC}. And, when we take \( s =2\), and \(A_2 = \begin{bmatrix}
  0 & 1      \\
  1 & \delta
\end{bmatrix}\), \(\GRL _{k}(\ba,\bv,A_2)\) is exactly the Roth-Lempel code \(\RL _{k}(\ba,\delta)\).

The following lemma is a key result about the NMDS property of linear codes, which is useful to determine the NMDS property for GRL codes.

\begin{lemma}\label{lem:NMDS}\cite{Dodunekov_1995}
  Let \(\C\) be a linear code over \( \mathbb{F}_{q}\) with parameters \([n,k]_{q}\). Then \(\C\) is an NMDS code if and only if the following conditions hold:
  \begin{enumerate}[label=(\roman*)]
    \item Any \(k-1\) columns of a generator matrix of \(\C\) are linearly independent;
    \item There exist \(k\) columns of a generator matrix of \(\C\) that are linearly dependent;
    \item Any \(k+1\) columns of generator matrix of \(\C\) have rank \(k\).
  \end{enumerate}
\end{lemma}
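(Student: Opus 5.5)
The plan is to prove the characterization through the standard link between columns of a generator matrix and weights of dual codewords, combined with the Singleton bound. Let $G$ be a generator matrix of $\C$ with columns $g_1,\dots,g_n$. A linear relation $\sum_{i\in S}x_ig_i=0$ with all $x_i\neq 0$ is the same as a codeword of $\C^{\perp}$ whose support is exactly $S$. Hence
\[
d(\C^{\perp})=\min\{\,t : \text{some } t \text{ columns of } G \text{ are linearly dependent}\,\}.
\]
Dually, a set $T$ of columns has rank less than $k$ iff some nonzero $\boldsymbol{m}\in\Fq^k$ satisfies $\boldsymbol{m}G_T=0$. That happens iff some nonzero codeword $\boldsymbol{m}G$ vanishes on $T$, that is, has weight at most $n-|T|$. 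Therefore "every $t$ columns have rank $k$" is equivalent to $d(\C)\geq n-t+1$.

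With these two facts, I translate each condition.

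\begin{itemize}
\item Condition (i) says that any $k-1$ columns are independent, i.e. $d(\C^{\perp})\geq k$.
\item Condition (ii) says that some $k$ columns are dependent, i.e. $d(\C^{\perp})\leq k$.
\item Condition (iii), with $t=k+1$, says $d(\C)\geq n-k$.
\end{itemize}

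So (i)--(iii) together say exactly $d(\C^{\perp})=k$ and $d(\C)\geq n-k$.

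For the forward direction, assume $\C$ is NMDS. Then $d(\C)=n-k$ and $\C^{\perp}$ is an $[n,n-k]$ code with $d(\C^{\perp})=n-(n-k)+1-1=k$, so (i)--(iii) hold directly.

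For the converse, assume (i)--(iii). Then $d(\C^{\perp})=k$, so $s(\C^{\perp})=(n-(n-k)+1)-k=1$. The remaining task is to show $d(\C)=n-k$ exactly, i.e. to rule out $d(\C)=n-k+1$. I do this by contradiction using the standard fact that the dual of an MDS code is MDS. If $d(\C)=n-k+1$, then $\C$ would be MDS, so $\C^{\perp}$ would be MDS with $d(\C^{\perp})=k+1$. This contradicts $d(\C^{\perp})=k$. Hence $s(\C)=1$, and $\C$ is NMDS.

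The argument is short. The only delicate step is the converse: condition (iii) by itself gives only $d(\C)\geq n-k$, and the duality of MDS codes is needed to upgrade this to equality. Degenerate ranges ($k=1$ or $k=n$) are handled trivially, with vacuous or impossible conditions on both sides.
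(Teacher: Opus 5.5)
Your proof is correct. The paper gives no proof of this lemma; it is quoted from Dodunekov--Landjev. Your argument is the standard proof of it: minimal linearly dependent column sets correspond to supports of codewords of $\C^{\perp}$, and rank-deficient column sets correspond to zero patterns of codewords of $\C$. Under this correspondence, conditions (i)--(iii) become exactly $d(\C^{\perp})=k$ and $d(\C)\ge n-k$.

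Two small remarks.

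First, you do not need the fact that the dual of an MDS code is MDS. Apply your own rank criterion with $t=k$ to condition (ii): some $k$ columns have rank less than $k$, so some nonzero codeword vanishes on those $k$ positions. This gives $d(\C)\le n-k$ directly. So condition (ii) alone yields both $d(\C^{\perp})\le k$ and $d(\C)\le n-k$.

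Second, for $k=1$ the conditions are neither vacuous nor impossible. On both sides they say that exactly one column of the generator matrix is zero. Your general argument already handles this case correctly, so only the closing remark about degenerate cases is inaccurate.
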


\zcref{lem:NMDS} also has a projective hyperplane version. Readers can refer to \cite{Han_2023} for more details.

Another helpful result comes from the study of the subset sum problem (SSP). Let us recall some necessary notation for SSP. For a subset \(S \subset \Fq\) and an element \(\delta \in \Fq\), we define
\[
  N(t,\delta,S) = \#\{I \subset S \mid |I| = t, \sum_{i \in I} i = \delta\}, \quad 1\leq t \leq |S|.
\]
For a subset \( I\) of \( S\) with \(|I| = t\), we call \(I\) a \(t\)-subset of \(S\). The fundamental problem of SSP is to determine whether \(N(t,\delta,S) > 0\) for given \(t\), \(\delta\), and \(S\). The following two lemmas provide necessary and sufficient conditions for \(N(t,\delta,S) = 0\) when \(S = \Fq\) or \(S = \Fq^{*}\), where \(q\) is a power of a prime \(p\).

\begin{lemma}\label{lem:sum1}\cite[Theorem 2]{ 11222820}
  Let \(\delta \in \Fq\), and let \(t\) be a positive integer with \(t\leq q\). Then \(N(t,\delta, \Fq) = 0\) if and only if \(p|t\) and one of the following holds:
  \begin{itemize}
    \item \(p\) is odd, and \(t=p=q\);
    \item \(p\) is odd, and \(p<t =q\);
    \item \(p=2\), and \(p<t=q\).
  \end{itemize}
\end{lemma}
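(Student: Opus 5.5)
We need to prove Lemma sum1: $N(t,\delta,\Fq)=0$ if and only if $p\mid t$ and one of the listed cases holds.

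\textbf{Sufficiency.} Suppose $t=q$. Then the only $t$-subset is $\Fq$ itself, whose element sum is $\sum_{x\in\Fq}x$. This sum is $0$ unless $q=2$, since for $q>2$ we can pick $c\ne0,1$ and use $\sum x=\sum cx=c\sum x$. The statement, however, asserts $N=0$ for every $\delta$, including $\delta=0$, which reads as a contradiction. So I will take the intended reading to be ``$N(t,\delta,\Fq)=0$ for some $\delta$,'' or equivalently ``$N(t,\cdot,\Fq)$ is not everywhere positive.'' Under that reading, $t=q>2$ gives $N(t,\delta,\Fq)=0$ for every $\delta\ne0$. I will keep the case list as stated and match it against this reading.

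\textbf{Necessity, via the translation trick.} For a $t$-subset $I$ and $a\in\Fq$, translating gives $\sum_{x\in I+a}x=\sum_{x\in I}x+ta$.
\begin{itemize}
\item If $p\nmid t$, then $ta$ ranges over all of $\Fq$ as $a$ does. Since some $t$-subset exists, every $\delta$ is attained, so $N>0$ for all $\delta$.
\item If $p\mid t$, translation preserves the sum. So the argument must show that for $t<q$ every $\delta$ is still attained, while for $t=q$ only $\delta=0$ is (as computed above).
\end{itemize}

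\textbf{Key step: $p\mid t$ and $2\le t<q$.} The plan is to perturb a single element. Start from any $t$-subset $I$ with sum $\sigma$. Since $t<q$, there exist $x\in I$ and $y\notin I$. Replacing $x$ by $y$ changes the sum by $y-x$. As $x$ ranges over $I$ and $y$ over the complement, the differences $y-x$ cover many elements, and repeated exchanges should reach every target.

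A cleaner route uses the complement symmetry $N(t,\delta)=N(q-t,\,s_0-\delta)$, where $s_0=\sum_{x\in\Fq}x$. Combined with multiplicative scaling, which gives $N(t,\delta)=N(t,c\delta)$ for $c\ne0$, the task reduces to showing two things: some $t$-subset has sum $0$, and some $t$-subset has nonzero sum.
\begin{itemize}
\item \textbf{Sum $0$.} Since $p\mid t$, take $t/p$ disjoint additive cosets of $\mathbb{F}_p$. Each coset $\{a,a+1,\dots,a+p-1\}$ has sum $pa+\binom{p}{2}$. This is $0$ for odd $p$. For $p=2$, pair elements as $\{x,y\}$ with $x+y$ chosen to make the total vanish.
\item \textbf{Nonzero sum.} Start from a sum-$0$ set and exchange one element $x\in I$ for some $y\notin I$ with $y\ne x$. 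The new sum is $y-x\neq 0$, which is possible because $t<q$.
\end{itemize}
Scaling then covers all nonzero $\delta$, and with the sum-$0$ set this covers all of $\Fq$.

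\textbf{Expected obstacle.} The small cases need care: $t=p=q$, the characteristic-$2$ pairing, and subsets of size close to $q$ where cosets must be chosen disjointly. I also expect to reconcile the exact wording of the listed cases with the reading adopted above, in particular why $t=p=q$ is listed separately for odd $p$ and why $q=2$ is excluded.
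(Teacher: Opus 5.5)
Your ingredients are sound: translation for $p\nmid t$, scaling $I\mapsto cI$, complement symmetry, and single-element exchange. The gap is in the reading you adopt. You are right that the statement cannot hold literally at $\delta=0$ when $t=q>2$. But the repair you adopt, ``$N(t,\delta,\Fq)=0$ for some $\delta$'', is also false, and it breaks exactly at your ``Sum $0$'' step.

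In characteristic $2$, $x+y=0$ forces $x=y$, so no $2$-subset of $\Fq$ sums to $0$. Hence $N(2,0,\Fq)=0$ for every even $q$. By the complement symmetry you wrote down (with $s_0=0$ for $q>2$), also $N(q-2,0,\Fq)=0$. Two concrete counterexamples to your reading:
\begin{itemize}
\item $q=4$, $t=2$, $\delta=0$ gives $N=0$ although $t<q$, so none of the listed cases holds.
\item $q=t=2$, $\delta=0$ gives $N=0$ although $p<t$ fails.
\end{itemize}
Your $\mathbb{F}_2$-cosets $\{a,a+1\}$ each sum to $1$, not $0$. So they produce a zero-sum $t$-subset only when $4\mid t$. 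For $t\equiv 2\pmod 4$ the ``pairing'' must be replaced by something else, which is possible for $2<t<q-2$ but impossible for $t\in\{2,q-2\}$. Thus the claim that every $\delta$ is attained whenever $p\mid t$ and $t<q$ is false, and extra care with small cases will not rescue it.

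The reading under which the three listed cases are exactly right is $\delta\in\Fq^{*}$. Together the cases say precisely $t=q>2$, and $p\mid t$ is then automatic. (The paper gives no proof of its own here, only the citation.) Under this reading your tools already finish the job, and the zero-sum construction is unnecessary:
\begin{itemize}
\item Scaling $I\mapsto cI$ makes $N(t,\delta,\Fq)$ the same for all $\delta\neq 0$.
\item For $1\le t<q$, some $t$-subset has nonzero sum. Take any $I$; if its sum is $0$, exchange some $x\in I$ for some $y\notin I$ to get sum $y-x\neq 0$. Hence $N(t,\delta,\Fq)>0$.
\item For $t=q$ the only subset is $\Fq$, with sum $0$ if $q>2$ and $1$ if $q=2$.
\end{itemize}

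The case $\delta=0$ has a genuinely different answer: $N(t,0,\Fq)=0$ if and only if $p=2$ and $t\in\{2,q-2\}$.
\begin{itemize}
\item Translation handles $p\nmid t$.
\item Your $\mathbb{F}_p$-cosets handle odd $p$, and also $p=2$ with $4\mid t$.
\item For $p=2$, $t\equiv 2\pmod 4$ and $2<t<q-2$, take $\{0,1,e_2,e_3,e_4,1+e_2+e_3+e_4\}$ with $1,e_2,e_3,e_4$ independent over $\mathbb{F}_2$. Add an even number of further $\mathbb{F}_2$-cosets disjoint from it.
\end{itemize}
This is not cosmetic: the paper later applies the lemma with $\delta=c/a$, which may be $0$.
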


\begin{lemma}\label{lem:sum2}\cite[Theorem 3]{ 11222820}
  Let \(\delta \in \Fq \), and let \(t\) be a positive integer with \(t\leq q\). Then \(N(t,\delta,\Fq ^{*}) = 0\) if and only if one of the following holds:
  \begin{itemize}
    \item \(p=2\), \(t=q-1\), and \(q>p\);
    \item \(p\) is odd, and \(t = q-1\).
  \end{itemize}
\end{lemma}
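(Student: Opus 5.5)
The plan is to reduce everything to two numbers and then show both are positive away from $t=q-1$. The key symmetry is multiplicative scaling. For $\lambda\in\Fq^{*}$, the map $I\mapsto \lambda I$ is a bijection from the $t$-subsets of $\Fq^{*}$ onto themselves, and it multiplies the sum by $\lambda$. Hence $N(t,\lambda\delta,\Fq^{*})=N(t,\delta,\Fq^{*})$ for every $\lambda\in\Fq^{*}$. So for fixed $t$ only two values occur:
\[
a_t=N(t,0,\Fq^{*}),\qquad b_t=N(t,1,\Fq^{*}),\qquad a_t+(q-1)b_t=\binom{q-1}{t}.
\]
The statement amounts to deciding when $a_t$ or $b_t$ vanishes.

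\textbf{The case $t=q-1$.} The only $(q-1)$-subset is $\Fq^{*}$ itself. Its sum is $\sum_{x\in\Fq^{*}}x=0$ for $q>2$, and equals $1$ for $q=2$. So for $q>2$ the sum can never equal a nonzero $\delta$, giving $N(q-1,\delta,\Fq^{*})=0$ for all $\delta\neq 0$. This is the "if" direction; note that the statement must be read with $\delta\neq 0$, since the single subset does attain $\delta=0$. For $q=p=2$ and $t=q-1=1$ the only subset $\{1\}$ has sum $1\neq 0$, so there the vanishing case is $\delta=0$ rather than $\delta\neq0$. This is exactly why the case $q=p=2$ appears in the list differently, and I would make these degenerate conventions explicit, matching \cite{11222820}.

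\textbf{The case $1\le t\le q-2$ (the "only if" direction).} Here I must show $a_t>0$ and $b_t>0$.

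First I use complementation. Since $\sum_{x\in\Fq^{*}}x=0$ for $q>2$, taking complements sends $t$-subsets with sum $\delta$ to $(q-1-t)$-subsets with sum $-\delta$. Hence $a_t=a_{q-1-t}$ and $b_t=b_{q-1-t}$, so it suffices to treat $t\le (q-1)/2$.

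Next, to get $b_t>0$ it suffices, by scaling, to exhibit a single $t$-subset with nonzero sum. Among all $t$-subsets, if every one had sum $0$, then exchanging one element $x\in I$ for some $y\notin I$ (possible because $t\le q-2$) would force $x=y$, a contradiction.

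Then I show $a_t>0$ by an explicit zero-sum construction.
\begin{itemize}
  \item For $p$ odd and $t$ even, take pairs $\{a,-a\}$.
  \item For $p$ odd and $t$ odd with $t\ge3$, add a zero-sum triple $\{a,b,-(a+b)\}$ of distinct nonzero elements, chosen disjoint from the pairs.
  \item For $p=2$, use $\{a,b,a+b\}$ for the triple and combine such pieces.
\end{itemize}
As a cross-check, I can also use the identity $N(t,\delta,\Fq)=N(t,\delta,\Fq^{*})+N(t-1,\delta,\Fq^{*})$. Combined with \zcref{lem:sum1} and the translation count $N(t,\delta,\Fq)=\binom{q}{t}/q$ when $p\nmid t$, it pins down $a_t+a_{t-1}$ and $b_t+b_{t-1}$, which can be used inductively in $t$.

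\textbf{Expected obstacle.} The main difficulty is the zero-sum construction for small $t$, where the subset-sum condition is not automatic. For $t=1$ we have $a_1=0$ always, since $0\notin\Fq^{*}$. For $t=2$ with $p=2$, $a_2=0$, because $a+b=0$ forces $a=b$. So the uniform claim ``$a_t>0$'' fails at these edges, and the precise theorem must restrict which $\delta$ (or which $t$) is meant there. I would resolve this by comparing with the exact formulation in \cite[Theorem 3]{11222820}, handling $t\in\{1,2\}$ and their complements $t\in\{q-2,q-3\}$ by hand, and using the explicit pair and triple constructions together with the complementation symmetry for all remaining $t$.
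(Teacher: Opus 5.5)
The paper gives no proof of this lemma; it is imported from \cite{11222820}. Your proposal therefore has to stand on its own, and it does once the statement is read correctly.

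Your argument has three steps. Scaling reduces everything to $\delta\in\{0,1\}$. The unique $(q-1)$-subset $\Fq^{*}$ has sum $0$ for $q>2$ and sum $1$ for $q=2$. Your exchange argument gives, for $1\le t\le q-2$, a $t$-subset with nonzero sum, which rescales to any prescribed $\delta\neq 0$. Together these prove exactly the following: for $\delta\in\Fq^{*}$ and $1\le t\le q-1$, $N(t,\delta,\Fq^{*})=0$ if and only if $t=q-1$ and $q>2$. This is what the two bullets say jointly, since the clause $q>p$ only removes $q=2$. That is a complete elementary proof. The complementation symmetry and the relation with $N(t,\delta,\Fq)$ are not needed.

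Your diagnosis of the literal statement is also correct, and you should state it outright as a counterexample rather than hedge. For $q>2$, $N(q-1,0,\Fq^{*})=1$ although the bullets hold. $N(1,0,\Fq^{*})=0$ although the bullets fail. And $t=q$ must be excluded, because $N(q,\delta,\Fq^{*})$ is vacuously $0$.

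For the same reason, drop the part announcing that you ``must show $a_t>0$ and $b_t>0$'' for $1\le t\le q-2$. It is irrelevant to the $\delta\neq 0$ statement, and it is false as a blanket claim:
\begin{itemize}
  \item $a_1=a_{q-2}=0$ for every $q\ge 3$;
  \item $a_2=a_{q-3}=0$ when $p=2$ and $q\ge 4$.
\end{itemize}
These are genuine zeros, not edge cases to be ``handled by hand''. If you wanted a separate $\delta=0$ characterization, your characteristic-$2$ step (``combine such pieces'') would still need an actual construction.

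The restriction to $\delta\neq 0$ is not cosmetic. The paper's corollary on $\GRL_k(\Fq^{*},\bv,A_2)$ applies this lemma with $\delta=c/a$ or $d/b$, and either can be $0$. For $k=2$ and $A_2=I_2$ one would need $0\in\Delta_1=\Fq^{*}$, which fails. That $[q+1,2]$ code is in fact MDS, not NMDS.
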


At the end of this section, we recall some definitions of Hermitian duality of linear codes. For any two vectors \(\bv = (v_1,\cdots ,v_n)\) and \(\boldsymbol{u} = (u_1,\cdots ,u_n)\) in \(\FQQ ^{n}\), the Hermitian inner product of \(\bv\) and \(\boldsymbol{u}\) is defined as
\[
  \langle \bv,  \boldsymbol{u} \rangle _{H} = \sum_{i=1} ^{n} v_i \overline{u_i},
\]
where \( \overline{(\cdot)}\) is the Frobenius automorphism of \(\FQQ\) defined by \(\overline{x} = x ^{q}\) for any \(x \in \FQQ\).
For any \(x \in \FQQ\), let \(N(x):= x ^{q+1}\). Then it is easy to see \(\langle \bv, \bv \rangle _{H}  = \sum _{i=1} ^{n} N(v_i)\). For a linear code \(\C\) over \(\FQQ\), the Euclidean dual code of \(\C\) is defined as
\[
  \C ^{\perp} := \{\boldsymbol{u} \in \FQQ ^{n} \mid \boldsymbol{u} ^{\T} \boldsymbol{c} = 0, \forall \boldsymbol{c} \in \C\},
\]
and the Hermitian dual code of \(\C\) is defined as
\[
  \C ^{\perph} := \{\boldsymbol{u} \in \FQQ ^{n} \mid \langle \boldsymbol{u}, \boldsymbol{c} \rangle _{H} = 0, \forall \boldsymbol{c} \in \C\}.
\]
The two type of dual codes are usually different, but they have the same parameters by the following well-known result:
\[
  \C ^{\perph} = \left(\C ^{\perp}\right) ^{q} = \{\boldsymbol{u} ^{q} \mid \boldsymbol{u} \in \C ^{\perp}\}.
\]
So if a linear code is a NMDS code respect to the Euclidean duality, then it is also a NMDS code respect to the Hermitian duality.

We say a linear code \(\C\) is Hermitian self-orthogonal if \(\C \subset \C ^{\perph}\) and Hermitian self-dual if \(\C = \C ^{\perph}\). The Hermitian orthogonal codes are important ingredients for the construction of quantum error-correcting codes. We will construct quantum GRL codes by applying the Hermitian self-orthogonal construction in Section 5.

\section{NMDS criterion for the generalized Roth-Lempel type codes}

In this section, we establish NMDS criteria for two classes of GRL codes. We begin with GRL codes with \(s=2\), which contain Roth-Lempel codes as a special case, and then move to GRL codes with \(s=3\), which generalize the Roth-Lempel-type codes in \cite{wu2024more}. This progression from \(s=2\) to \(s=3\) will also support the constructions in the subsequent sections.

\begin{thm}\label{NMDS2}
  Let \(A_2 = \begin{bmatrix}
    a & b \\
    c & d
  \end{bmatrix}\) be a \(2 \times 2 \) non-singular matrix over \(\Fq\), \(\ba = \{\alpha_1,\cdots ,\alpha_n\} \subset \Fq\) be \(n\) distinct elements, \(\bv \in (\Fq ^{*}) ^{n}\). Then \(\GRL_{k}(\ba,\bv,A_2)\) is a NMDS code if and only if at least one of the following holds:
  \begin{itemize}
    \item \(\exists \sigma _{k-1} \in \Delta _{k-1}\) such that \(c= a \sigma _{k-1} \);
    \item \(\exists \sigma _{k-1} \in \Delta _{k-1}\) such that \(d = b \sigma _{k-1} \),
  \end{itemize}
  where \(\Delta _{k-1} = \left\{\sum _{e \in I} e \mid \forall I \subset \ba \text{ and } |I| = k-1 \right\}\).
\end{thm}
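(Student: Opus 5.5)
We will verify the three conditions of \zcref{lem:NMDS} for the generator matrix \(G_{\GRL}\). Write its columns as \(\boldsymbol{g}_i = v_i(1,\alpha_i,\ldots,\alpha_i^{k-1})^{\T}\) for \(1\le i\le n\), together with \(\boldsymbol{u}_1=(0,\ldots,0,a,c)^{\T}\) and \(\boldsymbol{u}_2=(0,\ldots,0,b,d)^{\T}\). The nonzero entries of \(\boldsymbol{u}_1,\boldsymbol{u}_2\) sit in rows \(k-1\) and \(k\), which correspond to the coefficients of \(x^{k-2}\) and \(x^{k-1}\).

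The natural language is polynomial. A nonzero vector \(\boldsymbol{y}\) annihilating a set of columns corresponds to a polynomial \(f(x)=\sum_{j<k} y_jx^j\) of degree \(<k\). This \(f\) must vanish at the chosen \(\alpha_i\). If \(\boldsymbol{u}_1\) is chosen, \(f\) must also satisfy \(a f_{k-2}+c f_{k-1}=0\), where \(f_j\) denotes the coefficient of \(x^j\); if \(\boldsymbol{u}_2\) is chosen, it must satisfy \(b f_{k-2}+d f_{k-1}=0\).

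\textbf{Condition (i).} We show that any \(k-1\) columns are independent. Suppose \(k-1\) columns lie in a hyperplane, given by some \(f\).
\begin{itemize}
\item If only GRS columns are chosen, this is impossible because GRS codes are MDS.
\item If \(\boldsymbol{u}_1\) and \(\boldsymbol{u}_2\) are both chosen, then \(f\) has \(k-3\) roots and must satisfy both linear conditions. Since \(A_2\) is nonsingular, this forces \(f_{k-2}=f_{k-1}=0\), so \(\deg f\le k-3\). Choosing \(f\) with \(k-3\) roots and \(\deg f\le k-3\) is indeed possible, so the right approach is a direct rank argument instead. The matrix \(\begin{pmatrix}A_2\\ \ast\end{pmatrix}\) spans the last two coordinates, and \(k-3\) GRS columns truncated to the first \(k-2\) rows form a Vandermonde block of full rank \(k-3\). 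Hence these \(k-1\) columns are independent.
\item If exactly one of \(\boldsymbol{u}_1,\boldsymbol{u}_2\) is chosen, say \(\boldsymbol{u}_1\) with \(k-2\) GRS columns, take \(\boldsymbol{y}\) killing the \(k-2\) GRS columns together with the first \(k-2\) rows of \(\boldsymbol{u}_1\). The remaining \(2\times 2\) structure then shows independence.
\end{itemize}
So (i) always holds. I will phrase this uniformly via block elimination on the last two rows.

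\textbf{Condition (ii).} Dependent \(k\)-sets are characterized as follows.
\begin{itemize}
\item A set of \(k\) GRS columns is independent.
\item A set \(\{\boldsymbol{u}_1,\boldsymbol{u}_2\}\cup\{k-2 \text{ GRS columns}\}\) is independent. The determinant factors as \(\det A_2\) times a Vandermonde determinant, which is nonzero.
\item A set \(\{\boldsymbol{u}_1\}\cup I\) with \(|I|=k-1\) GRS columns is dependent exactly when the unique monic \(f=\prod_{e\in I}(x-e)\), up to scalar, satisfies \(a f_{k-2}+c f_{k-1}=0\). Here \(f_{k-1}=1\) and \(f_{k-2}=-\sum_{e\in I}e\), so the condition is \(c=a\sigma_{k-1}\) with \(\sigma_{k-1}=\sum_{e\in I}e\in\Delta_{k-1}\).
\item Similarly, \(\{\boldsymbol{u}_2\}\cup I\) is dependent exactly when \(d=b\sigma_{k-1}\).
\end{itemize}
Hence (ii) holds if and only if one of the two bulleted conditions of the theorem holds.

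\textbf{Condition (iii).} This is the step I expect to be the main obstacle: any \(k+1\) columns must have rank \(k\). If the set contains at least \(k\) GRS columns, their rank is already \(k\). Otherwise the set contains both \(\boldsymbol{u}_1\) and \(\boldsymbol{u}_2\) together with \(k-1\) GRS columns indexed by \(I\). Rank \(<k\) would require a nonzero \(f\) vanishing on \(I\), hence \(f=\lambda\prod_{e\in I}(x-e)\) with \(\lambda\neq 0\). Such an \(f\) has \(f_{k-1}=\lambda\neq0\), and it would have to satisfy both linear conditions. Nonsingularity of \(A_2\) forces \(f_{k-2}=f_{k-1}=0\), a contradiction. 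So (iii) holds unconditionally.

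Combining (i)–(iii) with \zcref{lem:NMDS} gives the equivalence stated in the theorem. The remaining care is bookkeeping of row indices, in particular which rows of \(A_2\) pair with \(f_{k-2}\) and \(f_{k-1}\), and checking the edge cases \(k=2\) and \(k=3\) for the Vandermonde blocks.
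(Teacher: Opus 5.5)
Your proposal is correct and follows the paper's proof. Both reduce to \zcref{lem:NMDS} and split cases by how many of the two appended columns are chosen. Your observation that the coefficient of $x^{k-2}$ in $\lambda\prod_{e\in I}(x-e)$ is $-\lambda\sum_{e\in I}e$ is the annihilator form of the paper's Laplace expansion, which produces the factor $c-a\sum_{t}\alpha_{i_t}$. Your argument for (iii) is likewise the dual of the paper's nonzero $k\times k$ minor.

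The one loose spot is the exactly-one case of (i), where the annihilator phrasing is unclear. It is cleanest to reuse your block-triangular argument: on the first $k-2$ rows, the $k-2$ GRS columns form a nonsingular square block while $\boldsymbol{u}_1$ vanishes there, so their coefficients must be zero, and then $(a,c)\neq(0,0)$ forces the coefficient of $\boldsymbol{u}_1$ to be zero.
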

\begin{proof}

  By \zcref{lem:NMDS}, we need check three conditions for the generator matrix \(G_{\GRL}\) of \(\GRL_{k}(\ba,\bv,A_2)\). Let \[ \{g_1,g_2,\cdots ,g_n, g _{n+1}, g _{n+2}\}\] be the columns of \(  G _{\GRL}\), where \(g_i\) is the \(i\)-th column of \(G_k\) for \(1 \leq i \leq n\), and \(g_{n+1}\) and \(g_{n+2}\) are the last two columns of \(G_{\GRL}\).

  \textbf{Condition (i)}: If \( k=2\), condition (i) of \zcref{lem:NMDS} is trivial. We assume \( k\geq 3\). Let \(G\) be a matrix consisting of \(k-1\) columns from \(G_{\GRL}\). We consider three cases:
  \begin{itemize}
    \item If the columns of \(G\) do not include \(g_{n+1}\) or \(g_{n+2}\), then \(G\) generates a GRS code and thus has full rank \(k-1\).
    \item If the columns of \(G\) contain both \(g_{n+1}\) and \(g_{n+2}\), let \(M\) be a minor of \(G\) formed by the first \(k-3\) rows and both of the last two rows. Then \(M\) is the product of \(\det(A_2) \neq 0\) and the determinant of a \((k-2) \times (k-2)\) Vandermonde determinant, which is non-zero. Thus, \(G\) has full rank \(k-1\).
    \item If the columns of \(G\) contain exactly one of \(g_{n+1}\) or \(g_{n+2}\), without loss of generality, we assume
          \[
            G = \begin{pmatrix}
              v_{i_1}                   & v_{i_2}                   & \cdots & v_{i_{k-2}}                       & 0      \\
              v_{i_1}\alpha_{i_1}       & v_{i_2}\alpha_{i_2}       & \cdots & v_{i_{k-2}}\alpha_{i_{k-2}}       & 0      \\
              \vdots                    & \vdots                    & \ddots & \vdots                            & \vdots \\
              v_{i_1}\alpha_{i_1}^{k-2} & v_{i_2}\alpha_{i_2}^{k-2} & \cdots & v_{i_{k-2}}\alpha_{i_{k-2}}^{k-2} & a      \\
              v_{i_1}\alpha_{i_1}^{k-1} & v_{i_2}\alpha_{i_2}^{k-1} & \cdots & v_{i_{k-2}}\alpha_{i_{k-2}}^{k-1} & c
            \end{pmatrix}.
          \]

          When \(a \neq 0\), the minor

          \[
            M_k = \begin{vmatrix}
              v_{i_1}                   & v_{i_2}                   & \cdots & v_{i_{k-2}}                       & 0      \\
              v_{i_1}\alpha_{i_1}       & v_{i_2}\alpha_{i_2}       & \cdots & v_{i_{k-2}}\alpha_{i_{k-2}}       & 0      \\
              \vdots                    & \vdots                    & \ddots & \vdots                            & \vdots \\
              v_{i_1}\alpha_{i_1}^{k-2} & v_{i_2}\alpha_{i_2}^{k-2} & \cdots & v_{i_{k-2}}\alpha_{i_{k-2}}^{k-2} & a
            \end{vmatrix} \neq 0.
          \]

          When \(c \neq 0\), the minor

          \[
            M_{k-1} = c \cdot \begin{vmatrix}
              v_{i_1}                   & v_{i_2}                   & \cdots & v_{i_{k-2}}                       \\
              v_{i_1}\alpha_{i_1}       & v_{i_2}\alpha_{i_2}       & \cdots & v_{i_{k-2}}\alpha_{i_{k-2}}       \\
              \vdots                    & \vdots                    & \ddots & \vdots                            \\
              v_{i_1}\alpha_{i_1}^{k-3} & v_{i_2}\alpha_{i_2}^{k-3} & \cdots & v_{i_{k-2}}\alpha_{i_{k-2}}^{k-3}
            \end{vmatrix} \neq 0.
          \]
          Since \(A_2\) is non-singular, its columns cannot be zero vectors, so \(a\) and \(c\) cannot be simultaneously zero. Thus \(G\) always has a non-zero minor of order \(k-1\).
  \end{itemize}
  Therefore, condition (i) always holds.

  \textbf{Condition (ii)}: Similarly, let \(G'\) be a \(k \times k\) submatrix formed by selecting \(k\) columns from \(G_{\GRL}\). Following the proof of condition (i), we consider three cases based on the inclusion of columns \(g_{n+1}\) and \(g_{n+2}\).

  First, if the columns of \(G'\) do not include \(g_{n+1}\) or \(g_{n+2}\), then \(G'\) is a generator matrix of a GRS code of dimension \(k\). Thus, \(\det(G') \neq 0\).

  Second, if \(G'\) contains both \(g_{n+1}\) and \(g_{n+2}\), without loss of generality, we arrange \( g_{n+1}\) and \( g _{n+2}\) as the last two columns of \(G'\). By Laplace expansion along the last two columns of \( G'\), \(\det(G')\) is the product of \(\det(A_2) \neq 0\) and the determinant of a \((k-2) \times (k-2)\) generator matrix of a GRS code. Since both factors are non-zero, we have \(\det(G') \neq 0\).

  Consequently, condition (ii) holds if and only if \(\det(G') = 0\) for some matrices \(G'\) that contain exactly one of \( \{g_{n+1}, g_{n+2}\}\). Without loss of generality, let
  \begin{align*}
    \det(G') & = \begin{vmatrix}
                   v_{i_1}                   & v_{i_2}                   & \cdots & v_{i_{k-1}}                       & 0      \\
                   v_{i_1}\alpha_{i_1}       & v_{i_2}\alpha_{i_2}       & \cdots & v_{i_{k-1}}\alpha_{i_{k-1}}       & 0      \\
                   \vdots                    & \vdots                    & \ddots & \vdots                            & \vdots \\
                   v_{i_1}\alpha_{i_1}^{k-2} & v_{i_2}\alpha_{i_2}^{k-2} & \cdots & v_{i_{k-1}}\alpha_{i_{k-1}}^{k-2} & a      \\
                   v_{i_1}\alpha_{i_1}^{k-1} & v_{i_2}\alpha_{i_2}^{k-1} & \cdots & v_{i_{k-1}}\alpha_{i_{k-1}}^{k-1} & c
                 \end{vmatrix}           \\[4mm]
             & = c \begin{vmatrix}
                     v_{i_1}                   & v_{i_2}                   & \cdots & v_{i_{k-1}}                       \\
                     v_{i_1}\alpha_{i_1}       & v_{i_2}\alpha_{i_2}       & \cdots & v_{i_{k-1}}\alpha_{i_{k-1}}       \\
                     \vdots                    & \vdots                    & \ddots & \vdots                            \\
                     v_{i_1}\alpha_{i_1}^{k-3} & v_{i_2}\alpha_{i_2}^{k-3} & \cdots & v_{i_{k-1}}\alpha_{i_{k-1}}^{k-3} \\
                     v_{i_1}\alpha_{i_1}^{k-2} & v_{i_2}\alpha_{i_2}^{k-2} & \cdots & v_{i_{k-1}}\alpha_{i_{k-1}}^{k-2}
                   \end{vmatrix}
    - a \begin{vmatrix}
          v_{i_1}                   & v_{i_2}                   & \cdots & v_{i_{k-1}}                       \\
          v_{i_1}\alpha_{i_1}       & v_{i_2}\alpha_{i_2}       & \cdots & v_{i_{k-1}}\alpha_{i_{k-1}}       \\
          \vdots                    & \vdots                    & \ddots & \vdots                            \\
          v_{i_1}\alpha_{i_1}^{k-3} & v_{i_2}\alpha_{i_2}^{k-3} & \cdots & v_{i_{k-1}}\alpha_{i_{k-1}}^{k-3} \\
          v_{i_1}\alpha_{i_1}^{k-1} & v_{i_2}\alpha_{i_2}^{k-1} & \cdots & v_{i_{k-1}}\alpha_{i_{k-1}}^{k-1}
        \end{vmatrix}                            \\[4mm]
             & = \prod_{t=1}^{k-1} v_{i_t} \cdot \prod_{1\leq p < q \leq k-1} (\alpha_{i_q} - \alpha_{i_p}) \cdot \left[ c - a \left( \sum_{t=1}^{k-1} \alpha_{i_t} \right) \right].
  \end{align*}
  It follows that \(\det(G') = 0\) if and only if \(c- a \cdot \left(\sum _{t=1}^{k-1} \alpha _{i _{t}}\right) = 0\).

  \textbf{Condition (iii)}: Let \(G''\) be a matrix formed by \(k+1\) columns of \(G _{\GRL}\). If \( g _{n+1}\), and \( g _{n+2}\) are not columns of \( G''\), then any minor of \(G''\) with order \(k\) is nonzero. If exactly one of \( \{g _{n+1}, g _{n+2}\}\) is a column of \(G''\), then the first \(k\) columns of \(G ''\) is a Vandermonde matrix. Finally, if the columns of \(G''\) contains \( g _{n+1}, g _{n+2}\), we assume
  \[
    G'' = \begin{pmatrix}
      v_{i_1}                   & v_{i_2}                   & \cdots & v_{i_{k-1}}                       & 0      & 0      \\
      v_{i_1}\alpha_{i_1}       & v_{i_2}\alpha_{i_2}       & \cdots & v_{i_{k-1}}\alpha_{i_{k-1}}       & 0      & 0      \\
      \vdots                    & \vdots                    & \ddots & \vdots                            & \vdots & \vdots \\
      v_{i_1}\alpha_{i_1}^{k-2} & v_{i_2}\alpha_{i_2}^{k-2} & \cdots & v_{i_{k-1}}\alpha_{i_{k-1}}^{k-2} & a      & b      \\
      v_{i_1}\alpha_{i_1}^{k-1} & v_{i_2}\alpha_{i_2}^{k-1} & \cdots & v_{i_{k-1}}\alpha_{i_{k-1}}^{k-1} & c      & d
    \end{pmatrix}.
  \]

  It is easy to check that the following minor is not zero:

  \[
    \begin{vmatrix}
      v_{i_2}                   & v_{i_3}                   & \cdots & v_{i_{k-1}}                       & 0      & 0      \\
      v_{i_2}\alpha_{i_2}       & v_{i_3}\alpha_{i_3}       & \cdots & v_{i_{k-1}}\alpha_{i_{k-1}}       & 0      & 0      \\
      \vdots                    & \vdots                    & \ddots & \vdots                            & \vdots & \vdots \\
      v_{i_2}\alpha_{i_2}^{k-2} & v_{i_3}\alpha_{i_3}^{k-2} & \cdots & v_{i_{k-1}}\alpha_{i_{k-1}}^{k-2} & a      & b      \\
      v_{i_2}\alpha_{i_2}^{k-1} & v_{i_3}\alpha_{i_3}^{k-1} & \cdots & v_{i_{k-1}}\alpha_{i_{k-1}}^{k-1} & c      & d
    \end{vmatrix}.
  \]
  Therefore, by the arbitrary of choice of \( G''\), any \(k+1\) columns of \(G _{\GRL}\) have rank \(k\).
\end{proof}

If we take
\[
  A_2 = A_\delta = \begin{bmatrix}
    0 & 1      \\
    1 & \delta
  \end{bmatrix},
\]
with \(\delta \in \Fq\) in \zcref{NMDS2}, then we have the following corollary.
\begin{cor}
  Let \(\ba\) be an \( n\)-subset of \( \Fq\) and \(\delta \in \Fq\). Then \(\RL_{k}(\ba,A_\delta)\) is a NMDS code if and only if there exists a \(\sigma _{k-1} \in \Delta _{k-1}\) such that \(\delta = \sigma _{k-1}\).
\end{cor}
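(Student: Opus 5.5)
The corollary should follow by specializing \zcref{NMDS2} to \(a=0\), \(b=1\), \(c=1\), \(d=\delta\), together with \(\bv=(1,\ldots,1)\). With this choice the matrix \(G_k\) in \eqref{eq:MGRS} is exactly the RS generator matrix \(G_{\text{RS}}(\ba)\), so \(\GRL_k(\ba,\bv,A_\delta)=\RL_k(\ba,\delta)\), as noted after \zcref{def:GRL}. The first step is to confirm that \(A_\delta\) is nonsingular for every \(\delta\in\Fq\). Indeed, \(\det A_\delta=0\cdot\delta-1\cdot 1=-1\neq 0\), so the hypotheses of \zcref{NMDS2} are met.

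Next I translate the two alternative conditions of \zcref{NMDS2}. The first reads \(c=a\sigma_{k-1}\), i.e.\ \(1=0\cdot\sigma_{k-1}=0\), which is impossible for every \(\sigma_{k-1}\in\Delta_{k-1}\), so this branch never occurs. The second reads \(d=b\sigma_{k-1}\), i.e.\ \(\delta=\sigma_{k-1}\) for some \(\sigma_{k-1}\in\Delta_{k-1}\). The disjunction in \zcref{NMDS2} therefore collapses to the single condition that \(\delta\in\Delta_{k-1}\), which gives both directions of the equivalence at once.

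The only point that needs care is the orientation of the \(2\times 2\) block in the proof of \zcref{NMDS2}. There the column \(g_{n+1}\) has entries \(a,c\) in the last two rows and \(g_{n+2}\) has entries \(b,d\), and the relevant determinant for \(g_{n+1}\) was \(c-a\sum\alpha_{i_t}\); by symmetry the one for \(g_{n+2}\) is \(d-b\sum\alpha_{i_t}\). With \(A_\delta\) placed in the same orientation, these become \(1\) and \(\delta-\sum\alpha_{i_t}\) respectively. This is consistent with the statement, and since \(A_\delta\) is symmetric, no transposition issue arises in any case. I expect no real obstacle: the argument is a direct substitution.
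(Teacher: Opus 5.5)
Your proposal is correct and is the same argument the paper uses: substitute $a=0$, $b=1$, $c=1$, $d=\delta$ (with $\bv=(1,\ldots,1)$) into \zcref{NMDS2}. The first alternative $1=0\cdot\sigma_{k-1}$ is then vacuous, and the second reduces to $\delta\in\Delta_{k-1}$. Your check that $\det A_\delta=-1\neq 0$, and your confirmation of the column orientation $c-a\sum\alpha_{i_t}=1$, $d-b\sum\alpha_{i_t}=\delta-\sum\alpha_{i_t}$, are details the paper leaves implicit.
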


This Corollary can also be obtained by considering the number of zeros of the projective hyperplane. The reader can refer to \cite{11222820}.

\begin{rk}
  In the original definition of Roth-Lempel codes, the dimension of Roth-Lempel code greater than 2 which is necessary for the view of projective geometry. By the same reason, \cite{11222820} also miss the case of \( k=2\) in their result. However, in our way of proof, the case of \( k=2\) can be easy to check.
\end{rk}

\begin{ex}
  Let \( p =3\), \( \Fq = \mathbb{F} _{3}(\omega)\) with \( \omega ^{2} =2\). We set
  \[
    \ba = \{\omega, 2\omega, 1, \omega+2, 2\omega+2\}, \quad \bv = (1,1,1,1,1), \quad \text{and } A_2 = \begin{bmatrix}
      1      & 1 \\
      \omega & 2
    \end{bmatrix}.
  \]
  Then \( \GRL _{2}(\ba,\bv, A_2)\) is a NMDS with parameters \( [7,2,5]_{9}\),
\end{ex}

Combining \zcref{lem:sum1} and \zcref{lem:sum2} with the above theorem, we have the following corollary.

\begin{cor}
  \begin{enumerate}
    \item For \(2\leq k\leq q\), \(\GRL_{k}(\Fq,\bv,A_2)\) is an NMDS code with parameters \([q+2,k,q+2-k]\) for all \(\bv\in(\Fq^{*})^{n}\) and for any choice of a nonsingular \(2\times 2\) matrix \(A_2\) over \(\Fq\).
    \item For \(2\leq k\leq q-2\), \(\GRL_{k}(\Fq^{*},\bv,A_2)\) is an NMDS code with parameters \([q+1,k,q+1-k]\) for all \(\bv\in(\Fq^{*})^{n}\) and for any choice of a nonsingular \(2\times 2\) matrix \(A_2\) over \(\Fq\).
  \end{enumerate}
\end{cor}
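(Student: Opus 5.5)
The plan is to reduce both statements directly to \zcref{NMDS2} and then discharge its condition with the subset-sum lemmas. Since \(A_2=\begin{bmatrix} a & b\\ c & d\end{bmatrix}\) is nonsingular, its first row \((a,b)\) is not the zero vector. So at least one of \(a\neq 0\) or \(b\neq 0\) holds, and I will split into these two cases.

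If \(a\neq 0\), the first bullet of \zcref{NMDS2} asks for some \(\sigma_{k-1}\in\Delta_{k-1}\) with \(\sigma_{k-1}=c a^{-1}\). This is exactly the statement \(N(k-1,\,c a^{-1},\,S)>0\), where \(S\) is the evaluation set (\(S=\Fq\) or \(S=\Fq^{*}\)). If instead \(a=0\), then \(b\neq 0\), and the second bullet asks for \(N(k-1,\,d b^{-1},\,S)>0\). In either case it therefore suffices to show that \(N(t,\delta,S)>0\) for every \(\delta\in\Fq\) when \(t=k-1\).

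For part 1, take \(S=\Fq\), so \(n=q\). Here \(t=k-1\) satisfies \(1\le t\le q-1\). Every vanishing case listed in \zcref{lem:sum1} requires \(t=q\), so none of them can occur, and \(N(t,\delta,\Fq)>0\) for all \(\delta\).

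For part 2, take \(S=\Fq^{*}\), so \(n=q-1\). Here \(1\le t=k-1\le q-3<q-1\). The only vanishing cases in \zcref{lem:sum2} require \(t=q-1\), so again \(N(t,\delta,\Fq^{*})>0\) for every \(\delta\).

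In both parts \zcref{NMDS2} then yields the NMDS property, for every \(\bv\) and every nonsingular \(A_2\). I will also check that the hypotheses of \zcref{def:GRL} are met, namely \(2=s\le k\le n\); this holds for the stated ranges of \(k\).

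The parameters follow at once. The length is \(n+2\), the dimension is \(k\), and an NMDS code has Singleton defect \(1\), so \(d=n+2-k\). This gives \([q+2,k,q+2-k]\) and \([q+1,k,q+1-k]\) respectively.

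The only delicate point is the bookkeeping of ranges. I must make sure that \(t=k-1\ge 1\) (true because \(k\ge 2\)) and that \(t\) never hits the exceptional values \(q\) or \(q-1\). I also have to remember that the case \(a=0\) forces \(b\neq 0\), so the second bullet of \zcref{NMDS2} is always available when the first is not.
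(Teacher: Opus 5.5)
Your reduction is the same as the paper's: split on \(a\neq 0\) versus \(a=0\) (which forces \(b\neq 0\)), then discharge the corresponding bullet of \zcref{NMDS2} with \zcref{lem:sum1} and \zcref{lem:sum2}. The gap is the step ``it suffices to show \(N(k-1,\delta,S)>0\) for every \(\delta\in\Fq\)''. This is false for some \(k\) in the stated ranges. The two subset-sum lemmas, as transcribed in the paper and taken by you at face value, omit several vanishing cases, and the paper's proof has exactly the same defect. The simplest instance is part 2 with \(k=2\). There \(t=1\) and \(\Delta_1=\Fq^{*}\), so trivially \(N(1,0,\Fq^{*})=0\). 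Take \(A_2=I_2\). Then \(c/a=0\notin\Delta_1\) and \(b=0\), so neither bullet of \zcref{NMDS2} holds. Directly, the columns of the generator matrix are \((1:\alpha)\) for \(\alpha\in\Fq^{*}\) together with \((1:0)\) and \((0:1)\), i.e.\ all \(q+1\) points of \(\mathrm{PG}(1,q)\). So the code is the doubly extended RS code \([q+1,2,q]\), which is MDS and not NMDS. The paper's claim that \(k=2\) ``follows directly'' from \zcref{NMDS2} overlooks the same point.

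A second failure occurs in characteristic \(2\). There \(x+y=0\) forces \(x=y\), so \(N(2,0,S)=0\) for both \(S=\Fq\) and \(S=\Fq^{*}\). Take \(q=2^m\geq 4\), \(k=3\) and \(A_2=I_2\). The columns of the generator matrix of \(\GRL_3(\Fq,\bv,I_2)\) are \((1:\alpha:\alpha^2)\) for \(\alpha\in\Fq\), together with \((0:1:0)\) and \((0:0:1)\). This is a conic plus its nucleus, a regular hyperoval, so the code is MDS with parameters \([q+2,3,q]\) (the hexacode when \(q=4\)), contradicting part 1. For even \(q\), the same obstruction breaks part 2 at \(k=3\). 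Since the elements of \(\Fq\) and of \(\Fq^{*}\) sum to \(0\) for \(q\geq 4\), taking complements also breaks part 1 at \(k=q-1\) and part 2 at \(k=q-2\). In every such exceptional case \(\Delta_{k-1}=\Fq^{*}\). Hence the code is NMDS if and only if \(ac\neq 0\) or \(bd\neq 0\), i.e.\ if and only if \(A_2\) is neither diagonal nor anti-diagonal. The statement must therefore exclude these cases. For instance, for odd \(q\), part 1 holds for all \(2\le k\le q\) and part 2 for \(3\le k\le q-2\). In those ranges your argument goes through verbatim once the complete list of vanishing pairs \((t,\delta)\) is used.
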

\begin{proof}
  When \( k=2\), the result follows directly by \zcref{NMDS2}. Without loss of generality, we assume \(k\geq 3\). Let \[
    A_2 = \begin{bmatrix}
      a & b \\
      c & d
    \end{bmatrix}.
  \]
  If \(a\neq 0\), we just need to find a \(\sigma \in \Delta _{k-1}\) such that \(\sigma = \frac{c}{a}\). On the contrary, if \(a = 0\), then \(c \neq 0\) and \(b\neq 0\). There has two subcases:
  \begin{itemize}
    \item If \(d = 0\), we just need to find a \(\sigma \in \Delta _{k-1}\) such that \(\sigma = \frac{d}{b} = 0\).
    \item If \(d \neq 0\), we just need to find a \(\sigma \in \Delta _{k-1}\) such that \(\sigma = \frac{d}{b}.\)
  \end{itemize}
  By \zcref{lem:sum1} and \zcref{lem:sum2}, the above conditions hold for any \(a,b,c,d\) if \(n=q\) and \(n=q-1\), respectively. Hence, the corollary holds.

\end{proof}

\begin{rk}

  \begin{enumerate}
    \item By the above corollary, it is directly to get two classes of NMDS codes obtained in \cite{Liang_2026}.
    \item The condition of \zcref{NMDS2} is equivalent to the following condition:
          \[ N(k-1, \frac{c}{a}, \ba)\neq 0 \text{ if } a\neq 0 \text{ or }N(k-1, \frac{d}{b},\ba)\neq 0 \text{ if } b\neq 0 .\]
  \end{enumerate}

\end{rk}
The class of GRL codes with \(s=2\) contains the classical Roth-Lempel codes. We now turn to the \(s=3\) case, which generalizes the Roth-Lempel-type codes defined in \cite{wu2024more}. The following lemma will be repeatedly used in the remainder of the paper.

For \(n\) distinct elements of \(\Fq\), denoted by \(\ba = \{\alpha_1,\alpha_2,\cdots ,\alpha _{n}\}\). Let
\[
  w_i = \prod _{j=1, j\neq i} ^{n} \frac{1}{\alpha_i-\alpha _{j}}.
\]
We define \(R _{i}=(\alpha_1 ^{i}, \alpha_2 ^{i},\cdots ,\alpha _{n}^{i})\). Then \(\{R_0,\cdots ,R _{n-1}\}\) is a basis of \(\Fq ^{n}\). For \(t\geq  1\), let
\[
  R _{n-1+t} = \sum _{i=0} ^{n-1} \gamma _{i}(t) R_i.
\]

\begin{lemma}\label{solution}\cite{Sui_2022}
  \[
    \sum _{i=1}^{n} w_i \alpha _{i}^{n-1+t} = \begin{cases}
      0,                & \text{ if }t<0 \\
      1,                & \text{ if }t=0 \\
      \gamma _{n-1}(t), & \text{ if }t>0
    \end{cases}.
  \]
\end{lemma}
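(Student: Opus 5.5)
Lemma~\ref{solution} is the classical Lagrange-interpolation identity, and I would prove it by interpreting $\sum_i w_i f(\alpha_i)$ as a leading coefficient. Set $P(x)=\prod_{j=1}^{n}(x-\alpha_j)$. Then $w_i = 1/P'(\alpha_i)$.

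For any polynomial $f$ of degree at most $n-1$, Lagrange interpolation gives
\[
f(x)=\sum_{i=1}^{n} f(\alpha_i)\prod_{j\neq i}\frac{x-\alpha_j}{\alpha_i-\alpha_j}.
\]
Comparing the coefficients of $x^{n-1}$ on both sides yields $\sum_{i} w_i f(\alpha_i) = [x^{n-1}]f$. Taking $f=x^{n-1+t}$ gives the two small cases. For $t<0$ (with $n-1+t\ge 0$), the monomial has degree below $n-1$, so the sum is $0$. For $t=0$, the sum is $1$.

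For $t>0$, I would use the defining relation $R_{n-1+t}=\sum_{i=0}^{n-1}\gamma_i(t)R_i$. Read coordinatewise, it says that $\alpha_j^{n-1+t}=g_t(\alpha_j)$ for every $j$, where $g_t(x)=\sum_{i=0}^{n-1}\gamma_i(t)x^i$. The coefficients $\gamma_i(t)$ are unique because $R_0,\dots,R_{n-1}$ form a basis; this is the Vandermonde argument, since the $\alpha_j$ are distinct. Equivalently, $g_t$ is the remainder of $x^{n-1+t}$ modulo $P(x)$.

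Since $g_t$ has degree at most $n-1$, the interpolation identity gives
\[
\sum_i w_i\alpha_i^{n-1+t}=\sum_i w_i g_t(\alpha_i)=[x^{n-1}]g_t=\gamma_{n-1}(t),
\]
which is the claim.

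The only real obstacle is bookkeeping: the lemma's indexing conventions must match the interpolation identity. In particular, $w_i$ must be exactly $1/P'(\alpha_i)$, and the case $t<0$ should be read as $0\le n-1+t<n-1$. No deeper difficulty is expected.
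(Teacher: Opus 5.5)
Your proof is correct and complete. The paper gives no proof of this lemma; it imports it from \cite{Sui_2022}. Your route is the standard self-contained argument: extract the $x^{n-1}$-coefficient of the Lagrange interpolant to get $\sum_i w_i\alpha_i^{h}=0$ for $0\le h\le n-2$ and $\sum_i w_i\alpha_i^{n-1}=1$, then pass to $t>0$ by linearity through $R_{n-1+t}=\sum_{i=0}^{n-1}\gamma_i(t)R_i$.

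Your caveat about the range of $t$ is a genuine correction to the statement as written. When $n-1+t<0$ the claimed value $0$ fails: for $n=1$ one has $w_1=1$, and for $\alpha_1\neq 0$ the sum equals $\alpha_1^{t}\neq 0$. Every application in the paper uses only exponents $n-1+t\ge 0$: the proof of Theorem~\ref{NMDS3}, Corollary~\ref{cor:C4}, identity $(\star)$, and Theorem~\ref{HSO_s3}. So your reading $-(n-1)\le t<0$ is exactly what the paper needs.
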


\begin{rk}
  It is easy to check
  \[
    \gamma _{n-1}(1) = \sum _{i=1} ^{n} \alpha _{i},
    \quad \gamma _{n-1}(2) = \sum _{i=1} ^{n} \alpha_i ^{2}+ \sum _{1\leq i<j\leq n} \alpha _{i} \alpha _{j}.
  \]
\end{rk}

For \(n\) distinct elements of \(\Fq\), denoted by \(\ba = \{\alpha_1,\alpha_2,\cdots ,\alpha _{n}\}\), we define the following two sets:
\begin{align*}
  \Omega _{k}(\ba) & = \left \{ b _{I} = (1, -\sigma_1(I), \sigma_1 ^{2}(I)-\sigma_2(I)) ^{\T} \mid \forall I \subset \ba \text{ and } |I| = k \right\}, \\
  \Gamma _{k}(\ba) & = \left\{b _{I} = (\sigma _{2}(I), \sigma _{1}(I), 1 ) ^{\T} \mid \forall I \subset \ba \text{ and } |I| = k \right\},
\end{align*}
where
\[
  \sigma _{1}(I) = -\sum _{i\in I} \alpha _{i}, \quad \sigma _{2}(I) = \sum _{\substack{i,l\in I \\ i<l}} \alpha _{i} \alpha _{l}.
\]

\begin{thm}\label{NMDS3}
  Let \(A_3 = (\mathbf{a}_1,\mathbf{a}_2,\mathbf{a}_3)\) be a \(3 \times 3\) nonsingular matrix over \(\Fq\), \(n\) an integer with \(n \leq q\), \(\ba\) an \(n\)-subset of \(\Fq\), and \(\bv \in (\Fq^*)^n\). Then \(\GRL_k(\ba,\bv, A_3)\) is an NMDS code if and only if the following three conditions hold:
  \begin{enumerate}
    \item At least one of the following holds:
          \begin{itemize}
            \item[(a)] There exist distinct indices \(i, j \in \{1,2,3\}\) and a vector \(b_I \in \Omega_{k-2}(\ba)\) such that \(\det(\mathbf{a}_i, \mathbf{a}_j, b_I) = 0\);
            \item[(b)] There exists an index \(i \in \{1,2,3\}\) and a vector \(b_J \in \Gamma_{k-1}(\ba)\) such that \(\mathbf{a}_i^T b_J = 0\).
          \end{itemize}
    \item For every \(b_J \in \Gamma_{k-1}(\ba)\), there is at most one \(i \in \{1,2,3\}\) such that \(\mathbf{a}_i^T b_J = 0\).
    \item For every \(j \in \{1,2,3\}\) and every \(b_I \in \Omega_{k-2}(\ba)\), \(\mathbf{a}_j \notin \langle b_I \rangle\).
  \end{enumerate}
\end{thm}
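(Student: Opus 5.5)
We apply \zcref{lem:NMDS} to the generator matrix $G_{\GRL}$ of $\GRL_k(\ba,\bv,A_3)$, exactly as in the proof of \zcref{NMDS2}. Write $g_1,\dots,g_n$ for the GRS columns and $g_{n+1},g_{n+2},g_{n+3}$ for the three extra columns. The last three coordinates of the extra columns are $\mathbf{a}_1,\mathbf{a}_2,\mathbf{a}_3$, and all their other coordinates are zero. Each of conditions (i)–(iii) of \zcref{lem:NMDS} is split according to how many extra columns are chosen.

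\textbf{Left-kernel vectors.} Instead of expanding determinants for rank questions, we use left-kernel vectors. Identify $y=(y_0,\dots,y_{k-1})^{\T}$ with the polynomial $f_y(x)=\sum_j y_jx^j$. Then $y^{\T}g_i=v_if_y(\alpha_i)$. Hence $y$ annihilates the GRS columns indexed by a set $J$ if and only if $\prod_{\alpha\in J}(x-\alpha)$ divides $f_y$. For $|J|=k-1$ this kernel is one-dimensional, spanned by $\prod_{\alpha\in J}(x-\alpha)$. Its top three coefficients (of $x^{k-3},x^{k-2},x^{k-1}$) are $(\sigma_2(J),\sigma_1(J),1)$, i.e.\ exactly $b_J\in\Gamma_{k-1}(\ba)$. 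So $y^{\T}g_{n+i}=\mathbf{a}_i^{\T}b_J$.

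For $k\times k$ determinants with $k-2$ GRS columns indexed by $I$ and two extra columns $\mathbf{a}_i,\mathbf{a}_j$, we Laplace-expand along the last three rows. The complementary $(k-2)\times(k-2)$ minors keep rows $0,\dots,k-4$ together with one of the rows $k-3,k-2,k-1$. By the standard Schur-function formulas these equal $V$, $V h_1$ and $V h_2$, where $V$ is the (nonzero) weighted Vandermonde determinant and $h_1=-\sigma_1(I)$, $h_2=\sigma_1(I)^2-\sigma_2(I)$. Therefore the determinant equals $\pm V\det(\mathbf{a}_i,\mathbf{a}_j,b_I)$ with $b_I\in\Omega_{k-2}(\ba)$. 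The sign conventions must be checked against the definition of $\Omega$.

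\textbf{Condition (ii) of \zcref{lem:NMDS}.} Consider $k$ columns.
\begin{itemize}
\item With no extra column, the determinant is a GRS minor, which is nonzero.
\item With all three extra columns, the determinant is $\det(A_3)$ times a Vandermonde determinant, which is nonzero.
\item With one extra column $\mathbf{a}_i$ and $k-1$ GRS columns, the determinant is proportional to $y^{\T}g_{n+i}=\mathbf{a}_i^{\T}b_J$.
\item With two extra columns, the determinant is $\pm V\det(\mathbf{a}_i,\mathbf{a}_j,b_I)$.
\end{itemize}
So some $k$ columns are dependent exactly when condition 1 of the theorem holds.

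\textbf{Condition (i) of \zcref{lem:NMDS}.} Consider $k-1$ columns.
\begin{itemize}
\item If they include two or three extra columns, a dependence would force a nonzero vector in the span of those extra columns to vanish in its last three coordinates. This is impossible: the GRS part has full column rank on the first $k-3$ rows, so it contributes only the zero vector, and the $\mathbf{a}_i$ are linearly independent.
\item If they include exactly one extra column $\mathbf{a}_j$ together with GRS columns indexed by $I$, $|I|=k-2$, then dependence means the padded vector $(0,\dots,0,\mathbf{a}_j)$ lies in the GRS span. The coefficients must lie in the one-dimensional kernel of the $(k-3)\times(k-2)$ Vandermonde block. The resulting vector has last three entries proportional to $b_I\in\Omega_{k-2}(\ba)$, by the same Schur identities.
\end{itemize}
Hence (i) holds if and only if no $\mathbf{a}_j$ lies in $\langle b_I\rangle$, which is condition 3 of the theorem.

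\textbf{Condition (iii) of \zcref{lem:NMDS}.} Consider $k+1$ columns; they have rank $<k$ if and only if some nonzero $y$ annihilates all of them.
\begin{itemize}
\item With fewer than two extra columns, at least $k$ GRS columns are chosen, so $y=0$.
\item With two extra columns $\mathbf{a}_i,\mathbf{a}_j$ and $k-1$ GRS columns, $y$ is forced to be a multiple of the polynomial associated with $b_J$. Rank deficiency then means $\mathbf{a}_i^{\T}b_J=\mathbf{a}_j^{\T}b_J=0$, which is excluded exactly by condition 2 of the theorem.
\item With three extra columns and $k-2$ GRS columns, $A_3^{\T}(y_{k-3},y_{k-2},y_{k-1})^{\T}=0$ forces the top three coefficients to vanish. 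Then $\deg f_y\le k-4$, yet $f_y$ is divisible by a polynomial of degree $k-2$, so $y=0$.
\end{itemize}

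\textbf{Main obstacle.} The step requiring the most care is the bookkeeping of signs and indices in the Schur identities. We must confirm that the coordinates of the Laplace expansion, and of the one-dimensional kernel in the one-extra-column case of condition (i), are exactly $(1,-\sigma_1,\sigma_1^2-\sigma_2)$ with the paper's convention $\sigma_1(I)=-\sum\alpha$, and not a permutation or sign variant of it. Small values of $k$, where some row blocks are empty, also need a separate check.
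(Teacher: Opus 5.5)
Your proposal is correct and follows the paper's proof. Both apply \zcref{lem:NMDS} to $G_{\GRL}$, split each of its three conditions by the number of extra columns, and match condition (ii) of the lemma with condition 1, condition (i) with condition 3, and condition (iii) with condition 2; your left-kernel/polynomial-divisibility formulation is only a cleaner packaging of the paper's Laplace expansions and its use of \zcref{solution}. The sign check you flag does go through. Expanding along the two extra columns gives $V\bigl(M_{23}-h_1M_{13}+h_2M_{12}\bigr)=V\det(\mathbf{a}_i,\mathbf{a}_j,b_I)$, where $M_{rs}$ is the $2\times 2$ minor of $(\mathbf{a}_i,\mathbf{a}_j)$ on rows $r,s$, with $h_1=\sum_{\alpha\in I}\alpha=-\sigma_1(I)$ and $h_2=\sigma_1(I)^2-\sigma_2(I)$; this matches $\Omega_{k-2}(\ba)$ exactly, with no permutation or sign change.
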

\begin{proof}
  As the proof of Theorem 1, we need to check three conditions in \zcref{lem:NMDS} for the generator matrix \(G_{\GRL}\) of \(\GRL _{k}(\ba,\bv, A_3)\). Let \[ \{g_1,\cdots ,g_n, g_{n+1}, g_{n+2}, g_{n+3}\}\] be the columns of \(G_{\GRL}\), where \( g _{i} = (v_i, v_i \alpha _{i}, \cdots , v_i \alpha _{i} ^{k-1}) ^{\T}\) for \( 1\leq i\leq n\) and \(g _{n+j} = (\mathbf{0}, b_j ^{\T})\), \( j=1,2,3\). We define \(S _{\ell}\) as a subset of columns of \(G_{\GRL}\) with size \(\ell\), and put
  \[ t _{\ell} = |S _{\ell} \cap \{g _{n+1}, g_{n+2}, g_{n+3}\}|.\]

  \textbf{Condition (i)}: We choose a subset \( S _{k-1}\). If \(t _{k-1}=0\),  then the vectors of \(S _{k-1}\) are linearly independent since all vectors in \(S _{k-1}\) form a generator matrix of GRS.

  If \( t _{k-1} = 2\), we assume \( S _{k-1}= \{g _{i_1}, g_{i_2},\cdots ,g _{i _{k-3}}, g_{n+1}, g_{n+2}\}\), where \(1\leq i_1< i_2<\cdots <i _{k-3}\leq n\). Since \(A_3\) is a nonsingular matrix, there exists a minor of order 2 of the matrix \([b_1\quad b_2]\) is nonzero. Assume this nonzero minor is the first two rows, then the following matrix has nonzero determinant:
  \[
    \begin{pmatrix}
      v _{i_1}                      & v _{i_2}                      & \cdots & v _{i_{k-3}}                          & 0      & 0      \\
      v _{i_1} \alpha _{i_1}        & v _{i_2} \alpha _{i_2}        & \cdots & v _{i_{k-3}} \alpha _{i_{k-3}}        & 0      & 0      \\
      \vdots                        & \vdots                        & \ddots & \vdots                                & \vdots & \vdots \\
      v _{i_1} \alpha _{i_1} ^{k-3} & v _{i_2} \alpha _{i_2} ^{k-3} & \cdots & v _{i_{k-3}} \alpha _{i_{k-3}} ^{k-3} & a_{11} & a_{12} \\ \\
      v _{i_1} \alpha _{i_1} ^{k-2} & v _{i_2} \alpha _{i_2} ^{k-2} & \cdots & v _{i_{k-3}} \alpha _{i_{k-3}} ^{k-2} & a_{21} & a_{22} \\
    \end{pmatrix}.
  \]
  This minor evaluates to the product of the non-zero minor of \([b_1 \quad b_2]\) and the determinant of a \((k-3)\times (k-3)\) generator matrix of a GRS code, which is nonzero. On the other hand, if \(t _{k-1} = 3\), the following matrix has nonzero determinant:
  \[
    \begin{pmatrix}
      v _{i_1}\alpha _{i_1}         & v _{i_2}\alpha _{i_2}         & \cdots & v _{i _{k-4}}\alpha _{i _{k-4}}         & 0      & 0      & 0      \\
      \vdots                        & \vdots                        & \ddots & \vdots                                  & \vdots & \vdots & \vdots \\ \\
      v _{i_1} \alpha _{i_1} ^{k-3} & v _{i_2} \alpha _{i_2} ^{k-3} & \cdots & v _{i _{k-3}} \alpha _{i _{k-4}} ^{k-3} & a_{11} & a_{12} & a_{13} \\ \\
      v _{i_1} \alpha _{i_1} ^{k-2} & v _{i_2} \alpha _{i_2} ^{k-2} & \cdots & v _{i _{k-4}} \alpha _{i _{k-4}} ^{k-2} & a_{21} & a_{22} & a_{23} \\ \\
      v _{i_1} \alpha _{i_1} ^{k-1} & v _{i_2} \alpha _{i_2} ^{k-1} & \cdots & v _{i _{k-4}} \alpha _{i _{k-4}} ^{k-1} & a_{31} & a_{32} & a_{33}
    \end{pmatrix}.
  \]

  Finally, if \( t _{k-1} = 1\), we assume
  \[
    S_{k-1} = \{g _{i_1},g _{i_2},\cdots ,g _{i _{k-2}}, g _{n+1}\},
  \]
  where \( 1\leq i_1 < i_2 < \cdots < i_{k-1} \leq n\). As \(A_3\) is a nonsingular matrix, we assume \(b_1 = (a_{11},a_{21},a_{31})^{\T}\). All vectors in \(S _{k-1}\) are linearly dependent if and only if there exists a nonzero vector \(\mathbf{c}=(c_1,\cdots ,c _{k-2})^{\T} \in \Fq ^{k-2}\) such that
  \[
    \begin{pmatrix}
      v_{i_1}                   & v_{i_2}                   & \cdots & v_{i_{k-2}}                       \\
      v_{i_1}\alpha_{i_1}       & v_{i_2}\alpha_{i_2}       & \cdots & v_{i_{k-2}}\alpha_{i_{k-2}}       \\
      \vdots                    & \vdots                    & \ddots & \vdots                            \\
      v_{i_1}\alpha_{i_1}^{k-3} & v_{i_2}\alpha_{i_2}^{k-3} & \cdots & v_{i_{k-2}}\alpha_{i_{k-2}}^{k-3} \\
      v_{i_1}\alpha_{i_1}^{k-2} & v_{i_2}\alpha_{i_2}^{k-2} & \cdots & v_{i_{k-2}}\alpha_{i_{k-2}}^{k-2} \\
      v_{i_1}\alpha_{i_1}^{k-1} & v_{i_2}\alpha_{i_2}^{k-1} & \cdots & v_{i_{k-2}}\alpha_{i_{k-2}}^{k-1}
    \end{pmatrix}
    \mathbf{c} = \begin{pmatrix}
      0 \\ 0 \\ \vdots \\ 0 \\ a_{11} \\ a_{21} \\ a_{31}
    \end{pmatrix}.
  \]
  Moreover, as the first \(k-2\) rows of the upper left matrix is a Vandermonde matrix, \(\mathbf{c}\) is unique determined by first \(k-2\) rows. If \( a_{11} = 0\), then \( \mathbf{c}\) must be zero vector, which implies \( S _{k-1}\) has rank \( k-1\). Now, we assume \( a_{11} \neq 0\).
  By \zcref{solution}, we have
  \[
    \mathbf{c} = a_{11} \cdot(\frac{w_1}{v_1}, \frac{w_2}{v_2},\cdots , \frac{w _{k-2}}{v _{k-2}})^{\T}.
  \]
  By \zcref{solution}, it follows that
  \[
    a_{11}\cdot\gamma _{k-3}(1) = a_{11} \cdot (-\sigma_1(I_1))= a_{21}, \qquad a_{11}\cdot \gamma _{k-3}(2)= a_{11} \cdot (\sigma_1(I_1) ^{2}-\sigma_2(I_1)) = a_{31},
  \]
  where \( I_1 = \{\alpha _{i_1},\cdots ,\alpha _{i_{k-1}}, \alpha _{i_{k-2}}\}\).
  Therefore, vectors in \(S _{k-1}\) are linearly dependent if and only if
  \[
    (a_{11}, a_{21}, a_{31}) ^{\T} = a_{11} \cdot (1, -\sigma_1(I_1), \sigma_1 ^{2}(I_1) - \sigma_2(I_1)) ^{\T}.
  \]
  In other words,  \(b_1 \in \langle b_I \rangle\), where \(b_I = (1, -\sigma_1(I_1), \sigma_1 ^{2}(I_1) - \sigma_2(I_1)) ^{\T}\) is a vector in \(\Omega _{k-2}\). In special, if \( a_{11} = 0\), this condition fails automatically. Therefore, condition (i) of \zcref{lem:NMDS} is equivalent to the theorem condition (3).

  \textbf{Condition (ii)}: Now, we show there exists a set of \(k\) columns is linearly dependent. We consider the set \(S _{k}\). If \( t _{k} = 0\) or \( t _{k} =3\), then the determinant of the matrix formed by vectors in \(S_k\) is nonzero.

  If \(t _{k} = 2\), without loss of generality, we assume \( S _{k} = \{g _{i_1},g _{i_2},\cdots , g _{i _{k-2}}, g_{n+1}, g_{n+2}\}\), and \( I_2 = \{\alpha _{i_1},\cdots ,\alpha _{i _{k-2}}\}\). By Laplace expansion, we expand along the last two columns as follows:
  \begin{align*}
    D_{k} & = \begin{vmatrix}
                v_{i_1}                   & v_{i_2}                   & \cdots & v_{i_{k-2}}                       & 0      & 0      \\
                v_{i_1}\alpha_{i_1}       & v_{i_2}\alpha_{i_2}       & \cdots & v_{i_{k-2}}\alpha_{i_{k-2}}       & 0      & 0      \\
                \vdots                    & \vdots                    & \ddots & \vdots                            & \vdots & \vdots \\
                v_{i_1}\alpha_{i_1}^{k-3} & v_{i_2}\alpha_{i_2}^{k-3} & \cdots & v_{i_{k-2}}\alpha_{i_{k-2}}^{k-3} & a_{11} & a_{12} \\
                v_{i_1}\alpha_{i_1}^{k-2} & v_{i_2}\alpha_{i_2}^{k-2} & \cdots & v_{i_{k-2}}\alpha_{i_{k-2}}^{k-2} & a_{21} & a_{22} \\
                v_{i_1}\alpha_{i_1}^{k-1} & v_{i_2}\alpha_{i_2}^{k-1} & \cdots & v_{i_{k-2}}\alpha_{i_{k-2}}^{k-1} & a_{31} & a_{32}
              \end{vmatrix} \\[4mm]
          & = \begin{vmatrix}
                a_{11} & a_{12} \\
                a_{21} & a_{22}
              \end{vmatrix} \cdot C_1 \;+\;
    \begin{vmatrix}
      a_{11} & a_{12} \\
      a_{31} & a_{32}
    \end{vmatrix} \cdot C_2 \;+\;
    \begin{vmatrix}
      a_{21} & a_{22} \\
      a_{31} & a_{32}
    \end{vmatrix} \cdot C_3,
  \end{align*}
  where
  \begin{align*}
    C_1 & = \bigl(\sigma_1^{2}(I_2) - \sigma_2(I_2)\bigr) \prod_{1\leq s<t\leq k-2} (\alpha_{i_t} - \alpha_{i_s}) \cdot \prod_{l=1}^{k-2} v_{i_l}, \\[2mm]
    C_2 & = \sigma_1(I_2) \prod_{1\leq s<t\leq k-2} (\alpha_{i_t} - \alpha_{i_s}) \cdot \prod_{l=1}^{k-2} v_{i_l},                                 \\[2mm]
    C_3 & = \prod_{1\leq s<t\leq k-2} (\alpha_{i_t} - \alpha_{i_s}) \cdot \prod_{l=1}^{k-2} v_{i_l}.
  \end{align*}
  Let \(b_I = (1, -\sigma_1(I_2), \sigma_1 ^{2}(I_2)- \sigma_2(I_2)) ^{\T}\). Then we have \(D_k\neq 0\) if and only if \(\det(b_1,b_2,b_I)\neq 0\).

  If \( t_k =1\), we assume \( g _{n+1}\) is contained in \(S_k\). We expand along the last columns by Laplace expansion as follows:

  \begin{align*}
    D_k & = \begin{vmatrix}
              v_{i_1}                   & v_{i_2}                   & \cdots & v_{i_{k-1}}                       & 0      \\
              v_{i_1}\alpha_{i_1}       & v_{i_2}\alpha_{i_2}       & \cdots & v_{i_{k-1}}\alpha_{i_{k-1}}       & 0      \\
              \vdots                    & \vdots                    & \ddots & \vdots                            & \vdots \\
              v_{i_1}\alpha_{i_1}^{k-3} & v_{i_2}\alpha_{i_2}^{k-3} & \cdots & v_{i_{k-1}}\alpha_{i_{k-1}}^{k-3} & a_{11} \\
              v_{i_1}\alpha_{i_1}^{k-2} & v_{i_2}\alpha_{i_2}^{k-2} & \cdots & v_{i_{k-1}}\alpha_{i_{k-1}}^{k-2} & a_{21} \\
              v_{i_1}\alpha_{i_1}^{k-1} & v_{i_2}\alpha_{i_2}^{k-1} & \cdots & v_{i_{k-1}}\alpha_{i_{k-1}}^{k-1} & a_{31}
            \end{vmatrix} \\[4mm]
        & = a_{11} C'_1 + a_{21} C'_2 + a_{31} C'_3                                                                                                                     \\[2mm]
        & = \bigl( b_{J_1}^{\mathsf{T}} \cdot b_1 \bigr) \cdot \prod_{1 \leq s < t \leq k-1} (\alpha_{i_t} - \alpha_{i_s}) \cdot \prod_{l=1}^{k-1} v_{i_l},
  \end{align*}
  where \(b_{J_1} = \bigl( \sigma_2(J_1),\; \sigma_1(J_1),\; 1 \bigr)^{\mathsf{T}}\) with \(J_1 = \{\alpha_{i_1}, \alpha_{i_2}, \dots, \alpha_{i_{k-1}}\}\). Hence, \(D_k = 0\) if and only if \(b_{J_1}^{\mathsf{T}} \cdot b_1 = 0\).

  \textbf{Condition (iii)}:
  To see this, let \(S _{k+1}\) be a set of \(k+1\) columns of \(G _{\GRL}\). If \( t _{k+1} = 0\) or \( t _{k+1} = 1\), then the vectors in \(S _{k+1}\) not lying in \( \{g_{n+1}, g_{n+2}, g_{n+3}\}\) from a Vandermonde matrix. If \( t_{k+1} =3\), as the discussion of the condition (i), then \( S _{k+1}\) contains a \( k\)-subset with \( \{g_{n+1}, g_{n+2}, g_{n+3}\}\) which is linearly independent.

  If \( t _{k+1} = 2\), we assume \( S _{k+1} = \{g _{i_1},g _{i_2},\cdots ,g _{i _{k-1}}, g_{n+1}, g_{n+2}\}\). Since \( g _{i_1}, g_{i_2},\cdots ,g_{i _{k-2}}\)  are linearly independent, the matrix formed by \( S _{k+1}\) has rank \( k-1\) if and only if \(  g _{n+1}\) and \( g _{n+2}\) belong to \( \langle g _{i_1},\cdots , g_{ i _{k-2}}\rangle\). By the discussion of condition (ii), this is equivalent to \(b_1\) and \(b_2\) are orthogonal to the vector \( b _{J_2} = (\sigma _{2}(J_2), \sigma _{1}(J_2), 1)^{\T}\) with \( J_2 = \{\alpha _{i_1},\cdots ,\alpha _{i _{k-1}}\}\). Therefore, every \( (k+1)\)-subset of columns of \( G  _{\GRL}\) has rank \( k\) if and only if, for every \( b_J \in \Gamma _{k-1}(\ba)\), at most one of \(b_1,b_2,b_3\) is orthogonal to \(b_J\). This is exactly the condition (2) of the theorem.

\end{proof}

\section{ Hermitian self-orthogonal GRL codes}

In this section, we move from NMDS characterization to Hermitian self-orthogonal constructions. We construct four families of Hermitian self-orthogonal GRL codes. In particular, for \(s=2\), our Hermitian self-orthogonal GRL codes are also NMDS codes.

To organize the discussion, we first present natural criteria for GRL codes with \(s=2\) and \(s=3\) to be Hermitian self-orthogonal in the following two lemmas.

For an element \(x \in \FQQ\), let \(\overline{x} = x ^{q}\). For \(\bv = (v_1,\cdots ,v_n)\in (\FQQ ^{*}) ^{n}\) and \(n\) distinct elements \(\ba = (\alpha_1,\cdots ,\alpha _{n})\), we define
\[
  S_t^{(H)}:=\sum_{i=1}^n v_i\,\bar v_i\,\alpha_i^t
  =\sum_{i=1}^n N(v_i)\alpha_i^t,
  \quad t\ge 0.
\]
Given a matrix \(A\), we denote \(\overline{A}\) as the matrix obtained by taking the conjugate of each entry of \(A\).

\begin{lemma}\label{thm:SO2}
  Let \(A_2\) is a nonsingular \(2 \times  2\) matrix over \(\FQQ\), \( \ba\) an \( n\)-subset of \( \FQQ\), \( \bv \in (\FQQ ^{*}) ^{n}\). Then \(\GRL _{k}(\ba,\bv, A_2)\) is Hermitian self-orthogonal if and only if the following condition holds:
  \begin{enumerate}
    \item If \(k>2\), \(S _{r+qs} ^{(H)}=0\) for all \(0\leq r \leq k-3, 0\leq s \leq k-1\);
    \item \(
          A_2\,\overline{A_2}^{\!\top}=
          -\begin{pmatrix}
            S_{(k-2)(1+q)}^{(H)}           & S_{k-2+q k-q}^{(H)}  \\[2mm]
            \overline{S_{k-2+q k-q}^{(H)}} & S_{(k-1)(1+q)}^{(H)}
          \end{pmatrix}.
          \)
  \end{enumerate}
\end{lemma}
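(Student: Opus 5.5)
The natural approach is to use the fact that a linear code is Hermitian self-orthogonal if and only if its generator matrix \(G\) satisfies \(G\,\overline{G}^{\top}=O\), that is, every pair of rows is Hermitian orthogonal. We therefore compute \(G_{\GRL}\,\overline{G_{\GRL}}^{\top}\) block by block and set it equal to zero. Write \(G_{\GRL}=(G_k \mid B)\), where \(B=\begin{pmatrix} O\\ A_2\end{pmatrix}\) is \(k\times 2\). Then
\[
G_{\GRL}\,\overline{G_{\GRL}}^{\top}=G_k\overline{G_k}^{\top}+B\,\overline{B}^{\top}.
\]
The \((r,s)\)-entry of \(G_k\overline{G_k}^{\top}\), indexed by \(0\le r,s\le k-1\), is
\[
\sum_{i=1}^n v_i\alpha_i^{r}\,\overline{v_i}\,\overline{\alpha_i}^{\,s}=\sum_i N(v_i)\alpha_i^{r+qs}=S^{(H)}_{r+qs}.
\]
The matrix \(B\overline{B}^{\top}\) vanishes except in its lower-right \(2\times 2\) block, where it equals \(A_2\overline{A_2}^{\top}\).

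Next, we split the condition \(G_{\GRL}\overline{G_{\GRL}}^{\top}=O\) according to the row and column indices.

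\textbf{Entries with \(r\le k-3\) or \(s\le k-3\).} These entries receive no contribution from \(B\overline{B}^{\top}\), so they vanish exactly when \(S^{(H)}_{r+qs}=0\).

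\textbf{Entries with \(r,s\in\{k-2,k-1\}\).} These give
\[
A_2\overline{A_2}^{\top}=-\begin{pmatrix} S^{(H)}_{(k-2)(1+q)} & S^{(H)}_{k-2+q(k-1)}\\ S^{(H)}_{k-1+q(k-2)} & S^{(H)}_{(k-1)(1+q)}\end{pmatrix}.
\]

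To match the stated form, we need two observations. First, \(\overline{S^{(H)}_{r+qs}}=S^{(H)}_{s+qr}\), because \(N(v_i)\in\Fq\) and \(\alpha_i^{q^2}=\alpha_i\). This identifies the \((2,1)\)-entry as \(\overline{S^{(H)}_{k-2+qk-q}}\). Second, by the same conjugation symmetry, the block of conditions "\(r\le k-3\) or \(s\le k-3\)" reduces to "\(r\le k-3\), \(0\le s\le k-1\)": if \(s\le k-3\) and \(r\) is arbitrary, then \(S^{(H)}_{r+qs}=\overline{S^{(H)}_{s+qr}}\), which is already covered by the first range with the roles of \(r\) and \(s\) swapped. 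When \(k=2\) this block is empty, which explains the case split in condition 1.

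Every step above is reversible, so the conditions are both necessary and sufficient. There is no substantial obstacle. The only points needing care are the conjugation symmetry of \(S^{(H)}\), which justifies both the reduced index range and the conjugate entry in the matrix, and a check that the sign and index conventions of the off-diagonal entry agree with the statement.
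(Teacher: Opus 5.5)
Your proof is correct and follows essentially the same route as the paper. The paper computes the Hermitian inner products \(\langle h_r,h_s\rangle_H\) of the rows of \(G_{\GRL}\) one at a time, and these are exactly the entries of your Gram matrix \(G_{\GRL}\overline{G_{\GRL}}^{\top}=G_k\overline{G_k}^{\top}+B\overline{B}^{\top}\). The paper leaves implicit your use of the symmetry \(\overline{S^{(H)}_{r+qs}}=S^{(H)}_{s+qr}\), which justifies both the reduced index range in condition 1 and the conjugated off-diagonal entry in condition 2; your version is slightly more complete there.
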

\begin{proof}
  Let \(\{h_0,h_1,\cdots ,h _{k-1}\}\) be the rows of the generator matrix of \( \GRL _{k}(\ba,\bv, A_2)\). For \(0\leq  r,s \leq k-3\), the Hermitian inner product of \(h_r\) and \(h_s\) is
  \[
    \langle h_r, h_s \rangle _{H} = \sum _{i=1} ^{n} v_i \alpha _{i} ^{r} \overline{v_i \alpha _{i} ^{s}} = \sum _{i=1} ^{n} N(v_i)  \alpha _{i} ^{r+qs} = S _{r+qs} ^{(H)}.
  \]
  Moreover, as the last two columns of \(h_r\) are zero, when \(0\leq  r\leq k-3\), we have
  \[
    \langle h_r, h _{k-2} \rangle _{H} = S _{r+q(k-2)} ^{(H)}, \qquad \langle h_r, h _{k-1} \rangle _{H} = S _{r+q(k-1)} ^{(H)}.
  \]
  Assume \(A_2 = \begin{pmatrix}
    a & b \\
    c & d
  \end{pmatrix}\), then \(h _{k-2} = (\cdots , a,b)\) and \(h _{k-1}= (\cdots ,c,d)\). It follows that
  \begin{align*}
    \langle h _{k-1}, h _{k-1} \rangle _{H} & = S _{(q+1)(k-2)} ^{(H)} + a ^{q+1} + b ^{q+1},             \\
    \langle h _{k-2}, h _{k-2} \rangle _{H} & = S _{(q+1)(k-1)} ^{(H)} + c ^{q+1} + d ^{q+1},             \\
    \langle h _{k-2}, h _{k-1} \rangle _{H} & = S _{k-2+q(k-1)} ^{(H)} + a \overline{c} + b \overline{d}.
  \end{align*}
  Since the code \(\GRL _{k}(\ba,\bv, A_2)\) is Hermitian self-orthogonal if and only if for any \(0\leq  r,s \leq  k-3\), and \(t_1,t_2=0,1\),
  \[
    \langle h_r, h_s \rangle _{H} = \langle h_r, h _{k-2+t_1} \rangle _{H}= \langle h _{k-2+t_1}, h _{k-2+t_2} \rangle _{H} = 0,
  \]
  this is equivalent to the condition in the theorem.
\end{proof}
\begin{rk}
  \begin{enumerate}
    \item The second condition in \zcref{thm:SO2} implies that
          \[
            \overline{S _{(k-2)(1+q)} ^{(H)}} = S _{(k-2)(1+q)} ^{(H)}, \qquad \overline{S _{(k-1)(1+q)} ^{(H)}} = S _{(k-1)(1+q)} ^{(H)}.
          \]
          But by the definition of \(S_t ^{(H)}\), this is always true.
    \item In \zcref{thm:SO2}, if \(k=2\), the first condition is needless. \(\GRL _{2}(\ba,\bv, A_2)\) is Hermitian self-orthogonal if and only if the second condition holds.
    \item By the parity check matrix of GRL codes (cf. \cite{Li_2025}), we can also get an equivalent condition for the Hermitian self-orthogonality of GRL codes in terms of polynomials.

  \end{enumerate}
\end{rk}

\begin{cor}\label{cor:C4}
  For \(\bv = (v_1,\cdots ,v_n)\in (\FQQ ^{*})^{n}\) and an \(n\)-subset \(\ba = \{\alpha _{1}, \cdots ,\alpha _{n}\}\) of \( \Fq\). Assume there exists \(\lambda \in \FQQ ^{*}\) with \(v_i ^{1+q} = \lambda w_i\) for all \(1\leq  i\leq n\).
  For an integer \( k\leq  n\), we define
  \[
    \mu_1:=\sum_{i=1}^n w_i\,\alpha_i^{(k-2)(1+q)},\quad
    \mu_2:=\sum_{i=1}^n w_i\,\alpha_i^{k-2+q (k-1)},\quad
    \mu_3:=\sum_{i=1}^n w_i\,\alpha_i^{(k-1)(1+q)}.
  \]
  If \(n =  (k-1)(q+1)\) , and the nonsingular matrix \(A_2\) satisfies
  \[
    A_2\,\overline{A_2}^{\!\top}
    =-\lambda
    \begin{pmatrix}
      \mu_1     & \mu_2 \\[2mm]
      \bar\mu_2 & \mu_3
    \end{pmatrix},
  \]
  then \(\GRL _{k}(\bv,\ba,A_{2})\) is Hermitian self-orthogonal.
\end{cor}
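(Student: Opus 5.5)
The plan is to verify the two conditions of \zcref{thm:SO2} directly; the hypothesis \(v_i^{1+q}=\lambda w_i\) converts every Hermitian power sum into a weighted power sum that \zcref{solution} evaluates. Since \(N(v_i)=v_i^{1+q}=\lambda w_i\), for every \(t\ge 0\) we have
\[
  S_t^{(H)}=\sum_{i=1}^n N(v_i)\alpha_i^t=\lambda\sum_{i=1}^n w_i\alpha_i^t .
\]
Both conditions of \zcref{thm:SO2} then become statements about the sums \(\sum_i w_i\alpha_i^t\). Here the \(\alpha_i\) are treated simply as \(n\) distinct elements of the field over which the code is defined; the counting below only uses \(n=(k-1)(q+1)\) and distinctness.

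\textbf{Condition 1} (only relevant when \(k>2\)). The exponents occurring are \(t=r+qs\) with \(0\le r\le k-3\) and \(0\le s\le k-1\). The largest one is
\[
  (k-3)+q(k-1)=(k-1)(q+1)-2=n-2 .
\]
Hence every exponent satisfies \(0\le t\le n-2\), that is, \(t=n-1+t'\) with \(t'<0\). By \zcref{solution}, \(\sum_i w_i\alpha_i^{t}=0\), so \(S^{(H)}_{r+qs}=0\) for all required pairs \((r,s)\). This is exactly where the choice \(n=(k-1)(q+1)\) is used: it is the smallest length for which all of these exponents fall strictly below \(n-1\).

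\textbf{Condition 2.} By the displayed identity,
\[
  S^{(H)}_{(k-2)(1+q)}=\lambda\mu_1,\qquad S^{(H)}_{(k-1)(1+q)}=\lambda\mu_3,\qquad S^{(H)}_{k-2+q(k-1)}=\lambda\mu_2 .
\]
The \((2,1)\) entry needed in \zcref{thm:SO2} is \(\overline{S^{(H)}_{k-2+q(k-1)}}=\overline{\lambda\mu_2}\), whereas the hypothesis gives \(\lambda\bar\mu_2\). I would handle this mismatch as follows.

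1. The matrix \(A_2\overline{A_2}^{\top}\) is Hermitian, since conjugating and transposing it returns the same matrix.
2. Therefore its \((2,1)\) entry is automatically the conjugate of its \((1,2)\) entry.
3. The hypothesis fixes the \((1,2)\) entry as \(-\lambda\mu_2=-S^{(H)}_{k-2+q(k-1)}\). By step 2, the \((2,1)\) entry then equals \(-\overline{S^{(H)}_{k-2+q(k-1)}}\), which is what \zcref{thm:SO2} requires.
4. The diagonal entries match directly: \(-\lambda\mu_1=-S^{(H)}_{(k-2)(1+q)}\) and \(-\lambda\mu_3=-S^{(H)}_{(k-1)(1+q)}\). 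No conjugation issue arises here, consistent with the first remark after \zcref{thm:SO2}.

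Alternatively, \(\lambda=N(v_i)/w_i\) lies in the same subfield as the \(w_i\), which also reconciles \(\lambda\bar\mu_2\) with \(\overline{\lambda\mu_2}\). The Hermitian-matrix argument is cleaner and needs no assumption on where the \(\alpha_i\) lie.

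With both conditions of \zcref{thm:SO2} verified, the code is Hermitian self-orthogonal. The main obstacle I expect is bookkeeping rather than mathematics. First, the exponent bound \(n-2\) must be checked carefully at the boundary index \(r=k-3\), \(s=k-1\). Second, the \((1,2)\)/\((2,1)\) entries must be matched with the correct conjugation, because the proof of \zcref{thm:SO2} as printed swaps the labels \(h_{k-2}\) and \(h_{k-1}\).
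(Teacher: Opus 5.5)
Your proposal is correct and follows essentially the same route as the paper: rewrite $S_t^{(H)}=\lambda\sum_i w_i\alpha_i^t$, use \zcref{solution} to kill condition 1 of \zcref{thm:SO2} because every exponent $r+qs$ is at most $(k-3)+q(k-1)=n-2$, and then read off condition 2 from $S^{(H)}_{(k-2)(1+q)}=\lambda\mu_1$, $S^{(H)}_{k-2+q(k-1)}=\lambda\mu_2$, $S^{(H)}_{(k-1)(1+q)}=\lambda\mu_3$. Your use of the Hermitian property of $A_2\overline{A_2}^{\top}$ to fix the $(2,1)$ entry as $-\overline{\lambda\mu_2}$ rather than $-\lambda\bar\mu_2$ supplies a conjugation detail the paper skips, and treating the $\alpha_i$ as elements of $\FQQ$ is right, since an $n$-subset of $\Fq$ with $n=(k-1)(q+1)>q$ cannot exist.
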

\begin{proof}
  Since \(v_i ^{1+q} = \lambda w_i\), we have
  \[
    S_t ^{(H)} = \lambda \cdot \sum _{i=1} ^{n} w_i \alpha _{i} ^{t} \text{ for every } t\geq 0.
  \]
  By \zcref{solution}, we have \(S _{r+qs}^{(H)}=0\) for all \(0\leq  r\leq k-3\), and \(0\leq s\leq k-1\). In other words, the first condition in \zcref{thm:SO2} holds. Moreover, we have
  \begin{align*}
    S _{(k-2)(1+q)} ^{(H)} & = \lambda \cdot \left(\sum _{i=0}^{n} w_i \alpha _{i} ^{(k-2)(1+q)}\right)=\lambda \cdot\mu_1,    \\
    S _{k-2+q(k-1)} ^{(H)} & = \lambda \cdot \left(\sum _{i=0}^{n} w_i \alpha _{i} ^{k-2+q(k-1)}\right) = \lambda \cdot \mu_2, \\
    S _{(k-1)(1+q)} ^{(H)} & = \lambda \cdot \left(\sum _{i=0}^{n} w_i \alpha _{i} ^{(k-1)(1+q)}\right)=\lambda \cdot\mu_3.
  \end{align*}
  Then, the second condition in \zcref{thm:SO2} also holds. Hence, \(\GRL _{k}(\bv,\ba,A_{2})\) is Hermitian self-orthogonal.

\end{proof}
\begin{rk}
  In the above corollary, if \(n\geq (k-1)(q+1)+1\), then \(A_2\) must be a Hermitian self-orthogonal matrix. In other words, \(A_2 \overline{A_2}^{\T}=0\). This is a contradiction with the fact that \(A_2\) is a non-singular matrix. Thus, the condition \(n \leq  (k-1)(q+1) \) is necessary. But \(n\geq (k-1)(q+1)\) is not sufficient condition. Please see the next example for details.
\end{rk}

\begin{ex}
  Let \(q =3\), \(\FQQ  = \Fq(\omega)\) with \( \omega^{2} =-1\). We set
  \begin{align*}
    \ba & = \{\omega ,2\omega, 1, 2+\omega, 2+2\omega\}, \\
    \bv & = (1, 1, 1+\omega, 1,1),
  \end{align*}
  and
  \[
    A_2 = \begin{pmatrix}
      0        & \omega  \\
      1+\omega & 2\omega
    \end{pmatrix}.
  \]
  Then the GRL code \(\GRL _{3}(\ba,\bv, A_2)\) is a Hermitian self-orthogonal code with parameters \([7,3,4]\).

\end{ex}
\begin{proof}
  Since \(\bv ^{1+q}= (1,1,2,1,1)=2 \cdot (w_1,w_2,w_3,w_4,w_5)\), and,
  \[
    A_2 \overline{A_2} ^{\T} = \begin{pmatrix}
      1 & 2 \\
      2 & 0
    \end{pmatrix},
  \]
  \(\GRL _{3}(\ba,\bv,A_2)\) is Hermitian self-orthogonal if and only if
  \begin{align*}
    S_0 ^{(H)} = S_3 ^{(H)} = S_6 ^{(H)} = 0, \\
    S_4 ^{(H)} = 2, \quad S_7 ^{(H)} = 1, \quad S_8 ^{(H)} = 0.
  \end{align*}
  It is easy to check that the above condition holds. Moreover, the minimum distance of \(\GRL _{3}(\ba,\bv,A_2)\) is 4, which follows from the fact
  \[
    2\omega = \omega [2\omega +(2+\omega)],
  \]
  and \zcref{NMDS2}.

\end{proof}

\begin{lemma}\label{thm:SO3}
  Let \(A_3\) be a nonsingular \(3 \times  3\) times matrix over \(\FQQ\). Then \(\GRL _{k}(\ba,\bv, A_3)\) is Hermitian self-orthogonal if and only if the following condition holds:
  \begin{enumerate}
    \item If \(k>3\), \(S _{r+qs} ^{(H)}=0\) for all \(0\leq r\leq k-4\), \(0\leq s\leq k-1\);
    \item \[
            A_3\,\overline{A_3}^{\!\top}
            =-
            \begin{pmatrix}
              S_{(k-3)(1+q)}^{(H)}               & S_{(k-3)(1+q)+q}^{(H)}            & S_{(k-3)(1+q)+2q}^{(H)} \\[2mm]
              \overline{S_{(k-3)(1+q)+q}^{(H)}}  & S_{(k-2)(1+q)}^{(H)}              & S_{(k-2)(1+q)+q}^{(H)}  \\[2mm]
              \overline{S_{(k-3)(1+q)+2q}^{(H)}} & \overline{S_{(k-2)(1+q)+q}^{(H)}} & S_{(k-1)(1+q)}^{(H)}
            \end{pmatrix}.
          \]
  \end{enumerate}
\end{lemma}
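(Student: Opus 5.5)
We mirror the proof of \zcref{thm:SO2}. Let \(h_0,h_1,\ldots,h_{k-1}\) be the rows of \(G_{\GRL}\) for \(\GRL_k(\ba,\bv,A_3)\), and write \(A_3=(a_{ij})\). The first \(k-3\) rows \(h_r\), \(0\le r\le k-4\), have zeros in the last three coordinates. For \(j=1,2,3\), the row \(h_{k-4+j}\) ends with the \(j\)-th row \((a_{j1},a_{j2},a_{j3})\) of \(A_3\). The code is Hermitian self-orthogonal if and only if \(\langle h_r,h_s\rangle_H=0\) for all \(0\le r,s\le k-1\). Since \(\langle h_s,h_r\rangle_H=\overline{\langle h_r,h_s\rangle_H}\), it suffices to check pairs with \(r\le s\), and then the conditions for \(r>s\) follow by conjugation.

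\textbf{Step 1.} For \(0\le r\le k-4\) and any \(0\le s\le k-1\), the last three coordinates of \(h_r\) vanish. Hence
\[
\langle h_r,h_s\rangle_H=\sum_{i=1}^n v_i\alpha_i^r\,\overline{v_i\alpha_i^s}=S^{(H)}_{r+qs},
\]
where we use \(\overline{\alpha_i^s}=\alpha_i^{qs}\) and \(v_i\bar v_i=N(v_i)\). Requiring these to vanish gives exactly condition (1). When \(k=3\) this step is vacuous.

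\textbf{Step 2.} For \(r=k-4+j\) and \(s=k-4+l\) with \(j,l\in\{1,2,3\}\), we have \(r=(k-3)+(j-1)\) and \(s=(k-3)+(l-1)\). Therefore
\[
\langle h_r,h_s\rangle_H=S^{(H)}_{(k-3)(1+q)+(j-1)+q(l-1)}+\sum_{m=1}^3 a_{jm}\overline{a_{lm}}.
\]
The sum \(\sum_{m=1}^3 a_{jm}\overline{a_{lm}}\) is the \((j,l)\) entry of \(A_3\overline{A_3}^{\T}\). Requiring these inner products to vanish gives
\[
(A_3\overline{A_3}^{\T})_{jl}=-S^{(H)}_{(k-3)(1+q)+(j-1)+q(l-1)}.
\]

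\textbf{Step 3.} We match the indices to the stated matrix. The diagonal entries are \((k-3)(1+q)+(j-1)(1+q)=(k-4+j)(1+q)\). The entry \((1,2)\) is \((k-3)(1+q)+q\), and \((1,3)\) is \((k-3)(1+q)+2q\). The entry \((2,3)\) is \((k-3)(1+q)+1+2q=(k-2)(1+q)+q\).

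For \(j>l\), the index is \((k-3)(1+q)+(j-1)+q(l-1)\). Because \(\alpha_i^{q^2}=\alpha_i\) and \(N(v_i)\in\Fq\), we have \(\overline{S^{(H)}_t}=S^{(H)}_{qt}\), and exponents can be reduced using \(\alpha^{q^2}=\alpha\). Consequently, \(S^{(H)}_{a+qb}\) with \(a,b\) swapped equals \(\overline{S^{(H)}_{b+qa}}\). This gives the conjugate entries below the diagonal, and it is consistent with \(A_3\overline{A_3}^{\T}\) being Hermitian. Combining Steps 1–3 gives the equivalence.

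The only delicate point is this index bookkeeping, in particular the conjugation identity \(\overline{S^{(H)}_{a+qb}}=S^{(H)}_{b+qa}\) used for the lower-triangular entries. It follows since \(\overline{\alpha_i^{a+qb}}=\alpha_i^{qa+q^2b}=\alpha_i^{b+qa}\) for \(\alpha_i\in\FQQ\).
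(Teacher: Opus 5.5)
Your proposal is correct and is essentially the paper's proof. Both compute the Hermitian Gram matrix of the rows of $G_{\GRL}$ entry by entry: the rows with zero tail give condition (1), and the last three rows give the $3\times 3$ identity $A_3\overline{A_3}^{\T}=-(\cdots)$. Your consistent $0$-based indexing and your explicit justification of the below-diagonal entries via $\overline{S^{(H)}_{a+qb}}=S^{(H)}_{b+qa}$ are a bit more careful than the paper's write-up, which labels the rows $h_1,\dots,h_k$ but uses exponents and tail rows as if indexed from $0$, and leaves the conjugate entries implicit.
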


\begin{proof}
  Let \( G_{\GRL}\) be the generator matrix of \( \GRL _{k}(\ba,\bv, A_3)\), \( \{h_1,\cdots ,h _{k}\}\) be the rows of \( G _{\GRL}\). Then \( \GRL _{k}(\ba, \bv, A_3)\) is a Hermitian self-orthogonal code if and only if  \( G _{\GRL} \overline{G _{\GRL}}^{\T} = 0\). In other words, for any \( 1\leq  r\leq k-4\), and \( 1\leq s \leq k-1\), we have
  \[
    \langle h_r, h_s \rangle _{H} = S _{r+qs} ^{(H)} = 0,
  \]
  and for any \( t_1, t_2 \in \{1,2\}\),
  \[
    \langle h _{k-3}, h _{k-3+t_1}\rangle _{H} = \langle h _{k-3+t_1}, h_{k-3+t_2} \rangle _{H} = 0.
  \]
  Let \( A_3 = \begin{pmatrix} a_{11} & a_{12} & a_{13} \\ a_{21} & a_{22} & a_{23} \\ a_{31} & a_{32} & a_{33} \end{pmatrix} \). Then the above six equations can be written as
  \begin{align*}
    \langle h _{k-3}, h_{k-3} \rangle _{H} =  & S _{(k-3)(1+q)} ^{(H)} + a_{11} ^{q+1} + a_{12} ^{q+1} + a_{13} ^{q+1},                                     \\
    \langle h _{k-3}, h _{k-2} \rangle _{H} = & S _{(k-3)(1+q)+q} ^{(H)} + a_{11} \overline{a_{21}} + a_{12} \overline{a_{22}} + a_{13} \overline{a_{23}},  \\
    \langle h _{k-3}, h _{k-1} \rangle _{H} = & S _{(k-3)(1+q)+2q} ^{(H)} + a_{11} \overline{a_{31}} + a_{12} \overline{a_{32}} + a_{13} \overline{a_{33}}, \\
    \langle h _{k-2}, h _{k-2} \rangle _{H} = & S _{(k-2)(1+q)} ^{(H)} + a_{21} ^{q+1} + a_{22} ^{q+1} + a_{23} ^{q+1},                                     \\
    \langle h _{k-2}, h _{k-1} \rangle _{H} = & S _{(k-2)(1+q)+q} ^{(H)} + a_{21} \overline{a_{31}} + a_{22} \overline{a_{32}} + a_{23} \overline{a_{33}},  \\
    \langle h _{k-1}, h _{k-1} \rangle _{H} = & S _{(k-1)(1+q)} ^{(H)} + a_{31} ^{q+1} + a_{32} ^{q+1} + a_{33} ^{q+1}.
  \end{align*}
  The condition (ii) in the theorem is equivalent to the above six equations hold. Hence, the code \(\GRL _{k}(\ba,\bv, A_3)\) is Hermitian self-orthogonal if and only if the two conditions in the theorem hold.
\end{proof}

\begin{cor}
  Let \( n,k\) be integers with  \(n = (k-1)(q+1)-1    \), and suppose there exists \(\lambda \in \FQQ ^{*}\) such that
  \[
    v_i ^{1+q} = \lambda w_i \text{ for all } 1\leq i \leq n.
  \]
  For \(0\leq  j, \ell \leq  2\), let
  \[
    \mu _{j\ell}:= \sum _{i=1} ^{n} w_i \alpha _{i} ^{(k-3 +j)+q(k-3 +\ell)}.
  \]
  If \[
    A_3 \overline{A_3} ^{\T} = - \lambda \cdot
    \begin{pmatrix}
      \mu _{00}     & \mu _{01}     & \mu _{02} \\[2mm]
      \bar\mu _{01} & \mu _{11}     & \mu _{12} \\[2mm]
      \bar\mu _{02} & \bar\mu _{12} & \mu _{22}
    \end{pmatrix},
  \]
  then \(\GRL _{k}(\ba,\bv, A_3)\) is Hermitian self-orthogonal.
\end{cor}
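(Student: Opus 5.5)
The plan follows the proof of \zcref{cor:C4}, now using \zcref{thm:SO3} in place of \zcref{thm:SO2}. The first step is to rewrite the power sums. Since $v_i^{1+q}=\lambda w_i$ for every $i$, we have $N(v_i)=\lambda w_i$, and therefore
\[
  S_t^{(H)}=\lambda\sum_{i=1}^{n} w_i\alpha_i^{t}\quad\text{for all } t\geq 0 .
\]
Both conditions of \zcref{thm:SO3} then become statements about the sums $\sum_i w_i\alpha_i^t$, and these are controlled by \zcref{solution}.

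For condition (1) of \zcref{thm:SO3} (relevant when $k>3$), we need $S_{r+qs}^{(H)}=0$ for $0\leq r\leq k-4$ and $0\leq s\leq k-1$. The largest exponent that occurs is
\[
  (k-4)+q(k-1)=(k-1)(q+1)-3=n-2 ,
\]
using the hypothesis $n=(k-1)(q+1)-1$. So every exponent $t=r+qs$ satisfies $t<n-1$. In the notation of \zcref{solution} this is the case of a negative shift, so $\sum_i w_i\alpha_i^t=0$, and condition (1) holds. This is exactly where the choice of $n$ is used: it is one less than in \zcref{cor:C4}, because the zero block now runs only up to row $k-4$.

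For condition (2), I will match the entries of the matrix in \zcref{thm:SO3} with the exponents $(k-3+j)+q(k-3+\ell)$ for $0\leq j\leq\ell\leq 2$:
\begin{itemize}
  \item the diagonal entries $S^{(H)}_{(k-3+j)(1+q)}$ give $\lambda\mu_{jj}$;
  \item $S^{(H)}_{(k-3)(1+q)+q}$ has exponent $(k-3)+q(k-2)$, giving $\lambda\mu_{01}$;
  \item $S^{(H)}_{(k-3)(1+q)+2q}$ has exponent $(k-3)+q(k-1)$, giving $\lambda\mu_{02}$;
  \item $S^{(H)}_{(k-2)(1+q)+q}$ has exponent $(k-2)+q(k-1)$, giving $\lambda\mu_{12}$.
\end{itemize}
The below-diagonal entries in \zcref{thm:SO3} are the conjugates $\overline{\lambda\mu_{j\ell}}=\bar\lambda\,\bar\mu_{j\ell}$, whereas the hypothesis is written with $\lambda\bar\mu_{j\ell}$. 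So I must check that $\bar\lambda=\lambda$.

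I expect this conjugation issue to be the only point needing care. I will work in the same setting as \zcref{cor:C4}, with $\ba\subset\Fq$. Then each $w_i$ lies in $\Fq^*$, and $N(v_i)=v_i^{q+1}$ lies in $\Fq^*$, so $\lambda=N(v_i)/w_i\in\Fq$ and $\bar\lambda=\lambda$. With this, the hypothesis on $A_3\overline{A_3}^{\T}$ is exactly condition (2) of \zcref{thm:SO3}.

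Both conditions of \zcref{thm:SO3} then hold, so $\GRL_k(\ba,\bv,A_3)$ is Hermitian self-orthogonal.
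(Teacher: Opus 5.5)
Your handling of condition (1) of \zcref{thm:SO3} is correct: the largest exponent is $(k-4)+q(k-1)=n-2$, so \zcref{solution} kills every $S^{(H)}_{r+qs}$. Your matching of the upper-triangular entries with $\lambda\mu_{j\ell}$ is also correct. This is exactly the argument the paper has in mind when it says the proof is ``similar to \zcref{cor:C4}''.

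\textbf{The gap is in the conjugation step.} You assume $\ba\subset\Fq$ in order to get $w_i\in\Fq^*$ and hence $\lambda\in\Fq$. That setting is empty here. A GRL code with $s=3$ needs $k\geq 3$, so $n=(k-1)(q+1)-1\geq 2q+1>q$, and $\Fq$ has no $n$-subset. The same objection applies to a literal reading of \zcref{cor:C4}; the intended setting, as in the constructions of Section 4, is $\ba\subset\FQQ$. There the $w_i$ need not lie in $\Fq$, and neither need $\lambda$. For instance, replacing $\ba$ by $\beta\ba$ with $\beta\in\FQQ^*$ multiplies every $w_i$ by $\beta^{-(n-1)}$, so $\lambda$ becomes $\lambda\beta^{\,n-1}$. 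So as written, your proof establishes the corollary only in a vacuous case.

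\textbf{The fix: you never need $\bar\lambda=\lambda$.} The matrix $A_3\overline{A_3}^{\T}$ is always Hermitian. Condition (2) of \zcref{thm:SO3} is therefore equivalent to the six equations for the entries with $j\leq\ell$; this is exactly how the proof of \zcref{thm:SO3} phrases it. Index rows and columns by $0,1,2$. Those six equations are supplied verbatim by the hypothesis:
$(A_3\overline{A_3}^{\T})_{j\ell}=-\lambda\mu_{j\ell}=-S^{(H)}_{(k-3+j)+q(k-3+\ell)}$.

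The entries below the diagonal then follow automatically:
$(A_3\overline{A_3}^{\T})_{\ell j}=\overline{(A_3\overline{A_3}^{\T})_{j\ell}}=-\overline{S^{(H)}_{(k-3+j)+q(k-3+\ell)}}$,
which is what \zcref{thm:SO3} requires. Comparing with the stated lower entries $-\lambda\bar\mu_{j\ell}$ shows the hypothesis actually forces $(\lambda-\bar\lambda)\bar\mu_{j\ell}=0$ for $j<\ell$, so it is self-consistent, but you do not need this. With this replacement, and with $\ba\subset\FQQ$, your argument is complete.
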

\begin{proof}
  The proof of this corollary is similar to that of \zcref{cor:C4}. We omit the details.
\end{proof}

\begin{ex}
  Let \(q=3\), \(\FQQ = \Fq(\omega)\) with \( w^{2} =-1\). We set
  \(
  \ba = \{\omega, 1+2\omega,2\},
  \bv = (1, 1+\omega, \omega),
  \)
  and
  \[
    A_3 = \begin{pmatrix}
      2+\omega  & 1+\omega  & 1        \\
      2+2\omega & 1+2\omega & 2+\omega \\
      2\omega   & 0         & 1
    \end{pmatrix}.
  \]
  Then the GRL code \(\GRL _{3}(\ba,\bv,A_3)\) is a Hermitian self-orthogonal code with parameters \([6,3,3]\).
\end{ex}
\begin{proof}
  As \(k=3\), we just need to check the second condition in \zcref{thm:SO3}. Moreover, since
  \[
    A_3 \overline{A_3} ^{\T} = \begin{pmatrix}
      2        & 2\omega+2 & 2\omega \\
      \omega+2 & 0         & 0       \\
      \omega   & 0         & 2       \\
    \end{pmatrix},
  \]
  the Hermitian self-orthogonal property follows from the fact:
  \[
    S_0 ^{(H)} = 1, \quad S_3 ^{(H)} = \omega+1, \quad S_6 ^{(H)} = \omega,\quad S_4 ^{(H)} = 0, \quad S_7 ^{(H)} = 0, \quad S_8 ^{(H)} = 1.
  \]
  For \(\ba=\{\omega,1+2\omega,2\}\), we have
  \begin{align*}
    \Omega_1 = \left\{ \begin{pmatrix} 1 \\ \omega \\ 2 \end{pmatrix}, \begin{pmatrix} 1 \\ 1 + 2\omega \\ \omega \end{pmatrix}, \begin{pmatrix} 1 \\ 2 \\ 1 \end{pmatrix} \right\}, \\
    \Gamma_2 = \left\{ \begin{pmatrix} 1 + \omega \\ 2 \\ 1 \end{pmatrix}, \begin{pmatrix} 2\omega \\ 1 + 2\omega \\ 1 \end{pmatrix}, \begin{pmatrix} 2 + \omega \\ \omega \\ 1 \end{pmatrix} \right\}.
  \end{align*}
  Let \(b_1,b_2,b_3\) be the columns of \(A_3\). Then we have
  \[
    \det\!\left(b_1,b_2,\begin{pmatrix}1\\2\\1\end{pmatrix}\right)=0.
  \]
  Moreover, for the three vectors in \(\Gamma _{2}\), we have
  \[
    \begin{array}{c|ccc}
      b_J                        & b_1^\top b_J & b_2^\top b_J & b_3^\top b_J \\ \midrule
      (1+\omega,2,1)^\top        & 2            & 2            & 0            \\
      (2\omega,1+2\omega,1)^\top & 2            & 1            & 1+\omega     \\
      (2+\omega,\omega,1)^\top   & 1+2\omega    & 2+\omega     & 2
    \end{array}
  \]
  Hence, each vector \(b_J\) is orthogonal to at most one columns of \(A_3\). On the other hand, it is not hard to check for any \(b_I \in \Omega_1\), \(b_j \notin \langle b_I \rangle\). By \zcref{NMDS3}, the minimum distance of \(\GRL _{3}(\ba,\bv,A_3)\) is 3.

\end{proof}
\subsection{The construction of Hermitian self-orthogonal GRL codes}
In this subsection, we apply \zcref{NMDS2}, \zcref{thm:SO2}, and \zcref{thm:SO3} to derive explicit constructions of Hermitian self-orthogonal GRL codes. The first two families are NMDS codes.

We first study the construction of Hermitian self-orthogonal GRL codes with \(s=2\), and then extend the approach to \(s=3\).

\begin{thm}\label{thm:SO2NMDS}
  Let \(q\) be a prime power with \(q \geq  4\), and \(m,k\) be positive integers with \(2\leq  m\leq q-2\), and \(3\leq k\leq m+1\).  Then there exists a \(q ^{2}\)-ary Hermitian self-orthogonal GRL code with parameters
  \[
    [m(q+1)+2, k, m(q+1)+2-k]_{q ^{2}}.
  \]
\end{thm}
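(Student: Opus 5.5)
The plan is to take the evaluation set to be a union of $m$ cosets of the norm-one subgroup, so that the power sums $S^{(H)}_t$ collapse to a small Vandermonde-type system over $\Fq$. Then we use \zcref{thm:SO2} for Hermitian self-orthogonality and \zcref{NMDS2} for the NMDS property. Let $U=\{x\in\FQQ : x^{q+1}=1\}$, which has order $q+1$. Choose $\beta_1,\dots,\beta_m\in\FQQ^*$ whose norms $\gamma_j=N(\beta_j)\in\Fq^*$ are pairwise distinct. This is possible since $m\le q-2<q-1$, and it leaves at least one element of $\Fq^*$ unused. Put $\ba=\bigcup_j\beta_jU$, so $n=m(q+1)$, and take $v_i$ constant in norm on each coset, say $N(v_i)=c_j\in\Fq^*$ for $\alpha_i\in\beta_jU$. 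Since $\sum_{x\in U}x^t$ equals $q+1=1$ in $\FQQ$ if $(q+1)\mid t$ and $0$ otherwise, we get
\[
S^{(H)}_t=\begin{cases}\sum_{j=1}^m c_j\gamma_j^{t/(q+1)}, & (q+1)\mid t,\\ 0,&\text{otherwise}.\end{cases}
\]

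\textbf{Self-orthogonality.} For $0\le r\le k-3$ and $0\le s\le k-1$ we have $r+qs\equiv r-s\pmod{q+1}$ and $|r-s|\le k-1\le m<q+1$. Hence $(q+1)\mid r+qs$ only when $r=s$. Condition (1) of \zcref{thm:SO2} therefore reduces to $\sum_j c_j\gamma_j^{r}=0$ for $0\le r\le k-3$. The off-diagonal entry $S^{(H)}_{k-2+q(k-1)}$ has exponent $\equiv -1$, so it vanishes. The diagonal entries are $\mu_1=\sum_j c_j\gamma_j^{k-2}$ and $\mu_3=\sum_j c_j\gamma_j^{k-1}$.

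To meet all of this with every $c_j\ne0$, I will set $c_j=w_j\,g(\gamma_j)$, where $w_j=\prod_{l\ne j}(\gamma_j-\gamma_l)^{-1}\in\Fq^*$ and $g\in\Fq[x]$ has degree $m-k+1\ge0$ and no roots among the $\gamma_j$. By \zcref{solution}, $\sum_j w_j\gamma_j^{r}g(\gamma_j)=0$ whenever $r+\deg g\le m-2$, which is exactly $r\le k-3$. The same lemma gives $\mu_1=\mathrm{lc}(g)\ne0$ and $\mu_3=\mathrm{lc}(g)\sum_j\gamma_j+g_{m-k}$, where $g_{m-k}$ is the next coefficient of $g$. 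I will choose $g=(x-\theta)^{m-k+1}$, possibly rescaled, with $\theta\in\Fq^*$ outside $\{\gamma_j\}$; if needed I will also adjust $\theta$ or the choice of $\gamma$'s so that $\mu_3\ne0$. In the case $\deg g=0$ this means choosing the $\gamma_j$ so that $\sum_j\gamma_j\ne0$, which the spare element allows. Since $c_j\in\Fq^*$ and the norm map is onto $\Fq^*$, suitable $v_i$ exist.

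\textbf{Choice of $A_2$ and the NMDS property.} We need a nonsingular $A_2=\begin{pmatrix}a&b\\c&d\end{pmatrix}$ with $A_2\overline{A_2}^{\T}=-\mathrm{diag}(\mu_1,\mu_3)$, and by \zcref{NMDS2} we also need $c=a\sigma$ for some $\sigma\in\Delta_{k-1}(\ba)$. The diagonal choice $b=c=0$ would require $0\in\Delta_{k-1}$, which is not guaranteed in general. Instead I will fix any $\sigma\in\Delta_{k-1}$ and look for rows $(a,b)$ and $(a\sigma,d)$ satisfying three conditions:
\[
N(a)+N(b)=-\mu_1,\qquad N(a)N(\sigma)+N(d)=-\mu_3,\qquad N(a)\,\overline{\sigma}+b\,\overline{d}=0 .
\]
Taking norms in the last equation gives the consistency condition $N(a)^2N(\sigma)=N(b)N(d)$. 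With $x=N(a)$ this becomes $x^2N(\sigma)=(-\mu_1-x)(-\mu_3-xN(\sigma))$, which simplifies to the linear equation $x\big(\mu_1N(\sigma)+\mu_3\big)=-\mu_1\mu_3$. So I need $\mu_1N(\sigma)+\mu_3\ne0$, and the resulting $x$, $-\mu_1-x$ and $-\mu_3-xN(\sigma)$ should all be nonzero (or handled separately if some vanish). The phase of $d$ is then fixed by the orthogonality relation. Nonsingularity follows because $A_2\overline{A_2}^{\T}$ is nonsingular.

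This step is the main obstacle: the identity requires excluding finitely many bad values. My first attempt is to exploit the freedom in $\sigma$, since $\Delta_{k-1}$ is large for $k-1\ge2$ and $n\ge 2(q+1)$, together with the freedom in $\theta$ and the $\gamma_j$, to avoid them. If that fails, I would try $a=0$ and use the second criterion of \zcref{NMDS2}, $d=b\sigma$. Once $A_2$ is found, \zcref{NMDS2} gives minimum distance $n+2-k$, which matches the stated parameters.
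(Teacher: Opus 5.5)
Your construction is the paper's own. You take $\ba$ as a union of $m$ cosets $\beta_jU$ with distinct norms, set $N(v_\alpha)=w_j\,g(\gamma_j)$, get self-orthogonality from \zcref{thm:SO2}, and use $A_2$ with rows $(a,b)$ and $(as,d)$ for some $s\in\Delta_{k-1}$. Your value $N(a)=-\mu_1\mu_3/(\mu_1N(s)+\mu_3)$, together with $b=-N(a)\overline{s}/\overline{d}$, gives exactly the paper's matrix $\begin{pmatrix} a & -\rho\,\overline{(\theta_{j_0}/d)}\\ \theta_{j_0}a & d\end{pmatrix}$.

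The paper takes $g=x^{m-k+1}$. This is your $g$ with $\theta=0$, which is allowed because every $\gamma_j\neq0$. It gives $\mu_1=1$ and $\mu_3=\sum_j\gamma_j$, and the $\gamma_j$ are chosen with nonzero sum by your swap-with-the-spare-element argument. That removes your vague ``adjust $\theta$'' step.

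The gap is the step you flagged yourself: you never exhibit $s\in\Delta_{k-1}$ with $\mu_1N(s)+\mu_3\neq0$. Saying ``$\Delta_{k-1}$ is large'' does not establish this. What you need is two elements of $\Delta_{k-1}$ with different norms, and size alone does not give that unless you prove $|\Delta_{k-1}|>q+1$.

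Your fallback $a=0$ does not help. The off-diagonal condition then reads $N(b)\overline{s}=0$ with $N(b)=-\mu_1\neq0$, which forces $s=0$. That is the $0\in\Delta_{k-1}$ case you had already set aside.

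The missing idea, as in the paper, is to let $\xi$ generate $U$ and put $\theta_j=\beta_j(1+\xi+\cdots+\xi^{k-2})$ for each $j$.
\begin{itemize}
\item $\theta_j$ is a sum of $k-1$ distinct elements of $\beta_jU$, so $\theta_j\in\Delta_{k-1}$.
\item $\theta_j\neq0$, because $\xi^{k-1}\neq1$ when $1\le k-1\le q-2$.
\item $N(\theta_j)=\gamma_j\,N(1+\xi+\cdots+\xi^{k-2})$, so the $m\ge2$ values $N(\theta_j)$ are pairwise distinct. At least one of them differs from the single forbidden value $-\mu_3/\mu_1$.
\end{itemize}
With $s=\theta_{j_0}$ chosen this way, your formulas give $N(b)=-\mu_1^2N(s)/(\mu_1N(s)+\mu_3)$ and $N(d)=-\mu_3^2/(\mu_1N(s)+\mu_3)$. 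Both are automatically nonzero, so no further values need to be excluded, and \zcref{NMDS2} completes the proof.
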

\begin{proof}
  Let \(n = m(q+1)\), and \(I=\{c_1,c_2,\cdots ,c_m\}\) be an \(m\)-subset of \(\Fq^*\) such that \(\sigma:=\sum _{j=1}^{m}c_j \neq 0\). Actually, we always can find a such subset \(I\) which has nonzero sum since \(m\leq q-2\).

  We define
  \(\beta _{j}\in \FQQ ^{*}\) by \(\beta _{j} ^{q+1} = c_j\) for \(1\leq j \leq m\). And let \(U _{q+1}\) be the subgroup of \(\FQQ ^{*}\) with order \(q+1\). We choose \(n\) distinct elements in \(\FQQ\) as follows:
  \[
    \ba := \bigcup _{j=1} ^{m} \beta _{j} \cdot U _{q+1} \triangleq \bigcup _{j=1} ^{m} C_j.
  \]
  It is not hard to check that for any \(1\leq i\neq j \leq m\), \(C_i \cap C_j =\varnothing\). For any \(\alpha \in C_j\), let \(v_{\alpha}\in \FQQ^{*}\) satisfying
  \[
    v _{\alpha} ^{q+1} = \frac{c_j ^{m-k+1}}{D_j} \in \Fq ^{*},
  \]
  where \(D_j =\prod _{ \ell =1,\ell \neq  j} ^{m} (c_j -c _{\ell})\). We will denoted \(\bv\) as the vector with coordinate \(v_{\alpha}\) for \(\alpha \in \ba\). Moreover, let \(\xi \) be a primitive \((q+1)\)-th root of unity in \(\FQQ\), and
  \[
    \theta _{j} :=\beta _{j} \cdot \frac{1-\xi ^{k-1}}{1-\xi}, \quad 1\leq j \leq m.
  \]
  We choose \(j_0\) such that \( \sigma + (\theta _{j_0} ^{q+1})\neq 0\), and define
  \[
    \rho := - \frac{\sigma}{\sigma +\theta _{j_0} ^{q+1}} \in \Fq ^{*}.
  \]
  Then we choose \(a,d \in \FQQ ^{*}\) such that
  \[
    a ^{q+1}  = \rho ,\qquad d ^{q+1} = - \frac{\sigma ^{2}}{\sigma+\theta _{j_0} ^{q+1}}.
  \]
  Keep these notation in mind, we define
  \[
    A_2 = \begin{pmatrix}
      a                    & - \rho \cdot \overline{\left(\frac{\theta _{j_0}}{d}\right)} \\
      \theta _{j_0}\cdot a & d
    \end{pmatrix}.
  \]
  We claim that the GRL code \(\GRL _{k}(\ba,\bv,A_2)\) is a Hermitian self-orthogonal NMDS code.

  We first check the Hermitian self-orthogonal property. For \(0\leq r \leq k-3\), and \(0\leq  s\leq k-1\),
  \[
    S _{r+qs} ^{(H)} = \sum _{\alpha \in \ba} v _{\alpha} ^{q+1} \alpha ^{r+qs} = \sum _{j=1} ^{m} \sum _{\alpha\in C_j} \frac{c_j ^{m-k+1}}{D_j} \cdot \alpha ^{r+qs} = \sum _{j=1}^{m}  \frac{c_j^{m-k+1+s}}{D_j} \cdot \sum _{\alpha\in C_j}\alpha ^{r-s}.
  \]
  The last equation follows from the fact that \(\alpha ^{q+1} = c_j\) for \(\alpha \in C_j\). Since \(C_j\) is a coset of \(U _{q+1}\), for \(t \in \mathbb{N}\), we have
  \[
    \sum _{\alpha \in C_j} \alpha ^{t} = \begin{cases}
      0,               & \text{ if } q+1 \nmid t, \\
      \beta _{j} ^{t}, & \text{ if } q+1 \mid t.
    \end{cases}
  \]
  Further, since \(|r-s|\leq k-1\leq m\leq q-2\), we have \(S _{r+qs} ^{(H)}=0\) for all \(r\neq s\).

  On the other hand, by \zcref{solution}, we have
  \[
    \sum_{j = 1}^{m}\frac{c_j^{\ell}}{D_j}=\begin{cases}
      0,      & \text{if }0\leq\ell\leq m - 2, \\
      1,      & \text{if }\ell=m - 1         , \\
      \sigma, & \text{if }\ell = m.
    \end{cases} \tag{$\star$}
  \]
  Thus, for \(0\leq r =s \leq  k-3\), we have
  \[
    S _{r+qs} ^{(H)} = \sum _{j=1} ^{m} \frac{c_j ^{m-k+1+r}}{D_j} = 0,
  \]
  since \(m-k+1+r \leq m-k+1+k-3 = m-2\). Hence, the first condition in \zcref{thm:SO2} holds.

  By the equation \((\star)\), we have
  \begin{align*}
    S _{(k-2)(1+q)} ^{(H)}   & = \sum _{ j=1} ^{m} \sum _{\alpha \in C_j} \frac{c_j ^{m-k+1}}{D_j} \cdot \alpha ^{(k-2)(1+q)} = (q+1)\cdot\sum _{j=1} ^{m} \frac{c_j ^{m-1}}{D_j} = 1,                                      \\
    S _{(k-2)(1+q)+q} ^{(H)} & = \sum _{j=1} ^{m} \sum _{\alpha \in C_j} \frac{c_j ^{m-k+1}}{D_j} \cdot c_j ^{k-2}\cdot \alpha ^{q} = \sum _{j=1} ^{m} \frac{c_j ^{m-1}}{D_j} \cdot \sum _{\alpha \in C_j} \alpha ^{q} = 0, \\
    S _{(k-1)(1+q)} ^{(H)}   & = \sum _{j=1} ^{m} \sum _{\alpha \in C_j} \frac{c_j ^{m-k+1}}{D_j} \alpha ^{(k-1)(1+q)} = \sum _{j=1} ^{m} \sum _{ \alpha \in C_j} \frac{c_j ^{m}}{D_j} = \sigma.
  \end{align*}
  Let
  \(
  M_2 = \begin{pmatrix}
    -1 & 0       \\
    0  & -\sigma
  \end{pmatrix}.
  \) Then the second condition in \zcref{thm:SO2} holds if and only if \(A_2 \overline{A_2} ^{\T} = M_2\). Actually, by the definition of \(A_2\),
  \[
    A_2 \overline{A_2}^{\T} = \begin{pmatrix}
      \rho + \rho ^{2}\cdot (\frac{\theta _{j_0}}{d}) ^{q+1}  & a ^{q+1}\cdot \overline{\theta _{j_0}}- \rho \cdot \overline{\theta _{j_0}} \\
      a ^{q+1} \cdot \theta _{j_0} - \rho \cdot \theta _{j_0} & \rho \cdot\theta _{j_0} ^{q+1} + d ^{q+1}
    \end{pmatrix}.
  \]
  Since
  \[
    \rho+ \rho ^{2}\cdot (\frac{\theta _{j_0}}{d}) ^{q+1} = -\frac{\sigma}{\sigma+\theta _{j_0} ^{q+1}} - \frac{\theta _{j_0}^{q+1}}{\sigma+\theta _{j_0} ^{q+1}} = -1,
  \]
  and
  \[
    \rho \cdot \theta _{j_0} ^{q+1} = - \frac{\sigma \cdot \theta _{j_0} ^{q+1}}{\sigma + \theta _{j_0} ^{q+1}}, \qquad d ^{q+1} = - \frac{\sigma ^{2}}{\sigma + \theta _{j_0} ^{q+1}},
  \]
  it implies that
  \(
  A_2 \overline{A_2} ^{\T} = M_2.
  \)

  Finally, since \(\theta _{j_0}\in \Delta _{k-1}\), \(\GRL _{k}(\ba,\bv,A_2)\) is a NMDS code with minimum distance \(m(q+1)+2-k\) by \zcref{NMDS2}. Hence, \(\GRL _{k}(\ba,\bv,A_2)\) is a Hermitian self-orthogonal NMDS code.

\end{proof}

\begin{rk}
  \begin{enumerate}
    \item Because \(M_2 =\overline{M_2}^{\T}\), \(M_2\) corresponds to a nonsingular Hermitian form. By \cite[Proposition 2.3.1]{Kleidman_1990}, there exists a nonsingular matrix \(P\) over \(\FQQ\) such that \(M_2 = P \overline{P} ^{\T}\). Hence, \(\GRL _{k}(\ba,\bv,P)\) is a Hermitian self-orthogonal code.
    \item 	The nonsingularity of \(A_2\) follows from the fact that \(A_2 \overline{A_2} ^{\T} = M_2\) is nonsingular.
  \end{enumerate}

\end{rk}

\begin{ex}
  Let \(q = 5\), and \(\FQQ = \Fq(\omega)\), where \(\omega ^{2}+2=0\). We set
  \begin{align*}
    \ba & = \left\{1, 2+\omega, 2-\omega, 3+\omega, 3-\omega, 4, \omega, 3+2\omega, 2+2\omega,     \right.        \\
        & \left. 3+3\omega, 2+3\omega, 4\omega, 2\omega, 1-\omega, 4-\omega, 1+\omega, 4+\omega, 3\omega\right\}, \\
    \bv & = \{ 2\omega, 2\omega, 2\omega, 2\omega, 2\omega, 2\omega, 1+2\omega, 1+2\omega, 1+2\omega,             \\
        & 1+2\omega, 1+2\omega, 1+2\omega, 2\omega, 2\omega, 2\omega ,2\omega,2\omega,2\omega\},
  \end{align*}
  and
  \[
    A_2 = \begin{pmatrix}
      1        & 4+\omega \\  \\
      1+\omega & 1
    \end{pmatrix}.
  \]
  Then \(\GRL _{4}(\ba,\bv,A_2)\) is a Hermitian self-orthogonal NMDS code with parameters \([20,4,16] _{25}\).

\end{ex}
\begin{proof}
  Let \(U_6 = \{1, 2+2\omega, 2-\omega, 3+\omega, 3-\omega, 4\}\). Then
  \[
    \ba = U_6 \ \cup \ \omega \cdot U_6 \ \cup \  2\omega \cdot U_6.
  \]
  Let \(I := (c_1,c_2,c_3) = (1,2,3)\) and \(\sigma :=c_1+ c_2+c_3 = 1\).
  Denote \(\xi = 3+\omega\). Then we have
  \[
    \theta_3 :=  \omega \cdot \frac{1-\xi^{3}}{1-\xi} = 2\omega \cdot (1+\xi +\xi^{2}) = 1+\omega.
  \]

  It is easy to see that \(\theta_3\) satisfies the condition \(\sigma +\theta_3 \neq 0\).
  Moreover, the following equalities are presented:
  \[
    \rho:= - \frac{\sigma}{\sigma+\theta_3^{q+1}} = 1, \quad -\frac{\sigma^{2}}{\sigma+ \theta_3^{q+1}} = 1, \quad - \rho \cdot \theta_3^{q} = 4+\omega.
  \]

  By the construction in the proof of \zcref{thm:SO2NMDS}, the code \(\GRL_{4}(\ba,\bv, A_2)\) is a Hermitian self-orthogonal NMDS code with parameters \([20,4,16] _{25}\).
  We also double check its parameters by Magma algebra system \cite{Magma}.
\end{proof}

\begin{thm}\label{thm2:SO2NMDS}
  Let \(q\) be a prime power with \(q\geq  5\), and \(m,k\) integers with \(2\leq m\leq q+1\), and \(3\leq k\leq \frac{q+1}{2}\). Then there exists a \(q ^{2}\)-ary Hermitian self-orthogonal GRL code with parameters
  \[
    [m(q-1)+2, k, m(q-1)+2-k] _{q ^{2}}.
  \]

\end{thm}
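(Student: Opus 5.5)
The plan is to imitate the proof of \zcref{thm:SO2NMDS}, but to build the evaluation set from cosets of \(\Fq^{*}\) in \(\FQQ^{*}\) instead of cosets of \(U_{q+1}\). Let \(n=m(q-1)\). I will pick \(m\le q+1\) distinct coset representatives \(\beta_1,\dots,\beta_m\) of \(\FQQ^{*}/\Fq^{*}\) and put \(\ba=\bigcup_{j=1}^m \beta_j\Fq^{*}=\bigcup_j C_j\). For \(\alpha\in C_j\) we have \(\alpha^{q-1}=\beta_j^{q-1}=:c_j\in U_{q+1}\), so \(\alpha^{r+qs}=c_j^{s}\alpha^{r+s}\). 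In addition, \(\sum_{\alpha\in C_j}\alpha^{t}\) equals \(-\beta_j^{t}\) if \((q-1)\mid t\) and \(0\) otherwise.

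A quick check shows why constant weights on each coset cannot work. Since \(r+s\le 2k-2\le q-1\), with constant weights essentially all the relevant \(S^{(H)}\) vanish, and the matrix in \zcref{thm:SO2} would be singular. So I will twist the weights. On \(C_j\) I set
\[ v_\alpha^{q+1}=\lambda_j(\alpha/\beta_j)^{e}\in\Fq^{*}, \qquad e=q-1-(2k-3)\ge 2, \]
with \(\lambda_j\in\Fq^{*}\) still free. Then \(S^{(H)}_{r+qs}=-\sum_j \lambda_j\beta_j^{-e}c_j^{s}\beta_j^{r+s+e}\,[\,(q-1)\mid r+s+e\,]\).

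For \(0\le r,s\le k-1\) the exponent \(r+s+e\) lies in \([e,q]\). It is divisible by \(q-1\) only when \(r+s=2k-3\), since \(q-1\nmid q\) for \(q\ge 5\). Consequently:
\begin{itemize}
\item condition (1) of \zcref{thm:SO2} holds automatically, because there \(r\le k-3\) and so \(r+s\le 2k-4\);
\item \(S^{(H)}_{(k-2)(1+q)}=S^{(H)}_{(k-1)(1+q)}=0\);
\item \(\mu:=S^{(H)}_{k-2+q(k-1)}=-\sum_j\lambda_j c_j^{k-1}\beta_j^{2k-3}\).
\end{itemize}
I then need \(\lambda_j\in\Fq^{*}\) with \(\mu\neq 0\). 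This is possible for \(m\ge 2\): fix all but one \(\lambda_j\), and at most one value of the remaining one kills \(\mu\), while \(q-1\ge 4\) values are available. The target becomes \(A_2\overline{A_2}^{\T}=-\begin{pmatrix}0&\mu\\ \bar\mu&0\end{pmatrix}\). This is a nonsingular Hermitian (hyperbolic) form, so such an \(A_2\) exists, by the same reasoning as the remark after \zcref{thm:SO2NMDS}.

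For NMDS I use \zcref{NMDS2} and will arrange \(c=a\theta\) with \(\theta\in\Delta_{k-1}\). Inside a single coset \(C_{j}\), the \((k-1)\)-subset sums are \(\beta_j\) times the \((k-1)\)-subset sums of \(\Fq^{*}\). By \zcref{lem:sum2}, since \(k-1<q-1\), these cover all of \(\beta_j\Fq\). Hence \(\Delta_{k-1}\supseteq\bigcup_j\beta_j\Fq\), which is a large supply of admissible \(\theta\). I will take
\[ A_2=\begin{pmatrix}1&b\\ \theta&d\end{pmatrix}. \]
The requirements are then
\[ N(b)=-1,\qquad N(\theta)+N(d)=0,\qquad \bar\theta+b\bar d=-\mu. \]

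The main obstacle is solving these three equations simultaneously. I plan to fix \(\theta\in\beta_1\Fq^{*}\) and choose \(b\) of norm \(-1\). The third equation then forces \(d=\overline{(-\mu-\bar\theta)/b}\), and the remaining equation \(N(\mu+\bar\theta)=N(\theta)\) becomes a constraint on \(\mu\). I will meet it by rescaling: replacing all \(\lambda_j\) by \(t\lambda_j\) with \(t\in\Fq^{*}\) scales \(\mu\) by \(t\). I will also vary \(\theta\) over \(\beta_j\Fq^{*}\) and over \(j\), which changes \(\theta\) by \(\Fq^{*}\)-multiples and by different cosets. The aim is a pair \((t,\theta)\) with \(N(t\mu+\bar\theta)=N(\theta)\). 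Expanding, this means \(t^{2}N(\mu)+t\,\mathrm{Tr}(\mu\theta)=0\), i.e. \(t=-\mathrm{Tr}(\mu\theta)/N(\mu)\). So it suffices to pick \(\theta\) with \(\mathrm{Tr}(\mu\theta)\neq0\). Such a \(\theta\) exists because \(\theta\) ranges over several \(\Fq\)-lines, and the trace form cannot vanish on two independent lines.

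Finally, nonsingularity of \(A_2\) follows from nonsingularity of \(A_2\overline{A_2}^{\T}\). \zcref{NMDS2} then gives minimum distance \(n+2-k\), and \zcref{thm:SO2} gives Hermitian self-orthogonality. I will also verify that the bound \(k\le\frac{q+1}{2}\) is exactly what keeps \(e\ge 2\) and \(2k-2\le q-1\).
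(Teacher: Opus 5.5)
Your proposal is correct and is essentially the paper's proof. It uses the same evaluation set $\bigcup_j\omega_j\Fq^{*}$ and the same twisted norms $N(v_{j,x})=u_jx^{q-2k+2}$, so that only $S^{(H)}_{k-2+q(k-1)}$ survives, and the same choice of $\theta=\omega_{j_0}$ off the kernel of $y\mapsto\mathrm{Tr}(\mu y)$; you normalize $a=1$ and rescale the weights by $t$ (your $\mu$ is the negative of the paper's), where the paper instead keeps the weights and takes $N(a)=\rho=N(\mu)/\mathrm{Tr}(\mu\theta)$. One small slip: $k\le\frac{q+1}{2}$ only guarantees $e\ge 1$ (with $e=1$ when $q$ is odd and $k=\frac{q+1}{2}$), but $e\ge 1$ is all your divisibility argument actually uses.
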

\begin{proof}
  Let \(n  = m(q-1)\), and \(R = \{\omega_1 = 1,\cdots ,\omega _{m}, \omega _{m+1},\cdots ,\omega _{q+1}\}\) be a representative set of \(\FQQ ^{*} /\Fq ^{*}\) in \(\FQQ ^{*}\). We choose \(n\) distinct elements in \(\FQQ\) as follows:
  \begin{align*}
    \ba & := \omega_1 \cdot \Fq ^{*} \ \cup \ \omega_2 \cdot \Fq ^{*} \ \cup \ \cdots \ \cup \ \omega_m \cdot \Fq ^{*} \\
        & \triangleq \left\{\alpha _{j,x} = \omega _{j} \cdot x: 1\leq j \leq m, x\in \Fq ^{*}\right\}.
  \end{align*}
  We choose \(u_1,u_2,\cdots ,u_m \in \Fq ^{*}\) (not necessarily distinct) such that
  \[
    \mu : = \sum _{j=1}^{m} u_j \omega _{j} ^{k-2+q(k-1)} \neq 0.
  \]
  We can always find such \(u_j\). If not, we arbitrarily choose \((u_j) _{j=1}^{m}\) and \(u_1 '\in \Fq ^{*}\) such that \(u_1'\neq u_1\), and \[
    \sum _{j=1}^{m} u_j \omega _{j} ^{k-2+q(k-1)} = \sum _{j=2}^{m} u_j \omega _{j} ^{k-2+q(k-1)} + u_1' \omega _{1} ^{k-2+q(k-1)} =0.
  \]
  It implies \(u_1 = u_1 '\), a contradiction. For each \(\alpha _{j,x}\), we choose \(v _{j,x} \in \FQQ ^{*}\) such that \(N(v _{j,x})=u_j \cdot x ^{E}\), where \(E = q-2k+2\). Then we define \(\bv\) as the vector with coordinate \(v _{j,x}\) for \(\alpha _{j,x}\in \ba\).

  Let \(f(y) = \mu y + \overline{\mu y}\) be a \(\Fq\)-linear map from \(\FQQ\) to \(\Fq\), and \(j_0\) an integer with \(1\leq j_0\leq m\) such that \(f(\omega _{j_0})\neq 0\). If there not exists such \(j_0\), then \(f(\omega_1) = f(\omega_2)=0\), which follows that \(\omega_1\) and \(\omega_2\) belong to the same coset of \(\Fq ^{*}\) in \(\FQQ ^{*}\), a contradiction. We set \(\theta:= \omega _{j_0}\), \(\tau: = f(\theta)\), and \(\rho = N(\mu)/\tau\). Choose \(a,d \in \FQQ ^{*}\) such that
  \[
    N(a) = \rho, \qquad N(d) = -\rho N(\theta).
  \]
  Then we define
  \[
    A_2  = \begin{pmatrix}
      a              & \frac{\mu - \rho \cdot \overline{\theta}}{\overline{d}} \\
      \theta \cdot a & d
    \end{pmatrix}.
  \]

  We claim that \(\GRL _{k}(\ba,\bv,A_2)\) is a Hermitian self-orthogonal NMDS code. For \(0\leq r\leq k-3\), and \(0\leq s\leq k-1\), we have
  \[
    S _{r+qs} ^{(H)} = \left( \sum _{j=1} ^{m} u_j \cdot \omega _{j} ^{r+qs}\right) \cdot \left( \sum _{x\in \Fq ^{*}} x ^{r+s+E}\right).
  \]
  As \(1\leq  r+s+E \leq q-2\), we have \(\sum _{x\in \Fq ^{*}} x ^{r+s+E} = 0\). Hence, the first condition in \zcref{thm:SO2} holds.

  Moreover,
  \begin{align*}
    S _{(k-2)(1+q)} ^{(H)}  & = \left(\sum _{j=1} ^{m} u_j \cdot \omega _{j} ^{(k-2)(1+q)}\right) \cdot \left( \sum _{x\in \Fq ^{*}}x ^{q-2}\right) = 0,    \\
    S _{(k-1)(1+q)} ^{(H)}  & = \left(\sum _{j=1} ^{m} u_j \cdot\omega _{j} ^{(k-1)(1+q)}\right) \cdot \left( \sum _{x \in \Fq ^{*}} x \right) = 0,         \\
    S _{k-2 +q(k-1)} ^{(H)} & = \left( \sum _{j=1} ^{m} u_j \cdot \omega _{j} ^{k-2+q(k-1)}\right) \cdot \left(\sum _{x\in \Fq ^{*}}x ^{q-1}\right) = -\mu.
  \end{align*}
  And
  \begin{align*}
    A_2 \overline{A_2} ^{\T} & = \begin{pmatrix}
                                   N(a) + \frac{N(\mu - \rho \cdot \overline{\theta})}{N(d)}         & N(a) \cdot \theta + \mu - \rho \cdot \overline{\theta} \\
                                   N(a) \cdot \overline{\theta} + \overline{\mu} - \rho \cdot \theta & N(\theta \cdot a) + N(d)
                                 \end{pmatrix}.
  \end{align*}
  Since \[N(\mu- \rho \cdot \overline{\theta}) = N(\mu) - (\rho \overline{\theta} \overline{\mu}+ \mu \overline{\rho}\theta)+N(\rho \overline{\theta}) = \tau \rho - \rho \cdot f(\theta) - \rho N(d),\]
  and
  \[
    N(a) \cdot \theta + \mu -\rho \cdot \overline{\theta} = \mu,
  \]
  we have
  \[
    A_2 \overline{A_2} ^{\T} = \begin{pmatrix}
      0              & \mu \\
      \overline{\mu} & 0
    \end{pmatrix}.
  \]
  Hence, the second condition in \zcref{thm:SO2} holds.

  Finally, by \zcref{lem:sum2} and \(k-1<q-1\), there exists \(k-1\) distinct elements \(x_1,\cdots ,x _{k-1} \in\Fq ^{*}\) such that \(\sum _{i=1} ^{k-1} x_i  = 1\). Hence, \(\theta = \omega _{j_0}\in \Delta _{k-1}\). By \zcref{NMDS2}, the minimum distance of \(\GRL _{k}(\ba,\bv,A_2)\) is \(m(q-1)+2-k\). Therefore, \(\GRL _{k}(\ba,\bv,A_2)\) is a Hermitian self-orthogonal NMDS code.

\end{proof}

\begin{ex}
  Let \(q=5\) and \(\FQQ = \Fq(z)\) with \(z ^{2}+2 = 0\). We set
  \[
    \ba :=\{1,2,3,4, z,2z,3z,4z\},\quad \bv :=(1,z,1+z,2,1,z,1+z,2),
  \]
  and
  \[
    A_2 = \begin{pmatrix}
      z & 1+4z \\
      z & 1+z
    \end{pmatrix}.
  \]
  Then the GRL code \(\GRL _{3}(\ba, \bv, A_2)\) is a Hermitian self-orthogonal NMDS code with parameters \([10,3,7] _{25}\).

\end{ex}
\begin{proof}
  Let \(m = 2\). We choose
  \[
    \omega_1 =1, \ \omega_2 = z, \ u_1=u_2 = 1.
  \]
  Then
  \[
    \mu = \omega_1 ^{11}+ \omega_2 ^{11} = 1+3 z \neq 0.
  \]
  Take \(\theta =\omega_1 =  1\), then \(\tau = 2\), and \(\rho = 2\). Let
  \[
    a = z, d=1+z.
  \]
  By the proof of the above theorem, the code \(\GRL _{3}(\ba,\bv,A_2)\) is a Hermitian self-orthogonal code with parameters \([10,3,7] _{25}\).

\end{proof}

The following table summarizes the known constructions of Hermitian self orthogonal NMDS codes over \( \FQQ\), including the two families of our constructions. It is worth noting that the parameters of the two families of Hermitian self-orthogonal NMDS codes constructed in this paper have flexible choices of length (always larger than \(q\)) and dimension which are not covered by the known constructions.
\begin{table}[H]
  \centering
  \small
  \renewcommand{\arraystretch}{1.1}
  \setlength{\tabcolsep}{5pt}
  \caption{Comparison of known constructions of Hermitian self-orthogonal NMDS codes}
  \label{tab:comparison-hermitian-self-orthogonal-nmds}

  \begin{tabularx}{\textwidth}{
    >{\centering\arraybackslash}m{1.8cm}
    >{\centering\arraybackslash}m{4.2cm}
    >{\centering\arraybackslash}m{2.4cm}
    >{\raggedright\arraybackslash}X}
    \toprule
    Field & Parameters                                                                                                                     & Reference & Construction method \\
    \midrule

    $\mathbb{F}_{9}$
          & \makecell[c]{$[10,5,5],$                                                                                                                                         \\ $[12,6,6]$}
          & \cite{2014NMDS}
          & Computer search                                                                                                                                                  \\
    \midrule

    $\mathbb{F}_{25}$
          & \makecell[c]{$[10,5,5],$                                                                                                                                         \\ $[12,6,6],$ \\ $[14,7,7]$}
          & \cite{2014NMDS}
          & Computer search                                                                                                                                                  \\
    \midrule

    $\mathbb{F}_{121}$
          & \makecell[c]{$[6,3,3],$                                                                                                                                          \\ $[8,4,4]$, \\ $[10,5,5]$}
          & \cite{2014NMDS}
          & Computer search                                                                                                                                                  \\
    \midrule

    $\mathbb{F}_{q^2}$
          & $[2k,k,k]$
          & \cite{Guo_2022}
          & Constructed from $(+)$-GTRS codes with $2\leq 2k \le q$                                                                                                          \\
    \midrule

    $\mathbb{F}_{q^2}$
          & $[2k,k,k]$
          & \cite{Zhu_2025}
          & Constructed from $(+)$-TGRS codes and $(*)$-TGRS codes with $2\leq 2k \le q$                                                                                     \\
    \midrule

    $\mathbb{F}_{q^2}$
          & \makecell[c]{$[m(q+1)+2,\,k,$                                                                                                                                    \\ $m(q+1)+2-k]$}
          & \zcref{thm:SO2NMDS}
          & Constructed from GRL codes with parameters $[m(q+1)+2,k]$, where $q \ge 4$, $2 \le m \le q-2$, and $3 \le k \le m+1$                                             \\
    \midrule

    $\mathbb{F}_{q^2}$
          & \makecell[c]{$[m(q-1)+2,\,k,$                                                                                                                                    \\ $m(q-1)+2-k]$}
          & \zcref{thm2:SO2NMDS}
          & Constructed from GRL codes with parameters $[m(q-1)+2,k]$, where $q \ge 5$, $2 \le m \le q+1$, and $3 \le k \le \frac{q+1}{2}$                                   \\
    \bottomrule
  \end{tabularx}
\end{table}

\subsection{s=3 cases}

Now, we turn to the more complex cases with \(s=3\). We first present a lemma which is crucial for the construction of Hermitian self-orthogonal GRL codes with \(s=3\).

\begin{lemma}\label{solution2}
  Let \(q\) be a prime power with \(q\geq 5\), and \(m\) an integer with \(2\leq  m\leq q-3\). Then there exists a \(m\)-subset \(C = \{c_1,c_2,\cdots ,c_m\}\subset \Fq ^{*}\) such that
  \begin{align*}
    \sigma_1(C) & = -\sum _{j=1} ^{m} c_j \neq 0,              \\
    \sigma_2(C) & = \sum _{1\leq  i< j \leq m} c_i c_j \neq 0.
  \end{align*}
\end{lemma}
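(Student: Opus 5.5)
We need to find, for every $q\ge 5$ and $2\le m\le q-3$, an $m$-subset $C\subset\Fq^{*}$ with nonzero power sum $e_1=\sum_{j}c_j$ and nonzero second elementary symmetric function $e_2=\sum_{i<j}c_ic_j$. The plan is to first fix a subset $C_0$ of size $m-1$, and then choose the last element $x\in\Fq^{*}\setminus C_0$ so that both conditions hold.

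Write $s=\sum_{c\in C_0}c$ and $t=\sigma_2(C_0)$. Adding $x$ to $C_0$ gives
\[
  \sum_{c\in C} c = s+x, \qquad \sigma_2(C)=t+sx.
\]
The first quantity vanishes for at most one value of $x$, namely $x=-s$. The second vanishes for at most one value of $x$ when $s\neq 0$, namely $x=-t/s$. When $s=0$ it becomes the constant $t$.

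\textbf{Counting admissible $x$.} The number of admissible $x\in\Fq^{*}\setminus C_0$ is $q-1-(m-1)=q-m\ge 3$, using $m\le q-3$. Assume $s\neq 0$. We exclude at most the two values $-s$ and $-t/s$, so at least one choice of $x$ survives. (The value $-t/s$ may be $0$ or already lie in $C_0$; in that case even fewer values are excluded.) So the whole proof reduces to choosing $C_0$ with $s\neq 0$. The case $s=0$ with $t\neq 0$ would also work, but we do not need it.

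\textbf{Choosing $C_0$.} This is the step that needs care, since we need an $(m-1)$-subset of $\Fq^{*}$ with nonzero sum and $1\le m-1\le q-4$.
\begin{itemize}
  \item For $m-1=1$, take $C_0=\{1\}$.
  \item In general, \zcref{lem:sum2} says $N(t,\delta,\Fq^{*})>0$ for every $\delta$ whenever $t\neq q-1$. Since $m-1\le q-4$, we get an $(m-1)$-subset with sum $1$.
  \item Alternatively, avoiding \zcref{lem:sum2}: take any $(m-1)$-subset. If its sum is $0$, swap one element $c$ for some $c'\notin C_0$. Such $c'$ exists because $m-1<q-1$, and the new sum is $c'-c\neq 0$.
\end{itemize}

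So the only potential obstacle is the degenerate case $s=0$, and it is sidestepped by this preliminary choice. The hypothesis $m\le q-3$ is exactly what leaves three candidates for $x$ against at most two forbidden values. The condition $q\ge 5$ only ensures the range $2\le m\le q-3$ is nonempty.
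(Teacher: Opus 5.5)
Your proof is correct and uses the same mechanism as the paper: adjoining $x$ to a set with sum $s\neq 0$ and $\sigma_2=t$ gives sum $s+x$ and $\sigma_2=t+sx$, so at most two of the $q-m\ge 3$ candidates are excluded. The only difference is that the paper iterates this step from the explicit base $\{\zeta,\zeta^{2}\}$ ($\zeta$ primitive), keeping both conditions at every stage, while you apply it once to an $(m-1)$-subset that only needs a nonzero sum. For that preliminary choice your swap argument is the safer justification, since \zcref{lem:sum2} as quoted is false for $\delta=0$ (e.g.\ $N(1,0,\Fq^{*})=0$); your use of it with $\delta=1$ is nevertheless valid.
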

\begin{proof}
  If \(m=2\), let \(\zeta\) be a primitive element of \(\Fq ^{*}\), and choose \(c_1=\zeta, c_2  = \zeta ^{2}\). Then we have
  \[
    \sigma_1(C)= - \zeta (1+\zeta)\neq 0, \quad \sigma_2(C) = \zeta ^{3} \neq 0.
  \]
  Assume we have found a valid \(i-1\) subset \(C _{i-1}\) for some \(3\leq  i \leq q-4\), which satisfies \(\sigma_1(C _{i-1})\neq 0\) and \(\sigma_2(C _{i-1})\neq 0\). Suppose \(x\in \Fq ^{*}\backslash C _{i-1}\) such that
  \[
    \sigma_1 - x = 0, \quad \text{or} \quad \sigma_2 - x \cdot \sigma_1  = 0,
  \]
  then at most two such \(x\) exits. Since \(q - (i-1) - 2 \geq  q-m-2\geq  1\), we can always find an element \(c_i \in \Fq ^{*}\) such that
  \[
    \sigma_1(C _{i-1}) - c_i \neq 0, \quad \text{and} \quad \sigma_2(C _{i-1}) - c_i \cdot \sigma_1(C _{i-1}) + c_i ^{2} \neq 0.
  \]
  Hence, we just let \(C_i = C _{i-1} \cup \{c_i\}\), and the lemma follows by induction.
\end{proof}

Now, following the construction methods in \zcref{thm:SO2NMDS} and \zcref{thm2:SO2NMDS}, we construct Hermitian self-orthogonal GRL codes for s=3 in \zcref{HSO_s3} and \zcref{HSO_s3_2}, respectively. But it cannot be determined whether these codes are NMDS or not, since the complexity conditions in \zcref{thm:SO3} are not easy to check for a family of codes. Relatively, we provided a lower bound of the minimum distance of these codes.

\begin{lemma}\label{Dist_s3}
  Let \(q\) be a prime power, \(A_3\) a nonsingular \(3 \times 3\) matrix over \(\FQQ\), \(\ba\) an \(n\)-subset of \(\FQQ\), and \(\bv\) a vector in \((\FQQ ^{*}) ^{n}\). Then the minimum distance of \(\GRL _{k}(\ba,\bv,A_3)\) is at least \(n-k+2\).
\end{lemma}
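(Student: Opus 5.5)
The plan is to use the usual description of the minimum distance through the columns of a generator matrix. Let \(G_{\GRL}\) be the generator matrix of \(\GRL_k(\ba,\bv,A_3)\), with length \(N=n+3\) and columns \(g_1,\dots,g_n,g_{n+1},g_{n+2},g_{n+3}\) as in the proof of \zcref{NMDS3}. A nonzero codeword \(\mathbf{x}G_{\GRL}\) vanishes exactly on a set \(T\) of columns for which \(\mathbf{x}\) is orthogonal to every column in \(T\). Such an \(\mathbf{x}\ne 0\) exists if and only if the columns in \(T\) have rank at most \(k-1\). Hence
\[
d=N-\max\{|T| : \operatorname{rank}(T)\le k-1\}.
\]
The bound \(d\ge n-k+2=N-(k+1)\) is therefore equivalent to the following claim: \emph{any \(k+2\) columns of \(G_{\GRL}\) have rank \(k\)}. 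This is what I will prove.

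Take any set \(S\) of \(k+2\) columns and let \(t=|S\cap\{g_{n+1},g_{n+2},g_{n+3}\}|\le 3\). Then \(S\) contains at least \(k+2-t\ge k-1\) of the GRS columns \(g_i\), \(1\le i\le n\).

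If \(t\le 2\), then \(S\) contains at least \(k\) GRS columns. Any \(k\) of them form a matrix of the shape \eqref{eq:MGRS}, whose determinant is \(\prod v_{i_l}\) times a nonzero Vandermonde determinant. So \(\operatorname{rank}(S)=k\) immediately.

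The only real case is \(t=3\), where \(S=\{g_{i_1},\dots,g_{i_{k-1}},g_{n+1},g_{n+2},g_{n+3}\}\).
\begin{itemize}
\item Let \(W\subset\Fq^{k}\) (over \(\FQQ\) in the Hermitian setting; the argument is field-independent) be the subspace of vectors whose first \(k-3\) coordinates are zero. Since \(A_3\) is nonsingular, the three special columns span all of \(W\), so \(\dim W=3\).
\item Let \(V=\langle g_{i_1},\dots,g_{i_{k-1}}\rangle\). These \(k-1\) columns are linearly independent, so \(\dim V=k-1\).
\item \(V\cap W\) is the image of the kernel of the map \(\mathbf{c}\mapsto\) (first \(k-3\) coordinates of \(\sum c_l g_{i_l}\)). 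That map is given by a \((k-3)\times(k-1)\) matrix of the form \eqref{eq:MGRS}, of full rank \(k-3\). Hence \(\dim(V\cap W)=2\).
\item Therefore \(W\not\subset V\), and \(\dim(V+W)\ge (k-1)+3-2=k\). So \(\operatorname{rank}(S)=k\).
\end{itemize}
When \(k=3\) there are no leading rows, \(V\) is \(2\)-dimensional and \(W\) is the whole space, so the same conclusion holds trivially. We always have \(k\ge s=3\) by \zcref{def:GRL}.

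The only step needing care is the \(t=3\) dimension count, specifically \(\dim(V\cap W)=2\); this follows from the full rank of the truncated Vandermonde block, exactly as in the \(t_{k-1}=3\) case of \zcref{NMDS3}. Combining the cases, every \(k+2\) columns have rank \(k\), so no nonzero codeword has more than \(k+1\) zeros, which gives \(d\ge n+3-(k+1)=n-k+2\).
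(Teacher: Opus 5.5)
Your proof is correct, but it takes a different route from the paper.

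The paper works directly with a nonzero codeword $c=(c_1,c_2)=\mathbf{m}G_{\GRL}$. The first block $c_1=\mathbf{m}G_k$ is a nonzero codeword of the MDS code $\GRS_k(\ba,\bv)$, so it has weight at least $n-k+1$. If that weight is exactly $n-k+1$, then $(m_{k-2},m_{k-1},m_k)\neq(0,0,0)$, since otherwise $c_1$ would lie in $\GRS_{k-3}(\ba,\bv)$ and have weight at least $n-k+4$. Hence $c_2=(m_{k-2},m_{k-1},m_k)A_3\neq 0$, because $A_3$ is nonsingular.

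You instead pass to the column-rank description of the minimum distance, the same viewpoint as \zcref{lem:NMDS} and the proof of \zcref{NMDS3}. You then show that every $k+2$ columns of $G_{\GRL}$ have rank $k$. The reduction, the case $t\le 2$, and the count $\dim(V\cap W)=2$ for $t=3$ are all sound.

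The paper's argument is shorter, needs no case split on $t$, and carries over unchanged to any nonsingular $A_s$. Yours fits the column-based framework of Section~3 and isolates $t=3$ as the only configuration that needs thought.

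That case can be shortened further. Take any $k-3$ of the GRS columns together with the three special columns. The resulting $k\times k$ matrix is block lower triangular, and its determinant is $\det(A_3)$ times a nonzero GRS minor. This is the same device used for the $t=3$ cases in the proof of \zcref{NMDS3}, and it makes the dimension count unnecessary.
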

\begin{proof}
  Let \(c = (c_1,c_2)\) be a nonzero codeword in \(\GRL _{k}(\ba,\bv,A_3)\). We consider the following two cases:

  \textbf{Case 1}: If \(\wt(c_1) \geq n-k+2\), then \(\wt(c) \geq n-k+2\) immediately.

  \textbf{Case 2}: Now, suppose \(\wt(c_1) = n-k+1\). In this case, there exists a vector \(\mathbf{m} = (m_1,\dots,m_k) \in \FQQ ^{k}\) such that \(c = \mathbf{m} G_{\GRL}\).
  If \((m _{k-2}, m _{k-1}, m _{k}) = (0,0, 0)\), then \(c\) is a linear combination of the first \(k-2\) rows of the generator matrix \( G _{\GRL}\) of \( \GRL _{k}(\ba,\bv, A_3)\). By the definition of \(\GRL_k\), this implies that \(\wt(c_1) \geq n-(k-3)+1 = n-k+4\), which contradicts our assumption that \(\wt(c_1) = n-k+1\). Therefore, we must have \((m _{k-2},m _{k-1}, m _{k}) \neq (0,0, 0)\).
  Since \(A_3\) is nonsingular and \((m _{k-2}, m _{k-1}, m _{k}) \neq (0,0,0)\), the construction of the GRL codes implies that \(c_2\) is not the zero vector.
  This implies \(\wt(c_2) \geq 1\). Consequently, \(\wt(c) = \wt(c_1) + \wt(c_2) \geq (n-k+1) + 1 = n-k+2\).
\end{proof}

\begin{thm}\label{HSO_s3}
  Let \(q \geq 5\), \(2\leq m\leq  q-3\), and \(4\leq k\leq m+2\). Then there exists a \(q ^{2}\)-ary Hermitian self-orthogonal GRL code with parameters
  \[
    [   m(q+1)+3, k , \geq m(q+1)-k+2] _{q ^{2}}.
  \]

\end{thm}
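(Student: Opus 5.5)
We mirror the construction of \zcref{thm:SO2NMDS}, now with three extra columns, and then appeal to \zcref{thm:SO3} and \zcref{Dist_s3}. Put \(n=m(q+1)\). Take an \(m\)-subset \(C=\{c_1,\dots,c_m\}\subset\Fq^{*}\), using \zcref{solution2} as the starting point for its choice. Choose \(\beta_j\in\FQQ^{*}\) with \(\beta_j^{q+1}=c_j\), and let \(\ba=\bigcup_{j=1}^m \beta_j U_{q+1}\), where the cosets \(C_j:=\beta_jU_{q+1}\) are pairwise disjoint. For \(\alpha\in C_j\), choose \(v_\alpha\) with
\[
v_\alpha^{q+1}=\frac{c_j^{\,m-k+2}}{D_j},\qquad D_j=\prod_{\ell\neq j}(c_j-c_\ell).
\]
The exponent \(m-k+2\) replaces the exponent \(m-k+1\) used in the \(s=2\) case.

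First we verify condition (1) of \zcref{thm:SO3}. As in the \(s=2\) proof,
\[
S^{(H)}_{r+qs}=\sum_j \frac{c_j^{\,m-k+2+s}}{D_j}\sum_{\alpha\in C_j}\alpha^{r-s}.
\]
Since \(|r-s|\le k-1\le m+1\le q-2\), the inner sum vanishes unless \(r=s\). When \(r=s\le k-4\), the exponent satisfies \(m-k+2+r\le m-2\), so the sum is \(0\) by \((\star)\), that is, by \zcref{solution}.

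Next we compute the \(3\times 3\) matrix in condition (2). All off-diagonal entries have the form \(S^{(H)}_{t(1+q)+q}\) or \(S^{(H)}_{t(1+q)+2q}\), where \(r\neq s\); hence they vanish. The diagonal entries are
\[
S^{(H)}_{(k-3)(1+q)}=\sum_j\frac{c_j^{m-1}}{D_j}=1,\qquad S^{(H)}_{(k-2)(1+q)}=\sum_j \frac{c_j^{m}}{D_j}=h_1,\qquad S^{(H)}_{(k-1)(1+q)}=\sum_j\frac{c_j^{m+1}}{D_j}=h_2.
\]
Here \(h_1=\sum c_j=-\sigma_1(C)\), and \(h_2=\gamma_{m-1}(2)=\sum c_i^2+\sum_{i<j}c_ic_j=\sigma_1(C)^2-\sigma_2(C)\), by \zcref{solution} and the remark following it. The factor \(q+1\equiv 1\) appears as in the \(s=2\) case. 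So we need \(A_3\overline{A_3}^{\T}=M_3:=-\mathrm{diag}(1,h_1,h_2)\), which is a diagonal Hermitian matrix with entries in \(\Fq\).

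If \(M_3\) is nonsingular, then by \cite[Proposition 2.3.1]{Kleidman_1990} (as in the remark after \zcref{thm:SO2NMDS}) there is a nonsingular \(P\) with \(P\overline{P}^{\T}=M_3\). Alternatively, and explicitly, we may take \(A_3=\mathrm{diag}(a_1,a_2,a_3)\) with \(N(a_1)=-1\), \(N(a_2)=-h_1\), \(N(a_3)=-h_2\); this is possible because the norm map \(\FQQ^{*}\to\Fq^{*}\) is surjective. Then \(\GRL_k(\ba,\bv,A_3)\) is Hermitian self-orthogonal by \zcref{thm:SO3}. It has length \(n+3=m(q+1)+3\), and by \zcref{Dist_s3} its minimum distance is at least \(n-k+2=m(q+1)-k+2\).

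The main obstacle is nonsingularity, namely ensuring \(h_1\neq0\) and \(h_2\neq0\) simultaneously. \zcref{solution2} as stated gives \(\sigma_1(C)\neq0\) and \(\sigma_2(C)\neq0\), whereas we need \(\sigma_1(C)\neq 0\) and \(\sigma_1(C)^2-\sigma_2(C)\neq 0\). I would redo its greedy induction with this target. Adding a new element \(x\) changes \(e_1\mapsto e_1+x\) and \(h_2\mapsto h_2+xe_1+x^2\). Each of these is a nonzero polynomial in \(x\), so at most \(1+2=3\) values of \(x\) are forbidden. With \(|\Fq^{*}\setminus C_{i-1}|\ge q-m\ge 3\), this count is tight. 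I would therefore first try to start the induction from a suitable pair such as \(\{1,\zeta\}\), and spend the slack from \(m\le q-3\) on the final steps, checking small cases separately if needed. If this stalls, the fallback is to use \zcref{solution2} directly with \(c_j\) replaced by a rescaled set, or to adjust the weight exponent so that the relevant diagonal entries become \(e_1\) and \(e_2\) themselves.
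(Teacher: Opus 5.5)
Your construction (the cosets $\beta_jU_{q+1}$, the weights $N(v_\alpha)=c_j^{m-k+2}/D_j$, and the diagonal $A_3$ with norms $-1,-h_1,-h_2$) matches the paper's proof exactly. So do your verification of both conditions of \zcref{thm:SO3} and the distance bound via \zcref{Dist_s3}.

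The gap is the one you flagged, and it is genuine. You never show that an $m$-subset $C\subset\Fq^{*}$ with $h_1=\sum_j c_j\neq0$ and $h_2=\sigma_1(C)^2-\sigma_2(C)\neq0$ exists for every $2\le m\le q-3$. Without this, $M_3$ may be singular, and then no nonsingular $A_3$ with $A_3\overline{A_3}^{\T}=M_3$ exists. Your greedy step forbids up to three values of $x$ among the $q-i$ elements of $\Fq^{*}\setminus C_{i-1}$, so it only works while $i\le q-4$. The endpoint $m=q-3$, which the statement includes, is left open, and there is no slack to spend at that step. Your fallbacks do not close it either. Since $h_1,h_2$ are homogeneous, rescaling $C$ changes nothing. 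Changing the monomial exponent in the weights only replaces the diagonal $(h_0,h_1,h_2)$ by another window of complete homogeneous sums, never by $e_1,e_2$.

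The missing idea, which is how the paper argues, is to pass to the complement $A=\Fq^{*}\setminus C$, which has $q-1-m$ elements. Compare the coefficients of $y^{q-2}$ and $y^{q-3}$ in $\prod_{x\in A}(y-x)\prod_{x\in C}(y-x)=y^{q-1}-1$, which is valid since $q\ge5$. This gives $\sigma_1(C)=-\sigma_1(A)$ and $\sigma_2(C)=\sigma_1(A)^2-\sigma_2(A)$, hence $h_1=\sigma_1(A)$ and $h_2=\sigma_2(A)$. Equivalently, $h_j(C)=(-1)^je_j(A)$ for $j<q-1$, from $\prod_{x\in\Fq^{*}}(1-xy)=1-y^{q-1}$. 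The condition $2\le q-1-m\le q-3$ is exactly $2\le m\le q-3$, so \zcref{solution2} applied to $A$ produces the required $C$. Thus \zcref{solution2} is the right tool, just used on the complement rather than on $C$.

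The same identity also rescues your greedy scheme at its last step. When three elements remain, every choice of $x$ leaves a complementary pair $\{r,r'\}$. Then $h_2=rr'\neq0$ automatically, and $h_1=-(r+r')$ vanishes for at most one of the three pairs.
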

\begin{proof}
  Let \(C_m = \{c_1,c_2,\cdots ,c_m\} \subset \Fq ^{*}\) be an \(m\)-subset and define
  \[
    D_j :=\prod _{\ell =1, \ell \neq j} ^{m} c_j - c_{\ell},\quad  \Sigma:=\sum _{j=1} ^{m}c_j =  - \sigma_1(C_m),\quad  \Lambda = \sum _{j=1} ^{m} \frac{c_j ^{m+1}}{D_j}, \quad 1\leq j \leq m.
  \]
  We claim that there exists \(C_m\) such that
  \(
  \Sigma \neq 0,  \Lambda \neq 0.
  \)
  By \zcref{solution}, we have
  \[
    \Lambda = \sum _{j=1} ^{m} c_j ^{2} + \sum _{1\leq i < j \leq m} c_i c_j = \sigma_1(C_m) ^{2} - \sigma_2(C_m).
  \]
  Let \(A_m: = \Fq ^{*} \backslash C_m\). As  \(\sum _{x \in \Fq ^{*}} x = 0\), we have
  \[
    \Sigma = - \sigma_1(C_m) \neq 0 \iff \sigma_1(A_m) \neq 0.
  \]
  At the same time, consider the coefficient of \(y ^{q-3}\) in the following two polynomials:
  \[
    \prod _{x_a \in A_m}(y-x_a)\prod _{x_c \in C_m} (y-x_c) = y ^{q-1} - 1,
  \]
  which implies
  \[
    \sigma_2(C_m) = \sigma_1(A_m) ^{2} - \sigma_2(A_m).
  \]
  Then we have \[
    \Lambda \neq 0 \iff \sigma_2(A_m) \neq 0 .
  \]
  Therefore, to prove the claim, we just need to find a \((q-1-m)\)-subset \(A_m\) such that \(\sigma_1(A_m)\neq 0\) and \(\sigma_2(A_m)\neq 0\). By \zcref{solution2}, such \(A_m\) always exists.

  For each \(j\), let
  \[
    \mathcal{C}_j = \{ \alpha \in \FQQ ^{*} : \alpha ^{q+1} = c_j\},
  \]
  and
  \(
  \ba := \bigcup _{j=1} ^{m} \mathcal{C}_j.
  \) Thus \(|\ba| = m(q+1)\). For any \(\alpha \in \mathcal{C} _{j}\), we choose \(v_{\alpha} \in \FQQ ^{*}\) such that \(N(v_\alpha) = \frac{c_j ^{m-k+2}}{D_j}\), and choose \(a,b,c \in \FQQ ^{*}\) with \(N(a) = -1\), \(N(b) = - \Sigma\), and \(N(c) = - \Lambda\). We define
  \[
    A_3 = \begin{pmatrix}
      a & 0 & 0 \\
      0 & b & 0 \\
      0 & 0 & c
    \end{pmatrix}.
  \]
  Then we claim that the GRL code \(\GRL _{k}(\ba,\bv, A_3)\) is a Hermitian self-orthogonal code, where \(\bv\) is the vector with coordinate \(v_{\alpha}\) for \(\alpha \in \ba\).

  To show Hermitian self-orthogonal property, we first check the first condition in \zcref{thm:SO3}. For \(0\leq r\leq k-4\), and \(0\leq s \leq k-1\),
  \[
    S _{r+qs} ^{(H)} = \sum _{j=1} ^{m} \sum _{\alpha \in \mathcal{C} _{j}} N(v _{\alpha})\cdot \alpha ^{r+qs} = \sum _{j=1} ^{m} \frac{c_j ^{m-k+2+s}}{D_j}\cdot \sum _{\alpha \in \mathcal{C} _{j}} \alpha ^{r-s}.
  \]
  Since \( |r-s| \leq k-1\leq m+2 < q+1\), \((q+1)|(r-s)\) if and only if \(r =s\). Hence, \(S _{r+qs} ^{(H)} = 0\) for all \(r\neq s\), since \(\mathcal{C} _{j}\) is a coset of the \(q+1\)-th roots of unity. On the other hand, for case \(r =s\), we have
  \[
    S _{r+qr} ^{(H)} = (q+1) \cdot \sum _{j=1} ^{m} \frac{c_j ^{m-k+2+r}}{D_j} = 0,
  \]
  by \zcref{solution} and \(m-k+2+r \leq  m-2\).

  Moreover, we have

  \begin{align*}
    S_{(k-3)(q+1)}^{(H)}    & = \sum_{j=1}^{m} \frac{c_j^{m-1}}{D_j} = 1,                                                     &
    S_{(k-2)(q+1)}^{(H)}    & = \sum_{j=1}^{m} \frac{c_j^{m}}{D_j} = \Sigma,                                                    \\
    S_{(k-1)(q+1)}^{(H)}    & = \sum_{j=1}^{m} \frac{c_j^{m+1}}{D_j} = \Lambda,                                               &
    S_{(k-3)(q+1)+q}^{(H)}  & = \sum_{j=1}^{m} \frac{c_j^{m-1}}{D_j} \cdot \sum_{\alpha \in \mathcal{C}_{j}} \alpha^{q} = 0,    \\
    S_{(k-3)(q+1)+2q}^{(H)} & = \sum_{j=1}^{m} \frac{c_j^{m-1}}{D_j} \cdot \sum_{\alpha \in \mathcal{C}_{j}} \alpha^{2q} = 0, &
    S_{(k-2)(q+1)+q}^{(H)}  & = \sum_{j=1}^{m} \frac{c_j^{m}}{D_j} \cdot \sum_{\alpha \in \mathcal{C}_{j}} \alpha^{q} = 0.
  \end{align*}
  Let \(M_2 = -\text{diag}(1,\Sigma,\Lambda)\). It is easy to check that \(A_3 \overline{A_3} ^{\T} = M_2\). Hence, the second condition in \zcref{thm:SO3} holds.
  The lower bound of the minimum distance is a direct consequence of \zcref{Dist_s3}.

\end{proof}

\begin{thm}\label{HSO_s3_2}
  Let \(q \geq 5\) be a prime power, \(2 \leq  m\leq  q+1\), and \(4\leq k \leq  \frac{q+1}{2}\). Then there exists a \(q ^{2}\)-ary Hermitian self-orthogonal GRL code with parameters
  \[
    [m(q-1)+3, k, \geq m(q-1)-k+2] _{q ^{2}}.
  \]
\end{thm}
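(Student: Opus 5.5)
The plan is to adapt the construction of \zcref{thm2:SO2NMDS} to the case \(s=3\), verify Hermitian self-orthogonality through \zcref{thm:SO3}, and get the distance bound from \zcref{Dist_s3}.

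\textbf{Setup.} Let \(n=m(q-1)\) and let \(R=\{\omega_1=1,\dots,\omega_{q+1}\}\) be a set of representatives of \(\FQQ^*/\Fq^*\). Set
\[
\ba=\bigcup_{j=1}^m \omega_j\Fq^*=\{\alpha_{j,x}=\omega_j x\}.
\]
These elements are distinct because \(m\le q+1\). For \(u_1,\dots,u_m\in\Fq^*\) to be fixed later, choose \(v_{j,x}\in\FQQ^*\) with \(N(v_{j,x})=u_jx^{E}\), where
\[
E=q-2k+3 .
\]
Since \(k\le \frac{q+1}{2}\), we have \(E\ge 2\).

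\textbf{Condition (1) of \zcref{thm:SO3}.} This condition is needed because \(k\ge 4\). Exactly as before, the sum factors as
\[
S^{(H)}_{r+qs}=\Big(\sum_j u_j\omega_j^{r+qs}\Big)\Big(\sum_{x\in\Fq^*}x^{r+s+E}\Big).
\]
For \(0\le r\le k-4\) and \(0\le s\le k-1\) the exponent satisfies \(2\le r+s+E\le 2k-5+E=q-2\). Hence the character sum over \(\Fq^*\) vanishes, and condition (1) holds.

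\textbf{The \(3\times 3\) block.} For \(0\le j,\ell\le 2\), the relevant entries are \(S^{(H)}_{(k-3+j)+q(k-3+\ell)}\). The \(x\)-exponent in these entries is \(q-3+j+\ell\), which lies in \([q-3,q+1]\). It is divisible by \(q-1\) only when \(j+\ell=2\), and then the sum over \(x\) equals \(-1\). So the right-hand side of condition (2) becomes
\[
M=\begin{pmatrix}0&0&\mu_{02}\\0&\mu_{11}&0\\ \overline{\mu_{02}}&0&0\end{pmatrix},
\]
where
\[
\mu_{02}=\sum_j u_j\omega_j^{\,k-3+q(k-1)},\qquad \mu_{11}=\sum_j u_jN(\omega_j)^{k-2}\in\Fq .
\]
One should double-check that the \((2,0)\) entry is indeed the conjugate of \(\mu_{02}\): since \(u_j\in\Fq\) and \(x\in\Fq\), conjugation just swaps the roles of the exponents, so it is.

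\textbf{Choosing the \(u_j\) (main obstacle).} The hard step is to make \(M\) nonsingular, i.e. to ensure \(\mu_{02}\ne0\) and \(\mu_{11}\ne0\) simultaneously. Both are linear forms in \(u\), and every coefficient of each form is nonzero. Fix \(u_2,\dots,u_m\) arbitrarily. Each form then vanishes for at most one value of \(u_1\). So at most two values of \(u_1\in\Fq^*\) are bad, and since \(q-1\ge4\) a good \(u_1\) exists.

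\textbf{Choosing \(A_3\).} Now \(M=\overline{M}^{\T}\) is a nonsingular Hermitian matrix. By the remark after \zcref{thm:SO2NMDS} (via \cite[Proposition 2.3.1]{Kleidman_1990}), there is a nonsingular \(P\) with \(P\overline{P}^{\T}=M\). Alternatively, one could write \(P\) down explicitly, in the spirit of the explicit \(A_2\) in \zcref{thm2:SO2NMDS}. Taking \(A_3=P\), condition (2) of \zcref{thm:SO3} holds, so \(\GRL_k(\ba,\bv,A_3)\) is Hermitian self-orthogonal.

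\textbf{Parameters.} The code has length \(m(q-1)+3\) and dimension \(k\). By \zcref{Dist_s3} its minimum distance is at least \(m(q-1)-k+2\).
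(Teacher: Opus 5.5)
Your proposal is correct. Up to the choice of $A_3$ it is the paper's construction:
\begin{itemize}
\item the same evaluation set $\ba=\bigcup_j\omega_j\Fq^*$;
\item the same weights $N(v_{j,x})=u_jx^{q-2k+3}$;
\item the same two nonvanishing requirements, since your $\mu_{02}$ and $\mu_{11}$ are the paper's $\mu$ and $\varpi$;
\item the same use of \zcref{thm:SO3} and \zcref{Dist_s3}.
\end{itemize}
Your count of at most two bad values of $u_1$ among $q-1\ge 4$ candidates makes precise the paper's ``similar to'' step. Your signs are also the right ones: since $\sum_{x\in\Fq^*}x^{q-1}=-1$, the paper's stated values $S^{(H)}_{(k-3)(1+q)+2q}=\mu$ and $S^{(H)}_{(k-2)(1+q)}=\varpi$ are each missing a minus sign. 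That slip is harmless.

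The one real difference is how $A_3$ is obtained, and there your route is the sound one. The paper takes the anti-diagonal matrix with entries $c,b,a$, where $N(a)=-\overline{\mu}$, $N(b)=-\varpi$ and $N(c)=-\mu$. The rows of such a monomial matrix have disjoint supports, so $A_3\overline{A_3}^{\T}=\mathrm{diag}(N(c),N(b),N(a))$. This is diagonal, so it can never equal the required matrix, whose $(1,1)$ and $(3,3)$ entries are zero and whose $(1,3)$ entry is nonzero. Moreover, $N(a)=-\overline{\mu}$ has no solution unless $\mu\in\Fq$.

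Your appeal to the classification of nondegenerate Hermitian forms \cite[Proposition 2.3.1]{Kleidman_1990}, applied to the nonsingular Hermitian matrix $M$, repairs this. If you want an explicit matrix, take $A_2=\left(\begin{smallmatrix} a & e\\ \theta a & d\end{smallmatrix}\right)$ as constructed in the proof of \zcref{thm2:SO2NMDS}, with $\mu$ replaced by $\mu_{02}$ and $\theta$ any element satisfying $\mu_{02}\theta+\overline{\mu_{02}\theta}\neq 0$. That matrix satisfies $A_2\overline{A_2}^{\T}=\left(\begin{smallmatrix}0&\mu_{02}\\ \overline{\mu_{02}}&0\end{smallmatrix}\right)$. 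Then set $A_3=\left(\begin{smallmatrix} a & 0 & e\\ 0 & b & 0\\ \theta a & 0 & d\end{smallmatrix}\right)$ with $N(b)=\mu_{11}$, which gives $A_3\overline{A_3}^{\T}=M$ directly.
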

\begin{proof}
  Let \(\{\omega_1,\cdots \omega _{m}, \omega _{m+1},\cdots ,\omega _{q+1}\}\) be a representative set of \(\Fq ^{*}\) in \(\FQQ ^{*}\). We choose \(n=m(q-1)\) pairwise distinct elements in \(\FQQ ^{*}\) as follows:
  \[
    \ba : = \bigcup _{j=1} ^{m} \omega _{j} \cdot \Fq ^{*} = \{\alpha _{j,x} = \omega _{j} \cdot x: 1\leq j \leq m, x\in \Fq ^{*}\}.
  \]
  Similar to the proof of \zcref{thm2:SO2NMDS}, we can choose \(u_1,u_2,\cdots ,u _{m} \in \Fq ^{*}\) (not necessarily distinct) such that
  \[
    \mu := \sum _{j=1} ^{m} u_j \omega _{j} ^{k-3+q(k-1)} \neq 0, \text{ and } \varpi: =\sum _{j=1}^{m} u_j \omega _{j} ^{(k-2)(q+1)} \neq 0.
  \]
  For each \(\alpha _{j,x} \in \ba\), we choose \(v _{j,x} \in \FQQ ^{*}\) with \(N(v _{j,x}) = u _{j}\cdot x ^{q-2k+3}.\) Moreover, let \(a,b,c \in \FQQ ^{*}\) such that \(N(a) = -\overline{\mu}\), \(N(b) = -\varpi\), and \(N(c)=-\mu\). We set
  \[
    A_3 = \begin{pmatrix}
      0 & 0 & c \\
      0 & b & 0 \\
      a & 0 & 0
    \end{pmatrix}.
  \] Then we claim that the \(\GRL _{k}(\ba,\bv,A_3)\) is a Hermitian self-orthogonal code.

  For \(0\leq r \leq k-4\), and \(0\leq s \leq k-1\), we have
  \begin{align*}
    S _{r+qs} ^{(H)} & = \sum _{j=1} ^{m} \sum _{x \in \Fq ^{*}} N(v _{j,x}) \cdot \alpha _{j,x} ^{r+qs} = \sum _{j=1} ^{m} u_j \cdot \omega _{j} ^{r+qs} \cdot \sum _{x\in \Fq ^{*}} x ^{r+s+q-2k+3}.
  \end{align*}
  Since \(r+s+q -2k+3 \leq  q-2\), we have \(S _{r+qs} ^{(H)} = 0\) for all \(r\) and \(s\). Moreover,
  \begin{align*}
    S _{(k-3)(1+q)} ^{(H)}    & = \sum _{j=1} ^{m} u_j \cdot \omega _{j} ^{(k-3)(1+q)} \cdot \sum _{x \in\Fq ^{*}} x ^{q-3} = 0,  \\
    S _{(k-3)(1+q)+q} ^{(H)}  & = \sum _{j=1}^{m} u_j \cdot \omega _{j}^{q+(k-3)(1+q)} \cdot \sum _{x\in \Fq ^{*}} x ^{q-2} = 0,  \\
    S _{(k-3)(1+q)+2q} ^{(H)} & = \sum _{j=1} ^{m} u_j \cdot \omega _{j} ^{k-3+q(k-1)} = \mu,                                     \\
    S _{(k-2)(1+q)} ^{(H)}    & = \sum _{j=1} ^{m} u_j \cdot \omega _{j} ^{(k-2)(1+q)} = \varpi,                                  \\
    S _{(k-2)(1+q)+q} ^{(H)}  & = \sum _{j=1} ^{m} u_j \cdot \omega _{j} ^{(k-2)(1+q)+q} \cdot \sum _{x \in \Fq ^{*}} x ^{q} = 0, \\
    S _{(k-1)(1+q)} ^{(H)}    & = \sum _{j=1} ^{m} u_j \cdot \omega _{j} ^{(k-1)(1+q)} \cdot \sum _{x \in \Fq ^{*}} x ^{q+1} = 0.
  \end{align*}
  Let \(M_2 = \begin{pmatrix}
    0               & 0       & -\mu \\
    0               & -\varpi & 0    \\
    -\overline{\mu} & 0       & 0
  \end{pmatrix}\). It is easy to check that \(A_3 \overline{A_3} ^{\T} = M_2\). Hence, the second condition in \zcref{thm:SO3} holds. The lower bound of the minimum distance is a direct consequence of \zcref{Dist_s3}.
\end{proof}

\section{Quantum GRL codes}

In this section, we apply the Hermitian self-orthogonal GRL codes constructed in the previous section to build quantum error-correcting codes, and then compare their parameters with the quantum Singleton bound.

We firstly recall some basic notations and definitions. Please refer to \cite{ Ashikhmin_2001,grassl1999quantum, Ketkar_2006} for more details about quantum error-correcting codes.

Let \(\mathbb{C}\) be the complex field, \(q\) a prime power of prime \(p\), \(\mathbb{C} ^{q}\) the \(q\)-dimensional complex Hilbert space, and \(\mathcal{V} _{n} = \bigotimes _{i=1} ^{n} \mathbb{C} ^{q}\). Then \(\mathcal{V}_n\) has an orthonormal basis
\[
  \left\{ \ket{x_1 x_2 \cdots x_n} := \ket{x_1} \otimes \ket{x_2} \otimes  \cdots  \otimes \ket{x_n}\mid x_i \in \Fq, 1\leq i \leq n \right\}.
\]
Here \(\ket{x _{i}}\) is the orthonormal basis of \(\mathbb{C} ^{q}\) written in Dirac notation. A quantum error-correcting code (QECC) \(\mathcal{Q}\) is a \(K\)-dimensional subspace of \(\mathcal{V} _{n}\).

For a quantum state \(\ket{x} \in \mathbb{C} ^{q}\) with \(x \in \Fq\), and \(\forall a, b \in \Fq \), let
\[
  X(a) \ket{x} = \ket{x+a}, \quad Z(b) \ket{x} = \omega ^{\text{Tr}(bx)} \ket{x}, \quad \text{respectively},
\]
where \(\omega\) is a primitive \(p\)-th root of unity, and \(\text{Tr}\) is the trace function from \(\Fq\) to \(\mathbb{F}_p\). Moreover, for \(\mathbf{a} = (a_1,\cdots ,a_n) \in \Fq ^{n}\), we define
\[
  X(\mathbf{a}) = X(a_1) \otimes X(a_2) \otimes \cdots \otimes X(a_n), \quad Z(\mathbf{a}) = Z(a_1) \otimes Z(a_2) \otimes \cdots \otimes Z(a_n).
\]
Then the set
\[
  \mathcal{G}_n = \{ \omega ^{c} X(\mathbf{a}) Z(\mathbf{b}) : c \in \mathbb{Z} _{p}, \mathbf{a}, \mathbf{b} \in \Fq ^{n}\}
\]
is called the error group associated with \(\{ X(\ba)Z(\mathbf{b})\mid a,b \in \Fq ^{n}\}\). For any \( E = \omega ^{c}X(\mathbf{a})Z(\mathbf{b}) \in \mathcal{G} _{n}\), the quantum weight of \(E\) is defined as the number of \(i\) such that \((a_i,b_i)\neq (0,0)\).

For a QECC \(\mathcal{Q}\), we say \(\mathcal{Q}\) has minimum distance (weight) \(d(\mathcal{Q})\) if it can detect all errors in \(G _{n}\) of weight less than \(d(\mathcal{Q})\), but it cannot detect some error in \(G _{n}\) of weight \(d(\mathcal{Q})\). We say \(\mathcal{Q}\) is an \([[n,k,d]]_q\)-quantum code if \( \mathcal{Q} \subset \mathcal{V}_n \), \( \dim (\mathcal{Q}) = q ^{k} \) and \(d(\mathcal{Q}) = d\).

A well-known and widely used method to construct quantum codes is the Calderbank-Shor-Steane (CSS) construction \cite{Calderbank_1998}, which allows us to construct quantum codes from classical linear codes. We recall the CSS construction as follows.
\begin{lemma}\label{CSS}
  Let \(\C\) be a linear code over \(\FQQ\) with parameters \([n,k] _{q ^{2}}\). If \(\C\) is Hermitian self-orthogonal, then there exists a QECC \(\mathcal{Q}\) with parameters \([[n,n-2k, d(\mathcal{Q})]]\),
  where
  \[
    d(\mathcal{Q}) = \begin{cases}
      d(\C)                           & \text{if } d(\C) = d(\C ^{\perp _{H}}), \\
      \wt(\C ^{\perph} \backslash \C) & \text{otherwise},
    \end{cases}
  \]
  where \( \wt(\cdot)\) is the minimum Hamming weight of the set.

\end{lemma}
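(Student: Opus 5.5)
The plan is to reduce everything to the standard stabilizer correspondence for Hermitian self-orthogonal codes (Ashikhmin--Knill, Ketkar et al.), and then read off the two cases of the displayed formula for \(d(\mathcal{Q})\).

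\textbf{Step 1: build the stabilizer.} Identify \(\FQQ\) with \(\Fq^{2}\) through a fixed \(\Fq\)-basis \(\{1,\gamma\}\). Each vector of \(\FQQ^{n}\) then corresponds to a pair \((\mathbf{a},\mathbf{b})\in\Fq^{2n}\), and hence to an error \(X(\mathbf{a})Z(\mathbf{b})\in\mathcal{G}_n\). Under this identification the symplectic (trace-alternating) form on \(\Fq^{2n}\) becomes a scalar multiple of \(\mathrm{Tr}_{\FQQ/\Fq}\) of a suitably twisted Hermitian product.

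Consequently, \(\C\subset\C^{\perph}\) says that the \(\Fq\)-space \(\C\) (of \(\Fq\)-dimension \(2k\)) is symplectically self-orthogonal. Its symplectic dual is exactly \(\C^{\perph}\), since \(\FQQ\)-linearity lets one pass from the trace form back to the Hermitian form. The Hamming weight over \(\FQQ\) coincides with the quantum weight of the corresponding error.

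\textbf{Step 2: dimension and detectable errors.} The operators attached to \(\C\) form an abelian subgroup \(S\) of \(\mathcal{G}_n\), modulo scalars. Let \(\mathcal{Q}\) be its joint fixed space. Since \(S\) has order \(q^{2k}\), the usual projector/character count gives \(\dim\mathcal{Q}=q^{n}/q^{2k}=q^{n-2k}\).

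The Knill--Laflamme criterion then says that an error \(E\) is undetectable exactly when \(E\) lies in the centralizer of \(S\) (corresponding to \(\C^{\perph}\)) but not in \(S\) up to scalars (not corresponding to \(\C\)). Hence, whenever \(\C^{\perph}\setminus\C\neq\varnothing\), the smallest weight of an undetectable error is \(\wt(\C^{\perph}\setminus\C)\). This is the ``otherwise'' branch of the statement.

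\textbf{Step 3: the case \(d(\C)=d(\C^{\perph})\).} In this case the minimum weight of \(\C^{\perph}\) is attained inside \(\C\), so the code is degenerate (impure). Here I would invoke the convention used in the cited references, where the quantum code is recorded with its pure distance \(d(\C^{\perph})\), which equals \(d(\C)\).

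I would justify this with the inequality \(\wt(\C^{\perph}\setminus\C)\ge d(\C^{\perph})=d(\C)\). This shows that every error of weight below \(d(\C)\) is detectable, so \(\mathcal{Q}\) is at least a \([[n,n-2k,d(\C)]]\) code. The same convention covers the self-dual case \(n=2k\), where \(\C^{\perph}\setminus\C=\varnothing\).

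\textbf{Main obstacle.} The delicate point is Step 1: checking that the trace-Hermitian form really is the symplectic form under the chosen basis. This requires picking \(\gamma\) with \(\gamma^{q}=-\gamma\) (or the analogous fix when \(p=2\)). I would settle this first, by a direct computation on basis vectors. I would also state explicitly the distance convention used in the degenerate case, since the first branch of the formula depends on it.
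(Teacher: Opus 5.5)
Your proposal is correct and follows the standard stabilizer correspondence (Ashikhmin--Knill, Ketkar et al.) underlying the CSS-type result, which the paper only cites without proof; moreover, your Step~1 ``obstacle'' is harmless, since for any $\Fq$-basis $\{1,\gamma\}$ of $\FQQ$, writing $u=\mathbf{a}+\gamma\mathbf{b}$ and $v=\mathbf{a}'+\gamma\mathbf{b}'$, one has $\mathrm{Tr}_{\FQQ/\Fq}(\lambda\langle u,v\rangle_H)=\lambda(\gamma-\gamma^{q})(\mathbf{b}\cdot\mathbf{a}'-\mathbf{a}\cdot\mathbf{b}')$ for every $\lambda\neq 0$ with $\lambda^{q}=-\lambda$ (e.g.\ $\lambda=1$ when $p=2$), which is a nonzero $\Fq$-multiple of the symplectic form, so no special choice of $\gamma$ is needed. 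Your Step~3 caveat, by contrast, is genuinely needed: with the paper's exact definition of $d(\mathcal{Q})$ the first branch only holds as $d(\mathcal{Q})\ge d(\C)$, and the inequality can be strict (over $\mathbb{F}_4=\mathbb{F}_2(\omega)$, take $\C=\langle(1,\omega)\rangle\oplus S$ with $S$ the shortened hexacode $[5,2,4]_4$, whose Hermitian dual is the punctured hexacode $[5,3,3]_4$; then $d(\C)=d(\C^{\perph})=2$ but $\wt(\C^{\perph}\setminus\C)=3$, giving a $[[7,1,3]]_2$ code), though this does no harm to the paper because its applications use only the second branch or this lower bound.
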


CSS construction is a powerful tool for constructing quantum codes, and many famous families are obtained in this way, such as \cite{beth1998quantum,grassl1999quantum, Steane_1999,  Xiao_2010}. Following this standard framework, we call the quantum code constructed from a Hermitian self-orthogonal GRL code via CSS construction a quantum GRL code.

\begin{Def}
  Let \(\GRL _{k}(\ba,\bv, A_s)\) be a Hermitian self-orthogonal GRL code over \(\FQQ\).   Then we call the quantum code constructed from \(\GRL _{k}(\ba,\bv, A_s)\) by \zcref{CSS} a quantum GRL code.
\end{Def}

As with classical linear codes, the commonly used method to show the optimality of a quantum code is to compare its parameters with the Singleton bound.

\begin{lemma}[Quantum Singleton bound]\cite{Ashikhmin_2001}

  Let \(\mathcal{Q}\) be an \([[n,k,d]]_q\)-quantum code. Then
  \(
  n-k \geq 2(d-1).
  \)

\end{lemma}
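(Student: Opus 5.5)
The statement is the quantum Singleton bound $n-k\geq 2(d-1)$, cited from \cite{Ashikhmin_2001}. I would prove it by the standard entropy argument on a purification, which works for general (not necessarily stabilizer) codes. Let $\mathcal{Q}\subset\mathcal{V}_n$ have dimension $K=q^{k}$ and minimum distance $d$. If $2(d-1)\geq n$, the argument below gives the bound or a contradiction directly. So I assume $n\geq 2(d-1)$ and split the $n$ qudits into three disjoint blocks: $A$ and $B$ of size $d-1$ each, and $C$ of size $n-2(d-1)$.

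Introduce a reference system $R$ of dimension $K$. Form the maximally entangled state $\ket{\Psi}=K^{-1/2}\sum_{i}\ket{i}_R\ket{\psi_i}_{ABC}$, where $\{\ket{\psi_i}\}$ is an orthonormal basis of $\mathcal{Q}$. The key input is the Knill--Laflamme characterization of error detection. Because $\mathcal{Q}$ detects every error of weight at most $d-1$, any set $S$ of at most $d-1$ qudits carries no information about the encoded state. That is, $\rho_{RS}=\rho_R\otimes\rho_S$, with $\rho_R$ maximally mixed, so $S(\rho_R)=k\log q$. Applying this to $S=A$ and to $S=B$ gives $S(RA)=S(R)+S(A)$ and $S(RB)=S(R)+S(B)$.

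I then chain entropy identities, using purity of $\ket{\Psi}$ on $RABC$. Purity gives $S(RA)=S(BC)$ and $S(RB)=S(AC)$. Adding the two factorization identities yields $2S(R)+S(A)+S(B)=S(BC)+S(AC)$. Subadditivity bounds the right side by $S(A)+S(B)+2S(C)$, so $S(R)\leq S(C)\leq |C|\log q=(n-2d+2)\log q$. Combined with $S(R)=k\log q$, this gives $k\leq n-2(d-1)$, which is the claimed inequality.

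The main obstacle is the factorization step $\rho_{RA}=\rho_R\otimes\rho_A$. It requires translating the operational definition of "detects all errors of weight $<d$" into the Knill--Laflamme conditions $P E P=c(E)P$ for every operator $E$ supported on $d-1$ qudits. This is done by expanding an arbitrary such operator in the basis $\mathcal{G}_n$, which works because the $X(\mathbf{a})Z(\mathbf{b})$ form a basis of operators. One then deduces that $\mathrm{tr}(\rho_{RA}(X_R\otimes E_A))=\mathrm{tr}(\rho_R X_R)\,c(E)$ for all operators $X_R$ on $R$ and $E_A$ on $A$, which forces the product form. I would cite this reduction from \cite{Ketkar_2006} rather than redo it.

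Finally, the case $2(d-1)>n$ is handled the same way with $C=\varnothing$ and $B$ of size $n-(d-1)<d-1$. The chain then gives $S(R)\leq 0$, forcing $k=0$, and the bound is read with the usual convention for such codes.
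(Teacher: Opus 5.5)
The paper does not prove this lemma; it is quoted from \cite{Ashikhmin_2001} and used only as a benchmark for the quantum GRL codes of Section 5. Your entropy argument is therefore an independent, self-contained route, and it is correct. The Knill--Laflamme conditions extend by linearity from the operators $X(\mathbf{a})Z(\mathbf{b})$, which span all operators on any set of qudits, to every operator $E$ supported on at most $d-1$ qudits. This gives $\langle\Psi|X_R\otimes E|\Psi\rangle=K^{-1}\mathrm{tr}(X_R)\,c(E)$, hence $\rho_{RA}=\rho_R\otimes\rho_A$ with $\rho_R=I/K$. Purity on $RABC$ together with subadditivity then yields $S(R)\le S(C)$, i.e.\ $k\le n-2(d-1)$. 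Your route holds for arbitrary subspaces, not just stabilizer codes like the CSS-type codes built in the paper, and it needs nothing beyond basic entropy inequalities. The price is the quantum-information machinery (reference system, von Neumann entropy), which appears nowhere else in this otherwise classical-coding paper.

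One point should be tightened. In the branch $2(d-1)>n$ you conclude $k=0$ and then say the bound ``is read with the usual convention.'' Under the paper's detection-based definition of distance, this branch should end in a contradiction instead. A one-dimensional code with projector $P=|\psi\rangle\langle\psi|$ satisfies $PEP=\langle\psi|E|\psi\rangle P$ for every $E$. So it detects all errors and has no minimum distance in the paper's sense, and the case $2(d-1)>n$ cannot occur. If you instead give $K=1$ codes their pure distance (all $(d-1)$-qudit marginals maximally mixed), your chain with trivial $R$ yields only $0\le 0$. The bound $n\ge 2(d-1)$ then needs a separate step. Split the qudits into $A$ and $B$ with $|A|=d-1$ and $|B|=n-d+1$. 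Purity forces $(d-1)\log q=S(A)=S(B)=(n-d+1)\log q$, contradicting $2(d-1)>n$.
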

In \cite[Definition 2.5]{Liu_2020}, Liu, and Hu introduced quantum near-maximum-distance-separable (QNMDS) codes for describing quantum codes whose parameters are close to the quantum Singleton bound. We recall this definition as follows.

\begin{Def}\label{QNMDS}
  A quantum code \(\mathcal{Q}\) with parameters \([[n,k,d]]_{q}\) is called a quantum near-maximum-distance-separable (QNMDS) code if it satisfies
  \[
    2d \geq n-k.
  \]
\end{Def}

With these preliminaries in place, we now apply the Hermitian self-orthogonal GRL constructions to obtain quantum GRL codes and compare their parameters with the quantum Singleton bound. We have the following theorems.

\begin{thm}\label{QGRL1}
  Let \(q\) be a prime power with \(q\geq 4\), and \(m,k\) integers with \(2\leq m\leq q-2\), and \(3\leq  k\leq m+1\). Then there exists a quantum GRL code with parameters \[ [[m(q+1)+2, m(q+1)+2-2k, k]]_q. \] \end{thm}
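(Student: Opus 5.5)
Take the code \(\C=\GRL_k(\ba,\bv,A_2)\) built in the proof of \zcref{thm:SO2NMDS}. There \(n=m(q+1)\), with \(q\ge 4\), \(2\le m\le q-2\) and \(3\le k\le m+1\). That proof shows \(\C\) is a Hermitian self-orthogonal NMDS code with parameters \([m(q+1)+2,k,m(q+1)+2-k]_{q^2}\). Applying the CSS construction of \zcref{CSS} to \(\C\) gives a quantum GRL code of length \(N=m(q+1)+2\) and dimension \(N-2k\). So it remains to show that the minimum distance of this quantum code is exactly \(k\).

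\textbf{Step 1: the dual.} Since \(\C\) is NMDS, \(s(\C^{\perp})=1\), so \(d(\C^{\perp})=N-(N-k)+1-1=k\). As noted in the preliminaries, \(\C^{\perph}=(\C^{\perp})^q\) has the same weight distribution as \(\C^\perp\). Hence \(d(\C^{\perph})=k\).

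\textbf{Step 2: compare \(d(\C)\) and \(d(\C^{\perph})\).} We have \(d(\C)=N-k\) and \(d(\C^{\perph})=k\), and these coincide only if \(N=2k\). But \(N\ge 2(q+1)+2>2(m+1)\ge 2k\), so \(N\ne 2k\). Therefore \zcref{CSS} gives \(d(\mathcal Q)=\wt(\C^{\perph}\setminus\C)\).

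\textbf{Step 3: minimum weight of \(\C^{\perph}\setminus\C\).} Every nonzero word of \(\C\) has weight at least \(N-k\). Since \(k\le m+1<q\), we get
\[
N-k>k,
\]
so no word of \(\C\) has weight \(k\). By Step 1, \(\C^{\perph}\) has a word of weight \(k\); this word lies in \(\C^{\perph}\setminus\C\). Also, every word of \(\C^{\perph}\setminus\C\) is a nonzero word of \(\C^{\perph}\), so it has weight at least \(k\). Hence \(\wt(\C^{\perph}\setminus\C)=k\), and the quantum code has parameters \([[m(q+1)+2,\,m(q+1)+2-2k,\,k]]_q\).

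The only delicate point is reading the CSS distance correctly, since \(d(\C)\ne d(\C^{\perph})\) here. The inequality \(N-k>k\) settles it.
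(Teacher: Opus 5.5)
Your proposal is correct and follows the same route as the paper. Both take the Hermitian self-orthogonal NMDS code from the $s=2$ construction and apply the CSS lemma. Both then use $d(\C^{\perph})=k$ together with $N=m(q+1)+2>2k$, which gives $d(\C)=N-k>k$ and hence places a weight-$k$ word of $\C^{\perph}$ outside $\C$. Your Step~3 simply spells out the containment argument that the paper compresses into the inequality $d(\C^{\perph})=k<d(\C)$.
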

\begin{proof}
  By \zcref{thm:SO2NMDS}, there exists a Hermitian self-orthogonal GRL code \(\C\) which is also an NMDS code with parameters \([m(q+1)+2, k, m(q+1)+2-k]_{q^2}\). By \zcref{CSS}, there exists a quantum code \(\mathcal{Q}\) with parameters \([[m(q+1)+2, m(q+1)+2-2k, d(\mathcal{Q})]]_q\).

  Moreover, since \(2k\leq  2m+2 < m(q+1)+2\), \(\C\) cannot be Hermitian self-dual. We have \(d(\mathcal{Q}) = \wt (C ^{\perph}\backslash C)\). On the other hand, \( d(\C ^{\perph}) =k\). Because \( d(C ^{\perph}) = k < m(q+1)+2- k = d(C)\), we have \(d(\mathcal{Q}) = k\). Hence, the quantum code \(\mathcal{Q}\) has parameters \[[[m(q+1)+2, m(q+1)+2-2k, k]]_q.\]
\end{proof}

\begin{thm}\label{QGRL2}
  Let \(q \geq 5\), \(2\leq  m\leq q+1\), and \(3\leq k\leq \frac{q+1}{2}\). Then there exists a quantum GRL code with parameters
  \[
    [[m(q-1)+2, m(q-1)+2-2k,k]]_q.
  \]
\end{thm}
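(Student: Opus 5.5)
The plan mirrors the proof of \zcref{QGRL1}, with \zcref{thm2:SO2NMDS} as the classical input. By \zcref{thm2:SO2NMDS}, for \(q\geq 5\), \(2\leq m\leq q+1\) and \(3\leq k\leq \frac{q+1}{2}\), there is a Hermitian self-orthogonal GRL code \(\C\). It has parameters \([m(q-1)+2,k,m(q-1)+2-k]_{q^2}\) and is NMDS. Applying \zcref{CSS} to \(\C\) gives a quantum GRL code \(\mathcal{Q}\) with parameters \([[m(q-1)+2,\,m(q-1)+2-2k,\,d(\mathcal{Q})]]_q\). The only remaining task is to show \(d(\mathcal{Q})=k\).

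\textbf{Step 1: \(\C\) is not self-dual.} Since \(k\leq \frac{q+1}{2}\) and \(m\geq 2\), we have
\[
2k\leq q+1<2(q-1)+2\leq m(q-1)+2 .
\]
Hence \(\dim \C=k<\dim \C^{\perph}\), so \(\C\subsetneq \C^{\perph}\) and \(\C\) is not Hermitian self-dual.

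\textbf{Step 2: the dual distance.} \(\C\) is NMDS, so \(\C^{\perp}\) has Singleton defect \(1\). As \(\C^{\perp}\) has dimension \(n-k\), its minimum distance is \(k\). By the relation \(\C^{\perph}=(\C^{\perp})^q\) recalled in the preliminaries, \(d(\C^{\perph})=k\) as well.

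\textbf{Step 3: comparing the distances.} We have \(d(\C)=m(q-1)+2-k\). Since \(2k<m(q-1)+2\), this is strictly larger than \(k=d(\C^{\perph})\). So the first case of \zcref{CSS} does not apply, and \(d(\mathcal{Q})=\wt(\C^{\perph}\setminus\C)\).

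\textbf{Step 4: evaluating \(\wt(\C^{\perph}\setminus\C)\).} Take a codeword \(c\in\C^{\perph}\) of weight \(k\). It cannot lie in \(\C\), because every nonzero word of \(\C\) has weight at least \(m(q-1)+2-k>k\). Therefore \(\wt(\C^{\perph}\setminus\C)\leq k\). Conversely, every nonzero element of \(\C^{\perph}\) has weight at least \(k\), so \(\wt(\C^{\perph}\setminus\C)\geq k\). This gives \(d(\mathcal{Q})=k\).

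The only real subtlety is this last evaluation, which \zcref{QGRL1} glosses over: one must check that the minimum-weight words of \(\C^{\perph}\) lie outside \(\C\). The strict inequality \(d(\C)>d(\C^{\perph})\) from Step 3 settles it. Everything else is direct substitution of the parameters from \zcref{thm2:SO2NMDS}.
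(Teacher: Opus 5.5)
Your proposal is correct and follows the paper's own proof: you take the Hermitian self-orthogonal NMDS code from \zcref{thm2:SO2NMDS}, apply \zcref{CSS}, and use $d(\C^{\perph})=k<m(q-1)+2-k=d(\C)$ to get $d(\mathcal{Q})=k$. Your Steps 1 and 4 only make explicit what the paper leaves implicit: that $\C$ is not self-dual, and that the minimum-weight words of $\C^{\perph}$ lie outside $\C$.
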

\begin{proof}
  Similar to the proof of \zcref{QGRL1}, by \zcref{thm2:SO2NMDS} and \zcref{CSS}, there exists a quantum code \(\mathcal{Q}\) with parameters \[[[m(q-1)+2, m(q-1)+2-2k, d(\mathcal{Q})]]_q,\]
  which comes from a Hermitian self-orthogonal GRL code \( \C\) with parameters \( [m(q-1)+2, k, m(q-1)+2-k]_{q^2}\).
  Since \(d(\C^{\perp})=k < m(q-1)+2-k =d(\C)\), we have \(d(\mathcal{Q}) = k\). Hence, the quantum code \(\mathcal{Q}\) has parameters \[[[m(q-1)+2, m(q-1)+2-2k,k]]_q.\]
\end{proof}

\begin{rk}
  By \zcref{QNMDS}, the quantum codes constructed in \zcref{QGRL1} and \zcref{QGRL2} are QNMDS codes.
\end{rk}

\begin{ex}
  Let \( q = 9 \), \( \Fq = \mathbb{F}_{3}(\alpha) \) with \( \alpha^2 = -1 \), and \( \FQQ = \Fq(\omega) \) with \( \omega^2 = \alpha + 1 \). Set \( g = \alpha + \omega \), a primitive element of \( \FQQ^\times \). Let \( \xi = g^8 \), which has order 10.
  We define \( U_{10} = \langle \xi \rangle \), \( \beta_j = g^{j-1} \) for \( j = 1, \dots, 6 \), and
  \[
    (u_1, u_2, u_3, u_4, u_5, u_6) = (g^4, g^2, g^7, g^2, g^3, g^3).
  \]
  Let
  \begin{align*}
    \ba  = \bigcup_{j=1}^6 \beta_j \cdot U_{10},\qquad
    \bv  = \left( \underbrace{u_1, \dots, u_1}_{10}, \underbrace{u_2, \dots, u_2}_{10}, \dots, \underbrace{u_6, \dots, u_6}_{10} \right),
  \end{align*}
  and
  \[
    A_2 = \begin{pmatrix}
      g^5    & g^{30} \\
      g^{31} & g^6
    \end{pmatrix}.
  \]
  Then the GRL code \( \mathrm{GRL}_6(\ba, \bv, A_2) \) is a Hermitian self-orthogonal NMDS code with parameters \([62, 6, 56]_{81}\). This code gives a QECC with parameters \([[62, 50, 6]]_9\) by \(\zcref{CSS}\), which has a larger minimum distance than the previously best-known quantum code with parameters \([[62, 50, 4]]_9\) in \(\cite{BE:tables}\).
\end{ex}
\begin{proof}
  Let
  \[ (c_1,c_2,c_3,c_4,c_5,c_6) = (\beta_1 ^{10}, \beta_2 ^{10}, \beta_3 ^{10}, \beta_4 ^{10}, \beta_5 ^{10}, \beta_6 ^{10}) = (1,1+2\alpha, \alpha, 1+\alpha, 2, 2+\alpha).\]
  Then we have \( u_j ^{10} = c_j /D_j\) for \( j = 1, \dots, 6 \), where \( D_j = \prod _{\ell \neq j} (c_j - c_\ell) \). We set \( \Sigma:= \sum _{j=1} ^{6} c_j = g ^{10}\), and
  \[ \theta_1 = \beta_1 \cdot \frac{1- \xi ^{5}}{1-\xi} = g ^{26}.\] Then \( \Sigma + \theta_1:  = 1+\alpha \neq 0\), and
  \[ \rho = - \frac{\Sigma}{\Sigma+ \theta_1 ^{10}} = g ^{50} \in \Fq ^{*}.\]
  We choose \( a = g ^{5}\), \( d=g ^{6}\), which follows
  \[ a ^{10} = \rho, \qquad d ^{10} = - \frac{\Sigma ^{2}}{\Sigma + \theta_1 ^{10}} = 2\alpha = g ^{60}.\]
  Hence, we have
  \[ A_2 = \begin{pmatrix}
      a                & - \rho \cdot \overline{(\theta_1 /d)} \\
      \theta_1 \cdot a & d
    \end{pmatrix}.\] By the construction in the proof of \zcref{thm:SO2NMDS}, the GRL code \( \GRL _{6}(\ba,\bv, A_2)\) is a Hermitian self-orthogonal NMDS code with parameters \([62, 6, 56]_{81}\). By \zcref{CSS}, there exists a quantum code \(\mathcal{Q}\) with parameters \([[62, 50, 6]]_9\). We also double-check the parameters of the resulting quantum code by Magma \cite{Magma}.
\end{proof}

We next take GRL codes with \(s=3\) as ingredients for quantum constructions. Unlike the \(s=2\) case, the exact minimum distance of the dual code is difficult to determine. Nevertheless, we can derive a useful lower bound, which is sufficient for establishing the quality of the resulting quantum codes.

\begin{lemma}\cite[Theorem 4.1]{Li_2025}\label{Hmatrix_s3}
  Let \(\ba= \{\alpha_1,\cdots ,\alpha _{n}\} \subset \Fq\) be a subset of \(\Fq\), \(\bv \in (\Fq ^{*}) ^{n}\), and \(A_3\) a nonsingular \(3 \times  3\) matrix over \(\Fq\).

  Then the parity check matrix of \(\GRL _{k}(\ba,\bv,A_3)\) is
  \[
    \begin{pNiceArray}{w{c}{0.2cm}w{c}{0.6cm}|c}[margin]
      \Block{2-2}{H_{n-k}} && O \\
      && -T_3(A_3^{-1})^{\T}
    \end{pNiceArray}_{n-k \times (n+3)}
  \]
  where
  \[
    H _{n-k} = \begin{pmatrix}
      u_1                   & u_2                   & \cdots & u_n                   \\
      u_1 \alpha_1          & u_2 \alpha_2          & \cdots & u_n \alpha_n          \\
      \vdots                & \vdots                & \ddots & \vdots                \\
      u_1 \alpha_1 ^{n-k-1} & u_2 \alpha_2 ^{n-k-1} & \cdots & u_n \alpha_n ^{n-k-1}
    \end{pmatrix},
  \]
  with \(u_i = \frac{w_i}{v_i}\), and \(w_i = \prod _{j=1,j\neq i}^{n} \frac{1}{\alpha _{i}- \alpha _{j}}\) for \(1\leq i \leq n\), and
  \[
    T_3 = \begin{pmatrix}
      0 & 0              & 1                             \\
      0 & 1              & - \sigma_1(\ba)               \\
      1 & -\sigma_1(\ba) & \sigma_1 ^{2} - \sigma_2(\ba)
    \end{pmatrix}
  \]
  with \(\sigma_1(\ba) = -\sum _{i=1} ^{n} \alpha _{i}\) and \(\sigma_2(\ba) = \sum _{1\leq i < j \leq n} \alpha _{i} \alpha _{j}\).

\end{lemma}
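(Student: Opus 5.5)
The plan is to verify directly that the displayed matrix, call it \(H_{\GRL}\), is a parity-check matrix. That means checking two things: \(G_{\GRL}H_{\GRL}^{\T}=O\), and \(H_{\GRL}\) has full rank \(n+3-k=\dim \GRL_k(\ba,\bv,A_3)^{\perp}\). Since the dual has dimension \(n+3-k\), I read \(H_{n-k}\) as the \((n-k+3)\times n\) matrix with rows \((u_i\alpha_i^{s})_i\) for \(0\le s\le n-k+2\). The block \(O\) is then \((n-k)\times 3\), and \(-T_3(A_3^{-1})^{\T}\) sits against the last three rows \(s=n-k,n-k+1,n-k+2\). I will state this reading of the block sizes at the start of the proof.

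\textbf{Step 1: the GRS part.} Write \(G_{\GRL}=(G_k\mid B)\), where \(B\) has zero rows except for the last three, which form \(A_3\). Likewise write \(H_{\GRL}=(H_{n-k}\mid C)\), where the last three rows of \(C\) form \(-T_3(A_3^{-1})^{\T}\). Then
\[
G_{\GRL}H_{\GRL}^{\T}=G_kH_{n-k}^{\T}+BC^{\T}.
\]
The \((r,s)\)-entry of \(G_kH_{n-k}^{\T}\), for \(0\le r\le k-1\) and \(0\le s\le n-k+2\), is
\[
\sum_{i=1}^n v_iu_i\alpha_i^{r+s}=\sum_{i=1}^n w_i\alpha_i^{r+s}.
\]
By \zcref{solution} this is \(0\) when \(r+s\le n-2\), equals \(1\) when \(r+s=n-1\), and equals \(\gamma_{n-1}(t)\) when \(r+s=n-1+t\) with \(t=1,2\). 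Since \(r+s\le n+1\), only these values occur.

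\textbf{Step 2: identify the surviving block.} The nonzero entries occur only when \(r\in\{k-3,k-2,k-1\}\) and \(s\in\{n-k,n-k+1,n-k+2\}\). Index this \(3\times 3\) block by \(i,j\in\{0,1,2\}\), so that \(r+s=n-3+i+j+\ldots\); more precisely the entry depends only on \(i+j\). It equals \(1\) when \(i+j=2\), \(\gamma_{n-1}(1)\) when \(i+j=3\), \(\gamma_{n-1}(2)\) when \(i+j=4\), and \(0\) otherwise. By the remark after \zcref{solution},
\[
\gamma_{n-1}(1)=\sum_i\alpha_i=-\sigma_1(\ba),\qquad \gamma_{n-1}(2)=\sum_i\alpha_i^2+\sum_{i<j}\alpha_i\alpha_j=\sigma_1^2(\ba)-\sigma_2(\ba).
\]
Hence this block is exactly the symmetric Hankel matrix \(T_3\). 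On the other side, \(BC^{\T}\) is zero except in the same block, where it equals
\[
A_3\bigl(-T_3(A_3^{-1})^{\T}\bigr)^{\T}=-A_3A_3^{-1}T_3^{\T}=-T_3,
\]
using \(T_3^{\T}=T_3\). The two blocks cancel, so \(G_{\GRL}H_{\GRL}^{\T}=O\).

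\textbf{Step 3: rank.} The first \(n-k\) rows of \(H_{\GRL}\) are \((H'\mid O)\), where \(H'\) is a Vandermonde-type GRS generator matrix with \(u_i\neq 0\). These rows therefore have rank \(n-k\). The last three rows restricted to the last three columns form \(-T_3(A_3^{-1})^{\T}\), which is nonsingular because \(\det T_3=-1\) and \(A_3\) is invertible. The matrix is thus block lower triangular with full-rank diagonal blocks, so its rank is \(n-k+3\). Combined with Step 2, its row space is the whole dual code.

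The main obstacle is bookkeeping rather than ideas. I need to fix the index conventions: which rows carry the extra block, and the exponent offsets that place the values \(1,\gamma_{n-1}(1),\gamma_{n-1}(2)\) precisely on the anti-diagonals of \(T_3\). I also need the sign convention \(\sigma_1=-\sum\alpha_i\) to match the remark's formula for \(\gamma_{n-1}(2)\). I would check these on the \(3\times3\) block explicitly before writing the final argument.
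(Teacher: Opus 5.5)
The paper gives no proof of this lemma; it is quoted from \cite[Theorem 4.1]{Li_2025}. Your direct verification is therefore a self-contained substitute rather than a variant of an argument in the text, and it is correct.

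Your re-reading of the block sizes is forced, not merely convenient. As printed, $H_{n-k}$ has only $n-k$ rows and cannot span the $(n+3-k)$-dimensional dual. Moreover, if $T_3$ sat against exponents $n-k-3,\dots,n-k-1$, every entry of $G_kH^{\T}$ would have exponent at most $n-2$ and vanish, leaving $-T_3$ uncancelled. Your version (exponents $0,\dots,n-k+2$, zero block of size $(n-k)\times 3$) is also what the paper implicitly uses in the proof of \zcref{QGRL3}, where $\C^{\perp}$ is treated as a GRL code of dimension $n+3-k$.

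The computation goes through. A nonzero entry needs $r+s\ge n-1$, which forces $r\ge k-3$ and $s\ge n-k$. Writing $r=k-3+i$ and $s=n-k+j$ gives exactly $r+s=n-3+i+j$, so the ``$+\ldots$'' in your Step 2 is unnecessary. The values $1$, $\gamma_{n-1}(1)=-\sigma_1(\ba)$ and $\gamma_{n-1}(2)=\sigma_1^2(\ba)-\sigma_2(\ba)$ then fill the anti-diagonals $i+j=2,3,4$. The sign check you flagged works because $\sigma_1^2(\ba)=(\sum_i\alpha_i)^2$. So the block is $T_3$, and the right-hand part contributes $-A_3A_3^{-1}T_3^{\T}=-T_3$.

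In Step 3 the upper-left block is $(n-k)\times n$, not square, so replace ``block lower triangular'' by the direct argument. In a vanishing row combination, the last three columns force the coefficients of the last three rows to be zero, since $\det T_3=-1$ and $A_3$ is invertible. The Vandermonde structure with $u_i\neq 0$ then forces the remaining coefficients to vanish.

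Compared with the bare citation, your route uses only tools the paper already sets up; the remark evaluating $\gamma_{n-1}(1)$ and $\gamma_{n-1}(2)$ is exactly what is needed. It also shows that nothing is special about $s=3$. The same computation yields the parity-check block $-T_s(A_s^{-1})^{\T}$, where $T_s$ is the $s\times s$ Hankel matrix whose anti-diagonals carry $0,\dots,0,1,\gamma_{n-1}(1),\dots,\gamma_{n-1}(s-1)$.
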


\begin{thm}\label{QGRL3}
  Let \(q \geq  5\) be a prime power, \(2 \leq m\leq  q-3\), and \(4 \leq  k \leq m+2\). Then there exists a quantum GRL code with parameters
  \[
    [[m(q+1)+3, m(q+1)+3 - 2k , \geq k-1]] _{q}.
  \]
\end{thm}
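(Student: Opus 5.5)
Take the Hermitian self-orthogonal code $\C=\GRL_k(\ba,\bv,A_3)$ with parameters $[m(q+1)+3,k]_{q^2}$ from \zcref{HSO_s3}. Applying \zcref{CSS} gives a quantum code $\mathcal{Q}$ with parameters $[[m(q+1)+3,\,m(q+1)+3-2k,\,d(\mathcal{Q})]]_q$. In every case $d(\mathcal{Q})\geq d(\C^{\perph})$: either $d(\mathcal{Q})=d(\C)=d(\C^{\perph})$, or $d(\mathcal{Q})=\wt(\C^{\perph}\setminus\C)\ge d(\C^{\perph})$. Since $\C^{\perph}=(\C^{\perp})^q$ preserves Hamming weights, it therefore suffices to show $d(\C^{\perp})\geq k-1$.

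To bound $d(\C^{\perp})$, use the parity check matrix $H$ of \zcref{Hmatrix_s3}. This is a generator matrix of $\C^{\perp}$ and has $n-k$ rows, where $n=m(q+1)$. Since $d(\C^{\perp})$ is the smallest number of linearly dependent columns of the generator matrix $G_{\GRL}$ of $\C$, show that any $k-2$ columns of $G_{\GRL}$ are linearly independent. Let $S$ be such a set of columns and $t=|S\cap\{g_{n+1},g_{n+2},g_{n+3}\}|$.

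\begin{itemize}
  \item If $t=0$, the columns lie in a $k\times(k-2)$ GRS-type Vandermonde block, which has full rank.
  \item If $t\ge 1$, consider the top $k-3$ rows, where the last three columns vanish. The $k-2-t\le k-3$ Vandermonde columns are independent already on those rows. It remains to show that the $t$ columns of $A_3$ contribute independently modulo the span of the Vandermonde columns. Since $A_3$ is nonsingular, any nontrivial combination of its columns is a nonzero vector $y$ in the last three coordinates. A dependence would force $\sum c_i g_{i}=(0,\dots,0,y)$ on the $k-2-t$ Vandermonde columns. The first $k-3$ coordinates vanish, so those $\le k-3$ Vandermonde columns are zero on $k-3$ rows. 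By Vandermonde invertibility $c=0$, hence $y=0$, a contradiction.
\end{itemize}

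A cleaner route for the $t\ge1$ case is to mimic the minor arguments in condition (i) of the proof of \zcref{NMDS3}, exhibiting a nonzero maximal minor. Either way, every $k-2$ columns are independent, so $d(\C^{\perp})\geq k-1$.

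The main obstacle is the bookkeeping in the $t\ge1$ case, ensuring that a nonzero vector supported on the last three rows cannot be produced by at most $k-3$ Vandermonde columns. This holds because $k-2-t\le k-3$, and the Vandermonde block restricted to its first $k-3$ rows is injective. The bound $k-1$, rather than $k$, is exactly what this argument yields; it stops short of $k$ because $k-1$ columns with $t=1$ may be dependent, as condition (3) of \zcref{NMDS3} shows. Combining $d(\mathcal{Q})\ge d(\C^{\perph})=d(\C^{\perp})\ge k-1$ yields the stated parameters.
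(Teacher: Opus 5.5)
Your proof is correct, but it reaches $d(\C^{\perp})\geq k-1$ by a different route from the paper.

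The paper uses \zcref{Hmatrix_s3} to identify $\C^{\perp}$ as another GRL code with $s=3$. That dual has dimension $n+3-k$ on the same $n=m(q+1)$ evaluation points. The paper then applies \zcref{Dist_s3} to it and gets $n-(n+3-k)+2=k-1$.

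You never identify the dual. You use only that $G_{\GRL}$ is a parity-check matrix of $\C^{\perp}$, and you show that every $k-2$ columns of $G_{\GRL}$ are independent. This follows from the zero block in the top $k-3$ rows together with the nonsingularity of $A_3$. It is essentially the column-side counterpart of the weight argument in \zcref{Dist_s3}.

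Your route is more elementary and self-contained. It does not rest on the imported parity-check description, whose stated row count is off: $\C^{\perp}$ has dimension $n+3-k$, not $n-k$. It also shows why $k-1$ is the natural limit of this method. A set of $k-1$ columns containing exactly one column of $A_3$ can be dependent, as condition (3) of \zcref{NMDS3} shows.

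The paper's route buys economy: once the dual is known to be a GRL code, one distance lemma serves for both $\C$ and $\C^{\perp}$.

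One small cleanup: the sentence invoking the matrix $H$ of \zcref{Hmatrix_s3} is never used, and it repeats the $n-k$ row count. You can simply delete it.
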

\begin{proof}
  By \zcref{HSO_s3}, there exists a Hermitian GRL code \(\C\) with parameters \([m(q+1)+3,k] _{q ^{2}}\). By \zcref{Hmatrix_s3}, the Euclidean dual code \(C ^{\perp}\) is still a GRL code with parameters \([m(q+1)+3, m(q+1)+3-k]_{q ^{2}}\).

  Moreover, we have \(d(\C ^{\perp}) \geq k-1\) by \zcref{Dist_s3}. As the Hermitian dual code of a linear code with same parameters with its Euclidean dual code, we have \(d(\C ^{\perph})\geq k-1\). Hence, by \zcref{CSS}, there exists a quantum GRL code with parameters
  \[
    [[m(q+1)+3, m(q+1)+3 - 2k , \geq k-1]] _{q}.
  \]

\end{proof}

\begin{thm}\label{QGRL4}
  Let \(q \geq 5\) be a prime power, \(2 \leq m \leq q+1\), and \(4\leq  k\leq  \frac{q+1}{2}\). Then there exists a quantum GRL code with parameters
  \[
    [[m(q-1)+3, m(q-1)+3 - 2k, \geq k-1]] _{q}.
  \]
\end{thm}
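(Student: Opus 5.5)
The plan mirrors the proof of \zcref{QGRL3}, with \zcref{HSO_s3_2} as the source of the classical code. Set \(n=m(q-1)\).

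First, I would invoke \zcref{HSO_s3_2}. Under exactly the stated hypotheses \(q\geq 5\), \(2\leq m\leq q+1\) and \(4\leq k\leq \frac{q+1}{2}\), it supplies a Hermitian self-orthogonal code \(\C=\GRL_{k}(\ba,\bv,A_3)\) over \(\FQQ\) with parameters \([n+3,k]_{q^2}\). Here \(\ba=\bigcup_{j=1}^{m}\omega_j\Fq^{*}\) and \(A_3\) is the anti-diagonal nonsingular matrix built there. Applying \zcref{CSS} to \(\C\) gives a quantum code \(\mathcal{Q}\) with length \(n+3\) and dimension \(n+3-2k\). It remains to bound \(d(\mathcal{Q})\) from below by \(k-1\).

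Second, I would bound the distance of the dual. In either case of \zcref{CSS}, \(d(\mathcal{Q})\) is at least the minimum distance of \(\C^{\perph}\), because \(\C^{\perph}\backslash\C\subset\C^{\perph}\) and \(\C\subset\C^{\perph}\). So it suffices to show \(d(\C^{\perph})\geq k-1\). By \zcref{Hmatrix_s3}, the Euclidean dual \(\C^{\perp}\) has generator matrix
\[
\begin{pmatrix} H_{n-k} & O\\ & -T_3(A_3^{-1})^{\T}\end{pmatrix},
\]
where \(H_{n-k}\) is a GRS-type matrix of the form \eqref{eq:MGRS} with column multipliers \(u_i=w_i/v_i\neq 0\). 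The matrix \(-T_3(A_3^{-1})^{\T}\) is nonsingular because \(T_3\) and \(A_3\) are. Hence \(\C^{\perp}\) is itself \(\GRL_{n-k}(\ba,\boldsymbol{u},A_3')\) with \(A_3'\) nonsingular. \zcref{Dist_s3} then gives \(d(\C^{\perp})\geq n-(n-k)+2=k+2\), which is stronger than needed. Since \(\C^{\perph}=(\C^{\perp})^{q}\) has the same weight distribution, we also get \(d(\C^{\perph})\geq k-1\).

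The main obstacle is checking that \zcref{Hmatrix_s3} applies. It is stated over \(\Fq\), while here \(\ba\subset\FQQ\), so it must be read with \(\FQQ\) as the ground field. Its derivation only uses \zcref{solution}, which is field-independent, so this is harmless. One must also confirm \(n-k\geq 3\), so that the dual is a GRL code with \(s=3\) in the sense of \zcref{def:GRL}. This holds because \(n=m(q-1)\geq 2(q-1)\) while \(k\leq \frac{q+1}{2}\). If one prefers not to rely on the dual being GRL, the weaker bound \(d(\C^{\perp})\geq k-1\) follows more directly. Any \(k-2\) columns of a generator matrix of \(\C\) are linearly independent: by the condition~(i) analysis in the proof of \zcref{NMDS3} (the \(t=1,2,3\) cases), the relevant minors reduce to Vandermonde determinants times nonzero minors of \(A_3\), or a dependency forces \(k-1\) columns. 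This already gives the claimed \([[m(q-1)+3,\,m(q-1)+3-2k,\,\geq k-1]]_q\).
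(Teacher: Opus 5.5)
Your overall chain (\zcref{HSO_s3_2}, then a lower bound on the dual distance, then \zcref{CSS}) is the paper's, but your dual-distance step contains a real error. $\C^{\perp}$ has dimension $n+3-k$, not $n-k$, so it is not $\GRL_{n-k}(\ba,\boldsymbol{u},A_3')$. The block $H_{n-k}$ in \zcref{Hmatrix_s3} has to be read as the $(n+3-k)\times n$ GRS matrix; the row count in that lemma is a typo, and the proof of \zcref{QGRL3} indeed treats $\C^{\perp}$ as an $[n+3,n+3-k]$ GRL code. Your bound $d(\C^{\perp})\geq k+2$ is therefore not just unjustified but false: the Singleton bound for an $[n+3,n+3-k]$ code forces $d(\C^{\perp})\leq k+1$. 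With the correct dimension $k'=n+3-k$, \zcref{Dist_s3} gives $d(\C^{\perp})\geq n-k'+2=k-1$. That is exactly what is needed and exactly the paper's argument. The side condition to check is $3\leq k'\leq n$, i.e.\ $3\leq k\leq n$, not $n-k\geq 3$.

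Your fallback is correct and by itself proves the theorem, but it should be stated precisely. Suppose $t$ of the $k-2$ chosen columns of $G_{\GRL}$ come from the last three; the case $t=0$ is the GRS case. For $t\geq 1$, take the first $k-2-t$ rows together with $t$ rows of the $A_3$ block on which the chosen columns of $A_3$ have a nonzero $t\times t$ minor. Since $k-3-t\leq k-4$, the extra columns vanish on those first rows. So the $(k-2)\times(k-2)$ submatrix is block lower-triangular, with a nonzero scaled Vandermonde block and a nonzero minor of $A_3$ on the diagonal, hence nonsingular. There is no exceptional case here, unlike the $(k-1)$-column analysis behind condition (3) of \zcref{NMDS3}, so drop the phrase ``or a dependency forces $k-1$ columns''.

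Compared with the paper, this route avoids \zcref{Hmatrix_s3} altogether, including its dimension typo and its formulation over $\Fq$ rather than $\FQQ$, and it works for any $\GRL_k(\ba,\bv,A_3)$. The paper's route additionally identifies $\C^{\perp}$ as a GRL code, but yields the same bound $k-1$.
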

\begin{proof}
  This proof is similar to that of the previous theorem. By \zcref{HSO_s3_2}, there exists a Hermitian self-orthogonal GRL code \(\C\) with parameters \([m(q-1)+3,k] _{q ^{2}}\), and the minimum distance of the Hermitian dual code \(\C ^{\perph}\) is at least \(k-1\). Hence, by \zcref{CSS}, there exists a quantum GRL code with parameters
  \[
    [[m(q-1)+3, m(q-1)+3 - 2k, \geq k-1]] _{q}.
  \]
\end{proof}

In table II, we list the parameters of the quantum GRL codes constructed in \zcref{QGRL1}, \zcref{QGRL2}, \zcref{QGRL3}, and \zcref{QGRL4}, which are new or improved quantum codes by comparing with the known quantum codes in \cite{BE:tables}, \cite{table2}, \cite{table3}, \cite{table4}. To more clearly compare the parameters of our quantum GRL codes with those of the known quantum codes, similar to the Singleton defect of classical linear codes, for an \( [[n,k,d]]_q \) QECC \(\mathcal{Q}\), we define \(S(\mathcal{Q})=\frac{n - k}{2}+1 - d\). This is called the quantum Singleton defect (see \cite{Pereira_2022}). Then if \( S(\mathcal{Q}) = 0\), the quantum code \(\mathcal{Q}\) is a quantum MDS code, and if \( S(\mathcal{Q}) = 1\), the quantum code \(\mathcal{Q}\) is a quantum near MDS code.

\begin{table}[!ht]
  \centering
  \label{tab:q9codes}
  \caption{New and improved q-ary QECCs from quantum GRL codes.}
  \small
  \setlength{\tabcolsep}{1.1pt}
  \renewcommand{\arraystretch}{1.20}
  \begin{tabular}{c c c c c @{\quad} c c c c c}

    \cmidrule(lr){1-5} \cmidrule(lr){6-10}
    \textbf{Our codes}   & $S(\mathcal{Q})$ & \textbf{Source} & \textbf{Known QECCs}              & \( S(\mathcal{Q})\) &
    \textbf{Our codes}   & $S(\mathcal{Q})$ & \textbf{Source} & \textbf{Known QECCs}              & $S(\mathcal{Q})$                                                                                                        \\
    \midrule
    $[[19,11,\geq 3]]_9$ & 2                & \zcref{QGRL4}   & --                                & --                  & $[[72,64,4]]_9$             & 1 & \zcref{QGRL1} & --                                        & --  \\
    $[[19,9,\geq 4]]_9$  & 2                & \zcref{QGRL4}   & --                                & --                  & $[[72,62,5]]_9$             & 1 & \zcref{QGRL1} & --                                        & --  \\
    $[[22,16,3]]_9$      & 1                & \zcref{QGRL1}   & --                                & --                  & $[[72,60,6]]_9$             & 1 & \zcref{QGRL1} & --                                        & --  \\
    $[[23,15,\geq 3]]_9$ & 2                & \zcref{QGRL3}   & --                                & --                  & $[[72,58,7]]_9$             & 1 & \zcref{QGRL1} & $[[82,58,7]]_9$~\cite{BE:tables}          & 6   \\
    $[[26,18,4]]_9$      & 1                & \zcref{QGRL2}   & $[[26,17,4]]_9$~\cite{BE:tables}  & 1.5                 & $[[72,56,8]]_9$             & 1 & \zcref{QGRL1} & $[[72,0,8]]_9$~\cite{BE:tables}           & 29  \\
    $[[26,16,5]]_9$      & 1                & \zcref{QGRL2}   & $[[27,16,5]]_9$~\cite{BE:tables}  & 1.5                 & $[[74,68,3]]_9$             & 1 & \zcref{QGRL2} & --                                        & --  \\
    $[[34,26,4]]_9$      & 1                & \zcref{QGRL2}   & --                                & --                  & $[[74,66,4]]_9$             & 1 & \zcref{QGRL2} & --                                        & --  \\
    $[[34,24,5]]_9$      & 1                & \zcref{QGRL2}   & --                                & --                  & $[[74,64,5]]_9$             & 1 & \zcref{QGRL2} & --                                        & --  \\
    $[[34,28,3]]_9$      & 1                & \zcref{QGRL2}   & $[[36,28,3]]_9$~\cite{BE:tables}  & 2                   & $[[75,65,\geq 4]]_9$        & 2 & \zcref{QGRL4} & --                                        & --  \\
    $[[42,32,5]]_9$      & 1                & \zcref{QGRL1}   & --                                & --                  & $[[75,67,\geq 3]]_9$        & 2 & \zcref{QGRL4} & $[[75,64,3]]_9$~\cite{BE:tables}          & 3.5 \\
    $[[43,35,\geq 3]]_9$ & 2                & \zcref{QGRL3}   & --                                & --                  & $[[83,75,\geq 3]]_9$        & 2 & \zcref{QGRL4} & --                                        & --  \\
    $[[43,33,\geq 4]]_9$ & 2                & \zcref{QGRL3}   & --                                & --                  & $[[83,73,\geq 4]]_9$        & 2 & \zcref{QGRL4} & --                                        & --  \\
    $[[43,31,\geq 5]]_9$ & 2                & \zcref{QGRL3}   & --                                & --                  & $[[22,10,6]]_{11}$          & 1 & \zcref{QGRL2} & $[[30,10,6]]_{11}$~\cite{table2}          & 5   \\
    $[[50,42,4]]_9$      & 1                & \zcref{QGRL2}   & --                                & --                  & $[[51,39,\geq 5]]_{11}$     & 2 & \zcref{QGRL3} & $[[57,39,5]]_{11}$~\cite{table2}          & 5   \\
    $[[50,40,5]]_9$      & 1                & \zcref{QGRL2}   & --                                & --                  & $[[62,54,4]]_{11}$          & 1 & \zcref{QGRL1} & $[[66,54,4]]_{11}$~\cite{table2}          & 3   \\
    $[[50,44,3]]_9$      & 1                & \zcref{QGRL2}   & $[[50,30,3]]_9$~\cite{BE:tables}  & 8                   & $[[26,20,3]]_{13}$          & 1 & \zcref{QGRL2} & $[[26,18,3]]_{13}$~\cite{table2}          & 2   \\
    $[[51,41,\geq 4]]_9$ & 2                & \zcref{QGRL4}   & --                                & --                  & $[[26,18,4]]_{13}$          & 1 & \zcref{QGRL2} & $[[26,18,3]]_{13}$~\cite{table2}          & 2   \\
    $[[51,43,\geq 3]]_9$ & 2                & \zcref{QGRL4}   & $[[51,39,3]]_9$~\cite{BE:tables}  & 4                   & $[[26,16,5]]_{13}$          & 1 & \zcref{QGRL2} & $[[26,14,5]]_{13}$~\cite{table2}          & 2   \\
    $[[52,42,5]]_9$      & 1                & \zcref{QGRL1}   & $[[53,42,5]]_9$~\cite{BE:tables}  & 1.5                 & $[[26,14,6]]_{13}$          & 1 & \zcref{QGRL2} & $[[26,14,5]]_{13}$~\cite{table2}          & 2   \\
    $[[52,40,6]]_9$      & 1                & \zcref{QGRL1}   & $[[52,38,6]]_9$~\cite{BE:tables}  & 2                   & $[[26,12,7]]_{13}$          & 1 & \zcref{QGRL2} & $[[26,12,6]]_{13}$~\cite{table2}          & 2   \\
    $[[58,48,5]]_9$      & 1                & \zcref{QGRL2}   & --                                & --                  & $[[98,90,4]]_{25}$          & 1 & \zcref{QGRL2} & $[[110,90,4]]_{25}$~\cite{table3}         & 7   \\
    $[[58,50,4]]_9$      & 1                & \zcref{QGRL2}   & $[[62,50, 4]]_9$~\cite{BE:tables} & 3                   & $[[243,229,\geq 6]]_{81}$   & 2 & \zcref{QGRL4} & $[[243,217,\geq 6]]_{81}$~\cite{table4}   & 8   \\
    $[[62,52,5]]_9$      & 1                & \zcref{QGRL1}   & --                                & --                  & $[[243,225,\geq 8]]_{81}$   & 2 & \zcref{QGRL4} & $[[243,207,\geq 8]]_{81}$~\cite{table4}   & 11  \\
    $[[62,48,7]]_9$      & 1                & \zcref{QGRL1}   & --                                & --                  & $[[243,221,\geq 10]]_{81}$  & 2 & \zcref{QGRL4} & $[[243,197,\geq 10]]_{81}$~\cite{table4}  & 14  \\
    $[[62,56,3]]_9$      & 1                & \zcref{QGRL1}   & $[[65,56,3]]_9$~\cite{BE:tables}  & 2.5                 & $[[243,217,\geq 12]]_{81}$  & 2 & \zcref{QGRL4} & $[[243,217,\geq 6]]_{81}$~\cite{table4}   & 8   \\
    $[[62,54,4]]_9$      & 1                & \zcref{QGRL1}   & $[[62,50, 4]]_9$~\cite{BE:tables} & 3                   & $[[243,207,\geq 17]]_{81}$  & 2 & \zcref{QGRL4} & $[[243,207,\geq 8]]_{81}$~\cite{table4}   & 11  \\
    $[[62,50,6]]_9$      & 1                & \zcref{QGRL1}   & $[[62,50, 4]]_9$~\cite{BE:tables} & 3                   & $[[243,197,\geq 22]]_{81}$  & 2 & \zcref{QGRL4} & $[[243,197,\geq 10]]_{81}$~\cite{table4}  & 14  \\
    $[[63,53,\geq 4]]_9$ & 2                & \zcref{QGRL3}   & --                                & --                  & $[[507,489,\geq 8]]_{169}$  & 2 & \zcref{QGRL4} & $[[507,477,\geq 8]]_{169}$~\cite{table4}  & 8   \\
    $[[63,49,\geq 6]]_9$ & 2                & \zcref{QGRL3}   & --                                & --                  & $[[507,485,\geq 10]]_{169}$ & 2 & \zcref{QGRL4} & $[[507,469,\geq 10]]_{169}$~\cite{table4} & 10  \\
    $[[63,47,\geq 7]]_9$ & 2                & \zcref{QGRL3}   & --                                & --                  & $[[507,477,\geq 14]]_{169}$ & 2 & \zcref{QGRL4} & $[[507,477,\geq 8]]_{169}$~\cite{table4}  & 8   \\
    $[[63,51,\geq 5]]_9$ & 2                & \zcref{QGRL3}   & $[[71,51,5]]_9$~\cite{BE:tables}  & 6                   & $[[507,469,\geq 18]]_{169}$ & 2 & \zcref{QGRL4} & $[[507,469,\geq 10]]_{169}$~\cite{table4} & 10  \\
    $[[66,56,5]]_9$      & 1                & \zcref{QGRL2}   & --                                & --                  & $[[867,853,\geq 6]]_{289}$  & 2 & \zcref{QGRL4} & $[[867,845,\geq 6]]_{289}$~\cite{table4}  & 6   \\
    $[[66,60,3]]_9$      & 1                & \zcref{QGRL2}   & $[[66,58,3]]_9$~\cite{BE:tables}  & 2                   & $[[867,849,\geq 8]]_{289}$  & 2 & \zcref{QGRL4} & $[[867,837,\geq 8]]_{289}$~\cite{table4}  & 8   \\
    $[[66,58,4]]_9$      & 1                & \zcref{QGRL2}   & $[[66,58,3]]_9$~\cite{BE:tables}  & 2                   & $[[867,845,\geq 10]]_{289}$ & 2 & \zcref{QGRL4} & $[[867,845,\geq 6]]_{289}$~\cite{table4}  & 6   \\
    $[[67,59,\geq 3]]_9$ & 2                & \zcref{QGRL4}   & $[[69,59,3]]_9$~\cite{BE:tables}  & 3                   & $[[867,837,\geq 14]]_{289}$ & 2 & \zcref{QGRL4} & $[[867,837,\geq 8]]_{289}$~\cite{table4}  & 8   \\
    $[[67,57,\geq 4]]_9$ & 2                & \zcref{QGRL4}   & $[[71,57,4]]_9$~\cite{BE:tables}  & 4                   & $[[867,829,\geq 18]]_{289}$ & 2 & \zcref{QGRL4} & $[[867,829,\geq 10]]_{289}$~\cite{table4} & 10  \\
    \bottomrule
  \end{tabular}
\end{table}

\section{CONCLUDING REMARKS}
In this paper, we established necessary and sufficient NMDS conditions for GRL codes with \(s=2\) and \(s=3\). In particular, the \(s=2\) criterion unifies known NMDS results for Roth-Lempel codes. Further, we constructed four new classes of Hermitian self-orthogonal codes from GRL codes with \(s=2\) and \(s=3\). Building on characterization of NMDS property, the resulting Hermitian self-orthogonal GRL codes with \( s=2\) are always NMDS codes and provide more flexible lengths and dimensions than known Hermitian self-orthogonal NMDS constructions. Finally, by applying CSS construction, we obtained quantum GRL codes and showed that those derived from the \(s=2\) families are QNMDS codes. Determining whether the Hermitian self-orthogonal GRL families with \( s=3\) can yield quantum NMDS codes remains an open problem for future work.

\bibliographystyle{IEEEtranS}
\bibliography{ref.bib}

\end{document}